  \providecommand\BibTeX{{%
    \normalfont B\kern-0.5em{\scshape i\kern-0.25em b}\kern-0.8em\TeX}}}
\newtheorem{theorem}{Theorem}[section]
\newtheorem{fact}[theorem]{Fact}
\newtheorem{claim}{Claim}[section]
\theoremstyle{acmdefinition}
\newtheorem{remark}[theorem]{Remark}
\newtheorem{prop}[theorem]{Proposition}
\newcommand{\bool}{{\{0,1\}}}
\newcommand{\sbool}{\{-1,1\}}
\DeclareMathOperator{\ent}{\ensuremath \mathbb{H}}
\DeclareMathOperator{\bor}{\ensuremath \mathsf{OR}}
\DeclareMathOperator{\band}{\ensuremath \mathsf{AND}}
\DeclareMathOperator{\E}{\ensuremath \mathbb{E}}
\newcommand{\fc}[2]{\ensuremath {\widehat{#1}(#2)}}
\newcommand{\fcsq}[2]{\ensuremath {{\widehat{#1}}(#2)^2}}
\newcommand{\reals}{\ensuremath {\mathbb{R}}}
\newcommand{\s}{\ensuremath {\mathsf{s}}}
\newcommand{\as}{\ensuremath {\mathsf{as}}}
\renewcommand{\deg}{\ensuremath {\mathsf{deg}}}
\newcommand{\certificate}{\ensuremath {\mathsf{C}}}
\newcommand{\ac}{\ensuremath {\mathsf{aC}}}
\newcommand{\aUC}{\ensuremath {\mathsf{aUC}}}
\newcommand{\UC}{\ensuremath {\mathsf{UC}}}
\newcommand{\dnf}{\ensuremath {\mathsf{DNF}}}
\newcommand{\Inf}{\ensuremath {\mathsf{Inf}}}
\newcommand{\Ent}{\ensuremath {\mathsf{Ent}}}
\newcommand{\supp}{\ensuremath {\mathsf{supp}}}
\newcommand{\Var}{\ensuremath {\mathsf{Var}}}
\newcommand{\pmset}[1]{\{-1,1\}^{#1}} 
\begin{document}

\title{Improved bounds on Fourier entropy and Min-entropy}

\author{Srinivasan Arunachalam}
\affiliation{%
  \institution{IBM Research}
  \city{New York}
  \country{USA}
}
\email{Srinivasan.Arunachalam@ibm.com}

\author{Sourav Chakraborty}
\affiliation{%
  \institution{Indian Statistical Institute}
  \city{Kolkata}
  \country{India}
}
\email{sourav@isical.ac.in}

\author{Michal Kouck\'{y}}
\affiliation{%
  \institution{Computer Science Institute of Charles University}
  \city{Prague}
  \country{Czech Republic}
}
\email{koucky@iuuk.mff.cuni.cz}

\author{Nitin Saurabh}
\affiliation{%
  \institution{Technion - IIT}
  \city{Haifa}
  \country{Israel}
}
\email{nitinsau@cs.technion.ac.il}

\author{Ronald de Wolf}
\affiliation{%
  \institution{QuSoft, CWI and University of Amsterdam}
  \city{Amsterdam}
  \country{the Netherlands}
}
\email{rdewolf@cwi.nl}

\renewcommand{\shortauthors}{Arunachalam et al.}

\begin{abstract}
 Given a Boolean function $f\colon\pmset{n}\rightarrow \pmset{}$,
  define the \emph{Fourier distribution} to be the distribution on
  subsets of $[n]$, where each $S\subseteq [n]$ is sampled with
  probability $\fcsq{f}{S}$.
  The Fourier Entropy-Influence (FEI) conjecture of
  Friedgut and Kalai~\cite{FK96} seeks to relate
  two fundamental measures associated with the Fourier distribution:
  does there exist a universal constant $C>0$ such that
  $\ent(\hat{f}^2)\leq C\cdot \Inf(f)$,
  where $\ent(\hat{f}^2)$ is the Shannon entropy of
  the Fourier distribution of $f$ and
  $\Inf(f)$ is the total influence of $f$?

  In this paper we present three new contributions towards the FEI conjecture:
  \begin{enumerate}
  \item Our first contribution shows that
    $\ent(\hat{f}^2)\leq 2\cdot \aUC^\oplus(f)$,
    where $\aUC^\oplus(f)$ is the average unambiguous
    parity-certificate complexity of $f$.
    This improves upon several bounds shown by
    Chakraborty et al.~\cite{CKLS16}.
    We further improve this bound for unambiguous $\dnf$s.
    We also discuss how our work makes Mansour's conjecture
    for $\dnf$s a natural next step towards resolution of the FEI conjecture. 

  \item We next consider the weaker
    \emph{Fourier Min-entropy-Influence} (FMEI) conjecture posed by
    O'Donnell and others~\cite{OWZ11,ODonnell-book} which asks if
    $\ent_{\infty}(\hat{f}^2) \leq C\cdot \Inf(f)$, where
    $\ent_{\infty}(\hat{f}^2)$ is the min-entropy of the Fourier distribution.
    We show $\ent_{\infty}(\hat{f}^2)\leq  2\cdot\certificate_{\min}^\oplus(f)$,
    where $\certificate_{\min}^\oplus(f)$ is the minimum
    parity-certificate complexity of $f$. We also show that
    for all $\varepsilon\geq 0$, we have
    $\ent_{\infty}(\hat{f}^2)\leq 2\log (\|\widehat{f}\|_{1,\varepsilon}/(1-\varepsilon))$, where $\|\widehat{f}\|_{1,\varepsilon}$ is
    the approximate spectral norm of $f$. As a corollary,
    we verify the FMEI conjecture for the class of read-$k$ $\dnf$s
    (for constant~$k$). 

  \item Our third contribution is to better understand implications
    of the FEI conjecture for the structure of polynomials that
    $1/3$-approximate a Boolean function on the Boolean cube.
    We pose a conjecture$\colon$ no \emph{flat polynomial}
    (whose non-zero Fourier coefficients have the same magnitude)
    of degree $d$ and sparsity $2^{\omega(d)}$ can $1/3$-approximate
    a Boolean function. This conjecture is known to be true assuming FEI
    and we prove the conjecture unconditionally
    (i.e., without assuming the FEI conjecture) for a class of polynomials.
    We  discuss an intriguing connection between our conjecture and
    the constant for the Bohnenblust-Hille inequality,
    which has been extensively studied in functional analysis.
  \end{enumerate}
\end{abstract}

\begin{CCSXML}
<ccs2012>
<concept>
<concept_id>10010147.10010148.10010164.10010167</concept_id>
<concept_desc>Computing methodologies~Representation of Boolean functions</concept_desc>
<concept_significance>500</concept_significance>
</concept>
<concept>
<concept_id>10002950.10003712</concept_id>
<concept_desc>Mathematics of computing~Information theory</concept_desc>
<concept_significance>500</concept_significance>
</concept>
<concept>
<concept_id>10003752.10003777.10003782</concept_id>
<concept_desc>Theory of computation~Oracles and decision trees</concept_desc>
<concept_significance>300</concept_significance>
</concept>
</ccs2012>
\end{CCSXML}

\ccsdesc[500]{Computing methodologies~Representation of Boolean functions}
\ccsdesc[500]{Mathematics of computing~Information theory}
\ccsdesc[300]{Theory of computation~Oracles and decision trees}

\keywords{Fourier analysis of Boolean functions, FEI conjecture, query complexity, polynomial approximation, approximate degree, certificate complexity, DNFs, Mansour's conjecture, entropy}

\maketitle

\section{Introduction}
\label{sec:intro}
Boolean functions $f\colon\pmset{n}\rightarrow \pmset{}$ naturally arise
in many areas of theoretical computer science and mathematics such as
learning theory, complexity theory, quantum computing, inapproximability,
graph theory, extremal combinatorics, etc. Fourier analysis over
the Boolean cube $\pmset{n}$ is a powerful technique that has been used
often to analyze problems in these areas. For a survey on the subject,
see~\cite{ODonnell-book,wolf:fouriersurvey}. One of the most important
and longstanding open problems in this field is the
\emph{Fourier Entropy-Influence} (FEI) conjecture,
first formulated by Ehud Friedgut and Gil Kalai in 1996~\cite{FK96}.
The FEI conjecture seeks to relate the following
two fundamental properties of a Boolean function $f\colon$
the \emph{Fourier entropy} of $f$ and
the \emph{total influence} of $f$, which we define now.

For a Boolean function $f\colon\pmset{n}\rightarrow \pmset{}$,
Parseval's identity relates the \emph{Fourier coefficients}
$\left\{\fc{f}{S}\right\}_S$ and the values $\left\{f(x)\right\}_x$ by
\begin{align*}
  \sum_{S\subseteq [n]} \fcsq{f}{S} = \E_x\left[f(x)^2\right] = 1,
\end{align*}
where the expectation is taken uniformly over the Boolean cube $\pmset{n}$.
An immediate implication of this equality is that
the squared Fourier coefficients
$\left\{\fcsq{f}{S}:S\subseteq [n]\right\}$ can be viewed as
a \emph{probability distribution} over subsets $S\subseteq [n]$,
which we often refer to as the \emph{Fourier distribution}.
The \emph{Fourier entropy of $f$} (denoted $\ent(\hat{f}^2)$)
is then defined as the Shannon entropy of the Fourier distribution, i.e.,
\begin{align*}
  \ent(\hat{f}^2) := \sum_{S\subseteq [n]} \fcsq{f}{S} \log\frac{1}{\fcsq{f}{S}}.
\end{align*}
The \emph{total influence of $f$} (denoted $\Inf(f)$)
measures the \emph{expected size} of a subset $S\subseteq [n]$,
where the expectation is taken according to the Fourier distribution, i.e.,
\begin{align*}
  \Inf(f)=\sum_{S\subseteq [n]} |S|~ \fcsq{f}{S}.
\end{align*}
Combinatorially $\Inf(f)$ is the same as the average sensitivity
$\as(f)$ of $f$. In particular, for $i \in [n]$, define $\Inf_i(f)$
to be the probability that on a uniformly random input
flipping the $i$-th bit changes the function value.
Then, $\Inf(f)$ is defined to be $\sum_{i=1}^n \Inf_i(f)$.

Intuitively, the Fourier entropy measures how ``spread out''
the Fourier distribution is over the $2^n$ subsets of $[n]$ and
the total influence measures the concentration of
the Fourier distribution on the ``high'' level coefficients.
Informally, the FEI conjecture states that Boolean functions
whose Fourier distribution is well ``spread out''
(i.e., functions with large Fourier entropy)
must have significant Fourier weight on the high-degree monomials
(i.e., their total influence is large).
Formally, the FEI conjecture can be stated as follows$\colon$
\begin{conjecture}[FEI Conjecture]
  \label{con:FEIC}
  There exists a universal constant $C>0$ such that
  for every Boolean function  $f\colon\pmset{n} \to \pmset{}$,   
  \begin{align}
    \label{eq:FEI-inequality}
    \ent(\hat{f}^2) \leq C\cdot \Inf(f).
  \end{align}
\end{conjecture}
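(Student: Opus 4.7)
The FEI conjecture has resisted substantial effort since~\cite{FK96}, so any plan I can offer is a plan for a plausible attack rather than an outline of a complete argument. My first move would be to interpose an intermediate complexity measure $M(f)$ and try to establish the two-step chain
\[
\ent(\hat{f}^2)\le C_1\cdot M(f)\qquad\text{and}\qquad M(f)\le C_2\cdot \Inf(f).
\]
The natural candidates for $M$ are variants of (parity) certificate complexity or spectral norm. The entropy-to-$M$ step can be attacked by an encoding argument: one shows that drawing $S$ from the Fourier distribution $\fcsq{f}{S}$ can be simulated, up to a constant-factor loss in codelength, by first sampling a random input $x$, then reading an (unambiguous) parity certificate for $f$ at $x$, and finally specifying $S$ relative to the set of parities defining that certificate. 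Shannon's source-coding inequality then converts expected certificate length into Fourier entropy, and a clean version of this strategy already appears as the $\ent(\hat{f}^2)\le 2\aUC^\oplus(f)$ bound claimed in the second bullet of the abstract.

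For the second step $M(f)\le C_2\cdot \Inf(f)$ I would try a random-restriction or martingale argument: expose coordinates of $x$ one at a time along a uniformly random decision-tree path and charge the expected reduction in certificate length to the influences of the revealed variables. A complementary attempt would be a log-Sobolev or hypercontractive bound tailored to the Fourier distribution itself: the classical entropy-energy inequality bounds the entropy of a nonnegative function on the cube by its Dirichlet form, and one would want a ``dualised'' version on distributions over subsets of $[n]$, with $\Inf(f)=\sum_S|S|\fcsq{f}{S}$ playing the role of a natural energy functional.

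I fully expect the second step to be the obstacle. All known bounds of the form ``certificate complexity $\le$ poly(influence)'' either lose polynomial factors in $n$ or apply only to structured function classes such as read-$k$ $\dnf$s, width-bounded $\dnf$s, or symmetric functions; a universal constant-factor bound for arbitrary $f$ is not known, and indeed such a bound is very close to FEI itself. My realistic plan is therefore not to attempt the full conjecture directly but to (i) sharpen the entropy-to-$M$ bound to the smallest measure $M$ for which one can still hope to prove $M(f)\le O(\Inf(f))$, perhaps an averaged version of $\min\{\certificate^0(f),\certificate^1(f)\}$ as suggested by the authors' Mansour-conjecture observation, and (ii) close the loop inside structured classes (read-$k$ $\dnf$s, monotone formulas, small-depth circuits), hoping to identify a general enough structural property to extend the argument to arbitrary Boolean functions. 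A proof that robustly handles balanced functions with no natural decision-tree structure would be the real conceptual breakthrough, and I would not claim a path to it.
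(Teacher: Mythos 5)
You were asked to prove the FEI conjecture itself, which is a \emph{conjecture}, not a theorem: the paper does not prove it and explicitly leaves it open. Your proposal correctly recognises this and declines to claim a proof, so there is no hidden gap for me to expose — the gap is the conjecture.

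That said, your plan tracks the paper's actual partial progress closely, and it is worth saying where it matches and where it differs. The first leg, $\ent(\hat f^2)\le C_1\cdot M(f)$ with $M=\aUC^\oplus$, is precisely Theorem~\ref{thm:subspace-partition-entropy-intro}. Your one-shot encoding/source-coding sketch for that leg is morally in the right family but is not how the paper executes it: they pass to the $M$-fold tensor power $f^M$, put the natural distribution on the induced unambiguous (parity) certificate $\mathcal{C}^M$, and apply the Asymptotic Equipartition Property to bound both the number of typical certificates and the number of Fourier coefficients each typical certificate can touch; the single-shot Shannon source-coding argument is what Wan et al.~used for bounded-read decision trees, and the AEP tensorization is exactly what removes the bounded-read hypothesis. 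You are also right that the second leg, $M(f)\le C_2\cdot \Inf(f)$, is the real obstacle: for the natural candidates $\aUC$, $\aUC^\oplus$, $\min\{\certificate^0(f),\certificate^1(f)\}$ no such universal-constant bound is known for arbitrary $f$, and the paper never attempts it. Instead they sidestep it in the two ways you essentially anticipate: for read-$k$ DNFs they use the KKL theorem to prove $\Inf(f)\geq c\cdot\Var(f)\cdot(\certificate_{\min}(f)-1-\log k)$ and thereby establish FMEI, and they observe (Conjecture~\ref{con:ent-vs-01-certificate} and Proposition~\ref{prop:ent-vs-cert-vs-Mansour}) that the weaker target $\ent(\hat f^2)\le O(\min\{\certificate^0(f),\certificate^1(f)\})$ would already suffice for Mansour's conjecture. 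Your self-assessment is accurate and your calibration about where the difficulty lies matches the paper's.
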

The original motivation of Friedgut and Kalai for the FEI conjecture
came from studying threshold phenomena of monotone graph properties
in random graphs~\cite{FK96}. For example, resolving the FEI conjecture
would imply that every threshold interval of a monotone graph property
on $n$ vertices is of length at most $c(\log n)^{-2}$
(for some universal constant $c>0$). 
Bourgain and Kalai~\cite{BK97} proved a bound of
$c_\varepsilon(\log n)^{-2+\varepsilon}$  for every~$\varepsilon >0$.
A very recent work of Kelman et al.~\cite{KKLMS19} improved this bound
to $O((\log\log n)^2/(\log n)^2)$ by establishing nearly optimal
bounds on Fourier Min-entropy. In general, it implies sharper bounds
on the threshold interval of transitively symmetric functions.  

Besides this application, the FEI conjecture is known to imply
the famous Kahn-Kalai-Linial theorem~\cite{KKL88}
(otherwise referred to as the KKL theorem).
The KKL theorem was one of the first major applications of
Fourier analysis to understanding properties of Boolean functions
and has since found many application in various areas of
theoretical computer science.  
\begin{theorem}[KKL theorem]
  For every $f\colon\pmset{n}\rightarrow \pmset{}$,
  there exists an $i\in [n]$ such that
  \(\Inf_i(f)\geq \Var(f)\cdot \Omega\Big(\frac{\log n}{n} \Big)\),
  where $\Var(f) := 1 - \E_x[f(x)]^2 = 1 - \fcsq{f}{\emptyset}$.
\end{theorem}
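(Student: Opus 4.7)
The plan is to follow the classical Kahn--Kalai--Linial argument via the Bonami--Beckner hypercontractive inequality applied to the discrete derivatives of $f$. For each $i \in [n]$ let $D_i f : \pmset{n} \to \{-1,0,1\}$ be the discrete derivative, whose Fourier expansion is $D_i f = \sum_{S \ni i} \fc{f}{S}\, \chi_{S \setminus \{i\}}$ and which satisfies $\|D_i f\|_2^2 = \Inf_i(f)$. The central observation is that because $D_i f$ is $\{-1,0,1\}$-valued, for every $p \geq 1$ one has $\|D_i f\|_p^p = \Pr_x[D_i f(x) \neq 0] = \Inf_i(f)$, so $\|D_i f\|_p = \Inf_i(f)^{1/p}$. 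Plugging this into the hypercontractive inequality $\|T_\rho g\|_2 \leq \|g\|_{1+\rho^2}$ with $g = D_i f$ gives
\begin{equation*}
\sum_{S \ni i} \rho^{2(|S|-1)}\, \fcsq{f}{S} \;=\; \|T_\rho D_i f\|_2^2 \;\leq\; \Inf_i(f)^{2/(1+\rho^2)}.
\end{equation*}

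Next I would sum this inequality over $i$ and rewrite both sides in terms of Fourier coefficients, obtaining
\begin{equation*}
\sum_{S \subseteq [n]} |S|\, \rho^{2(|S|-1)}\, \fcsq{f}{S} \;\leq\; \sum_{i \in [n]} \Inf_i(f)^{2/(1+\rho^2)}.
\end{equation*}
Let $\tau = \max_i \Inf_i(f)$. Factoring out the maximum on the right (i.e.\ writing $\Inf_i(f)^{2/(1+\rho^2)} = \Inf_i(f)\cdot \Inf_i(f)^{(1-\rho^2)/(1+\rho^2)}$) bounds it by $\tau^{(1-\rho^2)/(1+\rho^2)} \cdot \Inf(f)$. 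To lower-bound the left-hand side in terms of $\Var(f)$, I would truncate at a cutoff $k$: the high-degree tail satisfies $\sum_{|S|>k} \fcsq{f}{S} \leq \Inf(f)/(k+1)$ by Markov's inequality on $|S|$ under the Fourier distribution, so the remaining mass on levels $1,\dots,k$ is at least $\Var(f) - \Inf(f)/(k+1)$, each term picking up a factor of at least $\rho^{2(k-1)}$.

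The final step is a parameter optimization. Taking $\rho = 1/\sqrt{2}$ makes $(1-\rho^2)/(1+\rho^2) = 1/3$, and choosing $k$ of order $\Inf(f)/\Var(f)$ keeps the tail below $\Var(f)/2$, so the two bounds combine into $\Var(f) \leq O(1) \cdot 2^{k} \cdot \tau^{1/3}\, \Inf(f)$. Taking logarithms and using the trivial bound $\Inf(f) \leq n \tau$ rearranges to $\tau \geq \Omega\bigl(\Var(f) \log n / n\bigr)$, which is exactly the KKL statement for the coordinate $i$ achieving the maximum. The main obstacle is really hidden inside the hypercontractive step: without it there is no way to convert the $L^2$ quantity $\Inf_i(f)$ into a bound involving an exponent strictly less than one, and it is precisely this sub-linear dependence on $\tau$ that produces the logarithmic gain. (I note in passing that, granting the FEI conjecture, a much shorter derivation is available: since $\fcsq{f}{S} \leq \Inf_i(f) \leq \tau$ for every $S \neq \emptyset$ and any $i \in S$, one has $\ent(\hat{f}^2) \geq \Var(f) \log(1/\tau)$, which combined with $\ent(\hat{f}^2) \leq C\cdot\Inf(f) \leq C n \tau$ immediately yields KKL.)
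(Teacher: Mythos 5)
The paper itself does not supply a proof of the KKL theorem: it is stated in the introduction as a known result with a citation to~\cite{KKL88}, and the authors explicitly remark in Section~3 that they are not aware of any proof of KKL that avoids hypercontractivity or the log-Sobolev inequality. So there is no in-paper proof to compare yours against; your job was to supply one, and you have essentially reproduced the standard hypercontractive argument. The core calculations are all correct: $D_i f$ is $\{-1,0,1\}$-valued with $\|D_i f\|_p^p = \Inf_i(f)$ for every $p$, applying $\|T_\rho g\|_2 \leq \|g\|_{1+\rho^2}$ to $g = D_i f$ yields $\sum_{S \ni i} \rho^{2(|S|-1)}\fcsq{f}{S} \leq \Inf_i(f)^{2/(1+\rho^2)}$, summing over $i$ turns the left side into $\sum_S |S|\rho^{2(|S|-1)}\fcsq{f}{S}$, and pulling $\tau^{(1-\rho^2)/(1+\rho^2)}$ out of the right side while truncating the left at level $k$ with $k \approx 2\Inf(f)/\Var(f)$ is the correct balancing. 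You correctly arrive at $\Var(f) \leq O(1)\cdot 2^{k}\,\tau^{1/3}\,\Inf(f)$ with $\rho = 1/\sqrt{2}$.

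The one step that is glossed over is the final ``taking logarithms and rearranging.'' As written, this is not a straightforward rearrangement, since $k$ sits in the exponent and both $k$ and the right-hand side involve $\Inf(f)$. What one actually needs is a short dichotomy on $M := \Inf(f)/\Var(f)$: if $M \gtrsim \log n$ then $\tau \geq \Inf(f)/n = M\Var(f)/n \gtrsim \Var(f)\log n/n$ directly; if $M \lesssim \log n$ then $2^k \leq n^{o(1)}$ and the displayed inequality gives a $\Var(f)$-\emph{independent} bound of the form $\tau \gtrsim n^{-(1-\Omega(1))}$, which dominates $\Var(f)\log n/n$ for all large $n$ since $\Var(f) \leq 1$ (equivalently, assume $\tau < c\,\Var(f)\log n/n$ and derive a contradiction). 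Without this case split, the ``trivial bound $\Inf(f) \leq n\tau$'' cannot simply be substituted and logarithms taken — you should make the dichotomy explicit. Finally, your parenthetical remark is essentially the derivation the paper itself sketches at the start of Section~3, except that the paper only assumes the \emph{weaker} FMEI conjecture (existence of one Fourier coefficient $\fcsq{f}{T} \geq 2^{-C\Inf(f)}$ with $T \neq \emptyset$) rather than full FEI; that single large coefficient already gives $\tau \geq 2^{-Cn\tau}$ for $i \in T$, which is enough.
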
 
We discuss the implication of the FEI conjecture to the KKL theorem in more
detail in Section~\ref{sec:minentropy}.
Another motivation to study
the FEI conjecture is that a positive answer to this conjecture would resolve
the notoriously hard conjecture of Mansour~\cite{Mansour95} from 1995.
\begin{conjecture}[Mansour's conjecture]
  \label{con:Mansour-intro}
  Suppose $f\colon\pmset{n}\rightarrow \pmset{}$ is computed
  by a $t$-term $\dnf$.\footnote{A $t$-term $\dnf$ is a disjunction of at most $t$ conjunctions of variables and their negations.}
  Then for every $\varepsilon>0$,
  there exists a family $\mathcal T$ of subsets of $[n]$
  such that $|\mathcal T|\leq t^{O(1/\varepsilon)}$
  (i.e., size of $\mathcal{T}$ is polynomial in $t$)
  and \(\sum_{T\in \mathcal T} \fcsq{f}{T}\geq 1-\varepsilon\). 
\end{conjecture}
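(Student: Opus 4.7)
The plan is to attack Mansour's conjecture directly via H{\aa}stad's switching lemma combined with a Fourier extraction argument that lifts concentration from the restricted function to $f$ itself.

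First, I would preprocess $f$ by width truncation: each term of width exceeding $w = O(\log(t/\varepsilon))$ is satisfied with probability below $\varepsilon/(3t)$ under the uniform measure on $\pmset{n}$, so deleting such terms yields a $t$-term DNF $f'$ of width at most $w$ with $\|f - f'\|_2^2 \leq \varepsilon/3$, whence $\sum_T (\widehat{f}(T) - \widehat{f'}(T))^2 \leq \varepsilon/3$. It therefore suffices to produce a sparse concentrating set for $f'$.

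Second, I would apply a random restriction $\rho$ keeping each variable free with probability $p = \Theta(1/w)$. By the switching lemma, $f'|_\rho$ is computed by a decision tree of depth $d = O(\log(1/\varepsilon))$ except with probability $\varepsilon/3$ over $\rho$, so in the good event its Fourier support sits inside $\binom{J}{\leq d}$, where $J$ is the free-variable set. The key identity
\[
\E_\rho\big[\widehat{f'|_\rho}(S)^2 \,\big|\, J\big] \;=\; \sum_{T:\,T \cap J = S} \widehat{f'}(T)^2
\]
lets me lift the depth-$d$ concentration on $f'|_\rho$ to a degree cutoff on $f'$: summing over $|S| > d$ and comparing with the switching-lemma tail shows that any $T$ with $|T| > D := O(d/p) = O(\log(1/\varepsilon)\log(t/\varepsilon))$ contributes $\widehat{f'}(T)^2$ to the left side with constant probability over $J$, yielding $\sum_{|T| > D} \widehat{f'}(T)^2 \leq O(\varepsilon)$. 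Within $\binom{[n]}{\leq D}$, I would then enumerate heavy Fourier coefficients of $f'$ via Mansour's level-$k$ inequality for DNFs (which bounds the degree-$k$ Fourier mass by $O(\log t)^k / k!$), taking $\mathcal T$ to be the set of $T$ with $\widehat{f'}(T)^2 \geq \tau$ for an appropriately chosen threshold $\tau$.

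The hard part will be keeping $|\mathcal T|$ polynomial in $t$ rather than quasipolynomial. With the degree cutoff $D = \Theta(\log(1/\varepsilon)\log(t/\varepsilon))$ produced by the lifting step and the level-$k$ inequality as stated, the count of heavy coefficients at the highest levels blows up to roughly $(\log t)^{\Theta(\log^2(t/\varepsilon))}$, falling short of the target $t^{O(1/\varepsilon)}$. Closing the gap requires either a sharper tail bound that lowers $D$ to $O(\log(t/\varepsilon))$, or a tighter level-$k$ inequality specialized to narrow-term DNFs; neither appears to be accessible by purely switching-lemma-based methods, which is precisely the barrier that has kept Mansour's conjecture open since 1995. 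My plan therefore stops at this bridge, and a complete execution would plausibly need to route through the entropy–certificate reductions developed earlier in this paper, which trade Mansour's conjecture for a cleaner entropy inequality on certificate complexity.
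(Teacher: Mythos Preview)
The statement you are attempting to prove is a \emph{conjecture}, not a theorem: the paper does not prove Mansour's conjecture, and indeed no one has. There is no ``paper's own proof'' to compare against. What the paper actually does in this direction (Proposition~\ref{prop:ent-vs-cert-vs-Mansour}) is show that a different open statement, namely $\ent(\hat{f}^2) \leq \lambda \cdot \certificate^1(f)$ for a universal constant $\lambda$, would imply Mansour's conjecture; this is a reduction, not a resolution.

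Your sketch is essentially the classical switching-lemma route of Linial--Mansour--Nisan and Mansour, which is well known to give only quasipolynomial sparsity $t^{O(\log\log t \cdot \log(1/\varepsilon))}$ rather than the conjectured $t^{O(1/\varepsilon)}$. You identify the barrier accurately: the degree cutoff $D$ one extracts from the random-restriction lifting is $\Theta(\log(t/\varepsilon)\log(1/\varepsilon))$, and combined with the level-$k$ $L_1$ bound this overshoots polynomial sparsity. This is not a gap in your execution so much as a restatement of why the conjecture is open. Your final paragraph correctly concedes this and gestures at the paper's entropy--certificate reduction, but note that the paper does not claim that reduction closes the gap either; it merely reformulates the target.
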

A positive answer to Mansour's conjecture, along with the query algorithm
of Gopalan et al.~\cite{GKK08}, would resolve a long-standing open
question in computational learning theory of agnostically learning $\dnf$s
under the uniform distribution in polynomial time (up to any constant accuracy).
We discuss this in more detail in Section~\ref{sec:Mansour}.

More generally, the FEI conjecture implies that every Boolean function 
can be approximated (in $\ell_2$-norm) by
\emph{sparse} polynomials over $\pmset{}$.
In particular, for a Boolean function $f$ and $\varepsilon > 0$,
the FEI conjecture implies the existence of a polynomial $p$
with $2^{O(\Inf(f)/\varepsilon)}$ monomials such that
\(\E_x\left[(f(x)-p(x))^2\right] \leq \varepsilon\).
The current best known bound in this direction is
$2^{O(\Inf(f)\log(\Inf(f))/\varepsilon)}$~\cite{KKLMS19}. 

Given the inherent difficulty in answering the FEI conjecture for arbitrary
Boolean functions, there have been many recent works studying the conjecture
for specific classes of Boolean functions. We give a brief overview of these
results in the next section. Alongside the pursuit of resolving
the FEI conjecture, O'Donnell and others~\cite{OWZ11,ODonnell-book}
have asked if a weaker question than the FEI conjecture,
the \emph{Fourier Min-entropy-Influence}
(FMEI) conjecture can be resolved. The FMEI conjecture asks if
the entropy-influence inequality in Eq.~\eqref{eq:FEI-inequality} holds  
when the entropy of the Fourier distribution is replaced by
the \emph{min-entropy} of the Fourier distribution
(denoted $\ent_{\infty}(\hat{f}^2)$).
The min-entropy of $\left\{\fcsq{f}{S}\right\}_S$ is defined as 
\begin{align*}
  \ent_{\infty}(\hat{f}^2) := \min_{\substack{S\subseteq [n]:\\ \fc{f}{S}\neq 0}}
  \Big\{ \log \frac{1}{\fcsq{f}{S}} \Big\}
\end{align*}
and thus it is easily seen that
\(\ent_{\infty}(\hat{f}^2)\leq \ent(\hat{f}^2)\).
In fact, $\ent_{\infty}(\hat{f}^2)$ could be much smaller compared
to $\ent(\hat{f}^2)$. For instance, consider the function
$f(x) := x_1 \vee  \mathsf{IP}(x_1,\ldots ,x_n)$; 
then $\ent_{\infty}(\hat{f}^2) = O(1)$ whereas
$\ent(\hat{f}^2) = \Omega(n)$.
($\mathsf{IP}$ is the inner-product-mod-2 function.)
So the FMEI conjecture could be strictly weaker than
the FEI conjecture, making it a natural candidate to resolve first.
\begin{conjecture}[FMEI Conjecture]
  \label{con:FMEIC}
  There exists a universal constant $C>0$ such that for every Boolean function
  \(f\colon\pmset{n} \rightarrow \pmset{}\), we have
  \begin{align*}\ent_{\infty}(\hat{f}^2)\leq  C \cdot \Inf(f).\end{align*}
\end{conjecture}
Another way to formulate the FMEI conjecture is,
suppose $f\colon\pmset{n}\to \pmset{}$,
then does there exist a Fourier coefficient $\fc{f}{S}$ 
such that \(|\fc{f}{S}|\geq 2^{-O(\Inf(f))}\)?
By the \emph{granularity} of Fourier coefficients
it is well-known that \emph{every} Fourier coefficient of
a Boolean function $f$ is an integral multiple of $2^{-\deg(f)}$, 
see~\cite[Exercise 1.11]{ODonnell-book} for a proof of this.
(Here the $\deg(f)$ refers to the degree of the unique
multilinear polynomial that represents $f$.)
The FMEI conjecture asks if we can prove a lower bound of
$2^{-O(\Inf(f))}$ on \emph{any one} Fourier coefficient,
and even this remains open. Proving the FMEI conjecture seems
to require proving interesting structural properties of Boolean functions.
It was observed by~\cite{OWZ11} that 
the FMEI conjecture suffices to imply the KKL theorem for balanced functions 
(see also Section~\ref{sec:minentropy}).
In fact, it suffices\footnote{With a minor strengthening: the large Fourier coefficient is required to be associated with a Fourier character of size $O(\Inf(f))$.} for most of the implications that follows
from the FEI conjecture (see \cite{KKLMS19}).

Understanding the min-entropy of a Fourier distribution is important
in its own right too. It was observed by Akavia et al.~\cite{ABGKR14} that
for a circuit class $\mathcal{C}$, tighter relations between min-entropy
of $f \in \mathcal{C}$ and $f_A$ defined as $f_A(x) := f(Ax)$,
for an arbitrary linear transformation $A$,
could enable us to translate lower bounds against the class
$\mathcal{C}$ to the class $\mathcal{C}\circ\mathsf{MOD_2}$. 
In particular, they conjectured that min-entropy of $f_A$ is only
polynomially larger than $f$ when
\(f \in \mathsf{AC}^0[\mathsf{poly}(n),O(1)]\).
(\textsf{AC}$^0[s,d]$ is the class of unbounded fan-in circuits
of size at most $s$ and depth at most $d$.)
It is well-known that when $f \in \mathsf{AC}^0[s,d]$,
$\ent_\infty(\hat{f}^2)$ is at most
\(O((\log s)^{d-1}\cdot \log\log s)\)~\cite{LMN93,Bop97,tal:fourierconcentration}.
Depending on the tightness of the relationship between
$\ent_\infty(\hat{f}^2)$ and $\ent_\infty(\widehat{f_A}^2)$,
one could obtain near-optimal lower bound on the size of
$\mathsf{AC^0}[s,d]\mathsf{\circ {MOD}_2}$ circuits computing
\textsf{IP} (inner-product-mod-2).
This problem has garnered a lot of attention in recent times
for a variety of reasons~\cite{SV10,SV12,ABGKR14,CS16,CGJWX18}.
The current best known lower bound for \textsf{IP} against
$\mathsf{AC^0}[s,d]\mathsf{\circ {MOD}_2}$ is quadratic when $d = 4$,
and only super-linear for all $d = O(1)$~\cite{CGJWX18}.  

In the remaining part of this introduction,
we first give an overview of prior work on the FEI conjecture
in Section~\ref{sec:previousworks} and then describe our contributions
in the following sections. 

\subsection{Prior work}
\label{sec:previousworks}
After Friedgut and Kalai~\cite{FK96} posed the FEI conjecture in 1996,
there was not much work done towards resolving it,
until the work of Klivans et al.~\cite{KLW10} in 2010.
They showed that the FEI conjecture holds true for
random $\dnf$ formulas. Since then, there have been many significant steps
taken in the direction of resolving the FEI conjecture.  We review some recent
works here, referring the interested reader to the blog post of
Kalai~\cite{Kalai-blog} for additional discussions on the FEI conjecture.

The FEI conjecture is known to be true when we replace
the universal constant $C$  with $\log n$ in Eq.~\eqref{eq:FEI-inequality}.
In fact we know \(\ent (\hat{f}^2)\leq O(\Inf(f)\cdot \log n)\)
for \emph{real-valued} functions $f\colon\pmset{n}\rightarrow \reals$
(see \cite{OWZ11,KMS12} for a proof and \cite{CKLS16}
for an improvement of this statement).\footnote{For Boolean functions, the $\log n$-factor was improved by \cite{gopalan:fouriertails} to $\log (\s(f))$ (where $\s(f)$ is the sensitivity of the Boolean function $f$).}
If we strictly require $C$ to be a universal constant,
then the FEI conjecture is known to be false for real-valued functions.
Instead, for real-valued functions an analogous statement called the
\emph{logarithmic Sobolev Inequality}~\cite{gross:logsobo} is known to be true.
The logarithmic Sobolev inequality states that
for every $f\colon\pmset{n}\rightarrow \reals$,
we have $\Ent(f^2)\leq 2 \cdot \Inf(f)$,
where $\Ent(f)$ is defined as $\Ent(f)=\E[f \ln (f)]-\E[f]\ln (\E[f])$,
where the expectation is taken over uniform $x\in \pmset{n}$. 

Restricting to Boolean functions,
the FEI conjecture is known to be true for the ``standard'' functions
that arise often in analysis, such as AND, OR, Majority, Parity,
Bent functions and Tribes.
There have been many works on proving the FEI conjecture
for specific classes of Boolean functions.
O'Donnell et al.~\cite{OWZ11} showed that
the FEI conjecture holds for symmetric Boolean functions and
read-once decision trees.
Keller et al.~\cite{KMS12} studied a generalization of
the FEI conjecture when the Fourier coefficients are
defined on biased product measures on the Boolean cube.
Then, Chakraborty et al.~\cite{CKLS16} and
O'Donnell and Tan~\cite{OT13}, independently and simultaneously, 
proved the FEI conjecture for
read-once formulas. In fact, O'Donnell and Tan proved an interesting
composition theorem for the FEI conjecture
(we omit the definition of composition theorem here,
see~\cite{OT13} for more).
For general Boolean functions, Chakraborty et al.~\cite{CKLS16} gave
several upper bounds on the Fourier entropy in terms of
combinatorial quantities larger than the total influence, e.g.,
average decision tree depth,~etc., and
sometimes even quantities that could be much smaller than influence,
namely, average parity-decision tree depth.  

Later Wan et al.~\cite{www14} used
Shannon's source coding theorem~\cite{Shannon48} (which characterizes entropy)
to establish the FEI conjecture for read-$k$ decision trees for constant $k$.
Using their novel interpretation of the FEI conjecture
they also reproved O'Donnell-Tan's composition theorem in an elegant way. 
Recently, Shalev~\cite{shalev2018:minentropy} improved the constant in the
FEI inequality for read-$k$ decision trees, and further verified the conjecture
when either the influence is \emph{too low}, or the entropy is \emph{too high}.
The FEI conjecture  is also verified for random Boolean functions by
Das et al.~\cite{das:randomfei} and for random linear threshold functions
(LTFs) by Chakraborty et al.~\cite{chakrabortyetal:feirandomltf}.

There has also been some work in giving lower bounds on the constant $C$ in
the FEI conjecture. Hod~\cite{hod:constantfei} gave a lower bound of
$C>6.45$
(the lower bound holds even when considering the class of monotone functions),
improving upon the lower bound of O'Donnell and Tan~\cite{OT13}.

However, there has not been much work on the FMEI conjecture. It was observed
in~\cite{OWZ11,chakrabortyetal:feirandomltf} that the KKL theorem implies
the FMEI conjecture for monotone functions and linear threshold functions.
Finally, the FMEI conjecture for ``regular'' read-$k$ $\dnf$s
was recently established by Shalev~\cite{shalev2018:minentropy}. 

\subsection{Our contributions}
\label{sec:contributions}
Our contributions in this paper are threefold, which we summarize
in the following sections.

\subsection{Better upper bounds for the FEI conjecture}
\label{sec:feiimprovement-intro}
Our first and main contribution of this paper is
to give a better upper bound on the Fourier entropy
$\ent(\hat{f}^2)$ in terms of $\aUC(f)$,
the \emph{average unambiguous certificate complexity} of $f$. 
Informally, the unambiguous certificate complexity $\UC(f)$ of
$f$ is similar to the standard certificate complexity measure,
except that the collection of certificates is now required to be
\emph{unambiguous}, i.e., every input should be consistent with
a \emph{unique} certificate. In other words, an unambiguous certificate is
a monochromatic subcube partition of the Boolean cube. By the average
unambiguous certificate complexity, $\aUC(f)$,
we mean the expected number of bits set
by an unambiguous certificate on a uniformly random input.

There have been many recent works on query complexity,
giving upper and lower bounds on $\UC(f)$ in terms of
other combinatorial measures such as decision-tree complexity,
sensitivity,  quantum query complexity, etc.,
see~\cite{goos:unambigious,ambainis:unambiguous,shalev:unambigious} for more.
It follows from definitions that $\UC(f)$ lower bounds
decision tree complexity. However, it is known that $\UC(f)$
can be quadratically smaller than
decision tree complexity~\cite{ambainis:unambiguous}.
Our main contribution here is an improved upper bound of
\emph{average unambiguous certificate complexity} $\aUC(f)$ on
$\ent(\hat{f}^2)$. This improves upon the previously known bound of
average decision tree depth on $\ent(\hat{f}^2)$~\cite{CKLS16}.
\begin{theorem} 
  \label{thm:subcube-partition-entropy-intro}
  Let $f\colon \pmset{n}\rightarrow \pmset{}$ be a Boolean function. Then,
  \begin{align*}
    \ent(\hat{f}^2) \leq 2\cdot \aUC(f).
  \end{align*}
\end{theorem}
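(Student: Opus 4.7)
The plan is to exhibit a probabilistic decomposition of the Fourier-squared distribution that is aligned with an optimal unambiguous subcube partition, and then to apply the entropy chain rule. Fix such a partition $\{C_1,\dots,C_m\}$ of $\pmset{n}$ achieving $\aUC(f)$; let $J_i$ denote the fixed-coordinate set of $C_i$, set $c_i=|J_i|$, and $p_i:=|C_i|/2^n=2^{-c_i}$. Since $\{C_i\}$ partitions the cube, $\sum_i p_i=1$ and directly $\ent(p)=\sum_i p_i c_i=\aUC(f)$. Suppose one can produce probability distributions $\mu_i$ each supported on $2^{J_i}$ with $\fcsq{f}{S}=\sum_i p_i\,\mu_i(S)$ for all $S\subseteq[n]$. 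Then the joint variable $(I,S)$ with $I\sim p$ and $S\mid I=i\sim\mu_i$ has $S$-marginal equal to the Fourier distribution, and the chain rule yields
\[\ent(\hat{f}^2)\;\leq\;\ent(I)+\ent(S\mid I)\;\leq\;\aUC(f)+\sum_i p_i\,c_i\;=\;2\,\aUC(f),\]
using $\ent(\mu_i)\leq\log|2^{J_i}|=c_i$.

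Producing the $\mu_i$ amounts to a bipartite transportation feasibility problem: sources $i$ supply mass $p_i$, sinks $S$ demand $\fcsq{f}{S}$, and edges connect $(i,S)$ only when $S\subseteq J_i$. By LP duality (a weighted version of Hall's theorem), feasibility is equivalent to
\[\sum_{S\in T}\fcsq{f}{S}\;\leq\;\sum_{i\in U_T}p_i,\qquad U_T:=\{i:\text{some }S\in T\text{ satisfies }S\subseteq J_i\},\]
for every $T\subseteq 2^{[n]}$. Proving this Hall-type inequality is the main obstacle.

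The plan for this step is a Fourier projection argument. Let $P_T$ denote orthogonal projection of $L^2(\pmset{n})$ onto $\mathrm{span}\{\chi_S:S\in T\}$, so that $\sum_{S\in T}\fcsq{f}{S}=\|P_T f\|_2^2$. Decomposing $f=\sum_i f(C_i)\mathbf{1}_{C_i}$ and recalling that the Fourier support of $\mathbf{1}_{C_i}$ lies in $2^{J_i}$, one has $P_T\mathbf{1}_{C_i}=0$ whenever $i\notin U_T$; hence $P_T f=P_T(f\cdot\mathbf{1}_{E_T})$ with $E_T:=\bigcup_{i\in U_T}C_i$. Since $P_T$ has operator norm $1$ and $|f|\equiv 1$,
\[\|P_T f\|_2^2\;\leq\;\|f\cdot\mathbf{1}_{E_T}\|_2^2\;=\;|E_T|/2^n\;=\;\sum_{i\in U_T}p_i,\]
which is exactly the desired Hall condition. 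Granting this, the chain-rule computation above delivers $\ent(\hat{f}^2)\leq 2\,\aUC(f)$.
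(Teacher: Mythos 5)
Your proof is correct, and it takes a genuinely different route from the paper. The paper establishes the bound by tensorizing: it considers $f^M$ and the product certificate $\mathcal{C}^M$, applies the Asymptotic Equipartition Property to identify a set $\mathcal{B}$ of size roughly $2^{2M\aUC(f,\mathcal{C})}$ of Fourier coefficients carrying almost all the weight, bounds $\ent(\widehat{f^M}^2)$ using that concentration, divides by $M$, and lets $M\to\infty$. Your argument avoids tensorization entirely. You exhibit a coupling $(I,S)$ whose $S$-marginal is the Fourier distribution, with $I$ the index of the subcube drawn proportionally to its volume and $S\mid I=i$ supported on $2^{J_i}$, and then the chain rule immediately gives $\ent(\hat f^2)\le\ent(I)+\ent(S\mid I)\le 2\aUC(f)$. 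The existence of the coupling is reduced to a Gale--Hoffman/Hall feasibility condition for a transportation problem, which you verify by a clean operator-norm argument: $\sum_{S\in T}\fcsq{f}{S}=\|P_Tf\|_2^2=\|P_T(f\cdot\mathbf{1}_{E_T})\|_2^2\le\|f\cdot\mathbf{1}_{E_T}\|_2^2=\sum_{i\in U_T}p_i$, using that the Fourier support of $\mathbf{1}_{C_i}$ is $2^{J_i}$. This is shorter, entirely finitary (no limits, no AEP), and gives the constant $2$ exactly. It also extends essentially verbatim to the parity-certificate version (Theorem~\ref{thm:subspace-partition-entropy-intro}): for an affine subspace $C_i$ of codimension $c_i$, the Fourier support of $\mathbf{1}_{C_i}$ is a coset-determined subspace of $\mathbb{F}_2^n$ of size $2^{c_i}$, so both the projection bound and the conditional-entropy bound $\ent(S\mid I=i)\le c_i$ carry over. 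What the paper's AEP machinery buys by comparison is a reusable framework for turning ``typical-certificate'' bounds into entropy bounds, but for this particular theorem your direct coupling argument is the more economical proof.
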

A new and crucial ingredient employed in
the proof of the theorem is an analog of
the law of large numbers in information theory,
usually referred to as the
\emph{Asymptotic Equipartition Property (AEP)} theorem
(Theorem~\ref{thm:aep}).
Employing information-theoretic techniques for the FEI conjecture seems
very natural given that the conjecture seeks to bound the entropy of
a distribution. Indeed, Keller et al.~\cite[Section~3.1]{KMS12} envisioned 
a proof of the FEI conjecture itself using large deviation estimates and the
tensor structure (explained below) in a stronger way, 
and Wan et al.~\cite{www14} used
Shannon's source coding theorem~\cite{Shannon48}
to verify the conjecture for bounded-read decision trees.

In order to prove Theorem~\ref{thm:subcube-partition-entropy-intro},
we study the \emph{tensorized} version of $f$,
\(f^M \colon \pmset{Mn}\to \pmset{}\), which is defined as follows, 
\begin{align*}
  f^M(x^1,\ldots,x^M) 
  := f(x^1_{1},\ldots,x^1_{n})\cdot f(x^2_{1},\ldots,x^2_{n})\cdots f(x^M_{1},\ldots ,x^M_{n}).
\end{align*}
Similarly we define a \emph{tensorized} version $\mathscr{C}^M$ of
an unambiguous certificate $\mathscr{C}$ of $f$,\footnote{Recall an unambiguous certificate is a collection of certificates that partitions the Boolean cube $\pmset{n}$.} i.e., a direct product of
$M$ independent copies of $\mathscr{C}$. 
It is not hard to see that $\mathscr{C}^M$ is also
an unambiguous certificate of $f^M$. To understand
the properties of $\mathscr{C}^M$ we study $\mathscr{C}$
in a probabilistic manner. We observe that $\mathscr{C}$ naturally induces 
a distribution $\mathbf{C}$ on its certificates when
the underlying inputs $x\in \pmset{n}$ are distributed uniformly. 
Using the asymptotic equipartition property with respect to $\mathbf{C}$,
we infer that for every $\delta > 0$, there exists $M_0>0$ such that
for all $M \geq M_0$, there are at most $2^{M(\aUC(f,\mathscr{C}) + \delta)}$  
certificates in $\mathscr{C}^M$ that together cover at least $1-\delta$ 
fraction of the inputs in $\pmset{Mn}$.
Furthermore, each of these certificates fixes at most
$M(\aUC(f,\mathscr{C}) + \delta)$ bits. Hence, a particular certificate
can contribute to at most $2^{M(\aUC(f,\mathscr{C}) + \delta)}$ Fourier coefficients
of $f^M$. Combining both these bounds, all these certificates
can overall contribute to at most $2^{2M(\aUC(f,\mathscr{C}) + \delta)}$
Fourier coefficients of $f^M$.
Let's denote this set of Fourier coefficients by $\mathcal{B}$.
We then argue that the Fourier coefficients of $f^M$
that are \emph{not} in $\mathcal{B}$ have Fourier weight at most $\delta$.
This now allows us to bound the Fourier entropy of $f^M$ as follows, 
\begin{align*}
  \ent(\widehat{f^M}^2) \leq \log |\mathcal{B}| + \delta nM + \mathsf{H}(\delta),
\end{align*}
where $\mathsf{H}(\delta)$ is the binary entropy function.
Since \(\ent(\widehat{f^M}^2) = M\cdot \ent(\hat{f}^2)\), we have
\begin{align*}
  \ent(\hat{f}^2) \leq 2(\aUC(f,\mathscr{C}) + \delta) + \delta n + \frac{\mathsf{H}(\delta)}{M}.
\end{align*}
By the AEP theorem, note that $\delta \to 0$ as $M \to \infty$.
Thus, taking the limit as $M \to \infty$ we obtain our theorem. 

Looking finely into how certificates contribute to Fourier coefficients
in the proof above, we further strengthen
Theorem~\ref{thm:subcube-partition-entropy-intro}
by showing that we can replace $\aUC(f)$ by the
\emph{average unambiguous parity-certificate complexity}
$\aUC^{\oplus}(f)$ of $f$. Here $\aUC^{\oplus}(f)$ is defined similar
to $\aUC(f)$ except that instead of being defined in terms of
monochromatic subcube partitions of $f$, we now partition
the Boolean cube with monochromatic \emph{affine subspaces}.
(Observe that subcubes are also affine subspaces.)
This strengthening also improves upon the previously known bound of
average parity-decision tree depth on $\ent(\hat{f}^2)$~\cite{CKLS16}.
It is easily seen that $\aUC^{\oplus}(f)$ lower bounds
the average parity-decision tree depth. 
\begin{theorem}
  \label{thm:subspace-partition-entropy-intro}
  Let $f\colon \pmset{n}\to \pmset{}$ be any Boolean function. Then,
  \begin{align*}\ent(\hat{f}^2) \leq 2\cdot \aUC^{\oplus}(f).\end{align*} 
\end{theorem}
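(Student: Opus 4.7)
The plan is to lift the proof sketch of Theorem~\ref{thm:subcube-partition-entropy-intro} to the affine-subspace setting by replacing a single ingredient. The key observation is that if $V = v + V_0 \subseteq \mathbb{F}_2^n$ is an affine subspace of codimension $k$ (fixed by $k$ independent parity constraints), then for the $\pmset{}$-valued Fourier basis $\{\chi_S\}$, the indicator $\mathds{1}_V$ has Fourier support of size exactly $2^k$: since $\widehat{\mathds{1}_V}(S) = 2^{-k}\cdot \E[\chi_S(x)\mid x\in V]$, this is nonzero iff $\chi_S$ is constant on $V$, which happens iff $S\in V_0^\perp$, a subgroup of the character group of size $2^k$, and on these characters each coefficient equals $\pm 2^{-k}$. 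Hence a parity-certificate of codimension $k$ contributes to at most $2^k$ Fourier coefficients of any function that is constant on it, exactly matching the subcube count $2^k$ for a subcube fixing $k$ coordinates.

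With this drop-in replacement the scheme from Theorem~\ref{thm:subcube-partition-entropy-intro} goes through verbatim. Fix an optimal monochromatic affine-subspace partition $\mathcal{C}^\oplus = \{V_i\}$ achieving $\aUC^\oplus(f,\mathcal{C}^\oplus) = \aUC^\oplus(f)$ with $f|_{V_i}=v_i$, and tensorize to obtain $f^M$ and the partition $(\mathcal{C}^\oplus)^M$ of $\pmset{Mn}$, which is again an unambiguous parity-certificate scheme for $f^M$. Let $\mathbf{V}$ be the random block of $\mathcal{C}^\oplus$ obtained by sampling $x\in\pmset{n}$ uniformly; its expected codimension is $\aUC^\oplus(f)$. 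Applying the Asymptotic Equipartition Property to $M$ i.i.d.~copies of $\mathbf{V}$, for every $\delta>0$ and all sufficiently large $M$ there is a typical collection of at most $2^{M(\aUC^\oplus(f)+\delta)}$ blocks of $(\mathcal{C}^\oplus)^M$ that together cover at least a $(1-\delta)$-fraction of $\pmset{Mn}$, each of codimension at most $M(\aUC^\oplus(f)+\delta)$.

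Let $\mathcal{B}\subseteq\mathbb{F}_2^{Mn}$ be the union over these typical blocks $V$ of $\supp(\widehat{\mathds{1}_V})$; by the first paragraph's bound combined with the block count, $|\mathcal{B}|\leq 2^{2M(\aUC^\oplus(f)+\delta)}$. Decompose $f^M=g+h$, where $g$ aggregates the typical blocks and $h$ the atypical ones, so that $g$ and $h$ have disjoint supports and $\E[h^2]\leq \delta$. Since only atypical blocks can contribute to Fourier coefficients outside $\mathcal{B}$, we have $\widehat{f^M}(S)=\widehat{h}(S)$ for every $S\notin \mathcal{B}$, and hence by Parseval $\sum_{S\notin\mathcal{B}}\widehat{f^M}(S)^2 \leq \E[h^2]\leq \delta$. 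Splitting the Shannon entropy of the Fourier distribution of $f^M$ into a part supported on $\mathcal{B}$ (bounded by $\log|\mathcal{B}|$) and a part supported on its complement (bounded by $\delta\cdot nM$, using the trivial support-size bound $2^{nM}$), we get
\[\ent(\widehat{f^M}^2)\leq 2M\bigl(\aUC^\oplus(f)+\delta\bigr) + \delta nM + \mathsf{H}(\delta).\]
Dividing by $M$, invoking the tensorization identity $\ent(\widehat{f^M}^2)=M\cdot \ent(\hat{f}^2)$, and then letting $\delta\to 0$ with $M\to\infty$ per AEP yields $\ent(\hat{f}^2)\leq 2\aUC^\oplus(f)$.

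The main obstacle, and the only substantive departure from the subcube argument, is the Fourier-support accounting for indicators of affine subspaces described in the first paragraph; once this ``codimension $k$ yields $2^k$ nonzero Fourier coefficients'' estimate is in place, the AEP-based counting transplants with only notational changes. In fact the two proofs could be unified by stating the argument for any cover scheme whose atomic indicators have Fourier supports of size at most $2^{(\text{description length})}$.
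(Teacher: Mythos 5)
Your proof is correct and follows essentially the same route as the paper: the same $M$-fold tensorization and AEP on the certificate distribution, combined with the observation that a parity-certificate of codimension $k$ touches only $2^k$ Fourier coefficients (the paper phrases this via the condition $\mathds{1}_{S_j}\in\mathsf{span}\langle A_{i_j}\rangle$ in the definition of $\mathcal{B}$, which is exactly your annihilator identity $\supp(\widehat{\mathds{1}_V})=V_0^\perp$). Your explicit Fourier-support calculation for $\mathds{1}_V$ is a nice way to make precise what the paper only sketches.
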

The proof outline remains the same as in
Theorem~\ref{thm:subcube-partition-entropy-intro}.
However, a particular certificate in $\mathscr{C}^M$ no longer fixes
just variables. Instead these parity certificates now fix parities
over variables, and so potentially could involve all variables.
Hence we cannot directly argue that all the certificates contribute
to at most $2^{M(\aUC^{\oplus}(f,\mathscr{C})+\delta)}$ Fourier coefficients of $f^M$.
Nevertheless, by the AEP theorem we still obtain that
a typical parity-certificate fixes at most
$M(\aUC^{\oplus}(f,\mathscr{C})+\delta)$ parities.
Looking closely at the Fourier coefficients that
a parity-certificate can contribute to, we now argue
that such coefficients must lie in the linear span of
the parities fixed by the parity-certificate. 
Therefore, a typical parity-certificate can overall
contribute to at most $2^{M(\aUC^{\oplus}(f,\mathscr{C})+\delta)}$
Fourier coefficients of $f^M$.
The rest of the proof now follows analogously.

\subsubsection{Further extension to unambiguous $\dnf$s}
Consider an unambiguous certificate
$\mathscr{C} = \{C_1,\ldots ,C_t\}$  of $f$. It covers both $1$
and $-1$ inputs of $f$. Suppose $\{C_1,\ldots ,C_{t_1}\}$ for some $t_1 < t$
is a partition of $f^{-1}(-1)$ and $\{C_{t_1+1}, \ldots ,C_t\}$
is a partition of $f^{-1}(1)$. To represent $f$, it suffices to consider
$\bigvee_{i=1}^{t_1}C_i$. This is a $\dnf$ representation of $f$
with the additional property that  $\{C_1,\ldots ,C_{t_1}\}$
forms a partition of $f^{-1}(-1)$. We call such a representation
an \emph{unambiguous} $\dnf$. In general, a $\dnf$ representation
need not satisfy this additional property. 

Using the equivalence of total influence and average sensitivity,
one can easily observe that 
\begin{align*}
  \Inf(f) \leq 2 \cdot \min \left\{\sum_{i=1}^{t_1} \mathsf{co\text{-}dim}(C_i) \cdot 2^{-\mathsf{co\text{-}dim}(C_i)}, \sum_{i=t_1+1}^{t} \mathsf{co\text{-}dim}(C_i) \cdot 2^{-\mathsf{co\text{-}dim}(C_i)}\right\} \leq \aUC(f,\mathscr{C}),
\end{align*} 
where $\mathsf{co\text{-}dim}(\cdot)$ denotes
the co-dimension of an affine space. Note that the quantity
\(\sum_{i=1}^{t_1} \mathsf{co\text{-}dim}(C_i)\cdot 2^{-\mathsf{co\text{-}dim}(C_i)}\),
in a certain sense, is  ``average unambiguous $1$-certificate complexity'' and,
similarly,
\(\sum_{i=t_1+1}^{t} \mathsf{co\text{-}dim}(C_i) \cdot 2^{-\mathsf{co\text{-}dim}(C_i)}\)
captures ``average unambiguous $0$-certificate complexity''.

Building on our ideas from the main theorem in the previous section
and using a \emph{stronger} version (Theorem~\ref{thm:strongaep})
of the AEP theorem we nearly (see Remark~\ref{rem:one-sided-bound}) establish
the aforementioned improved bound of the smaller quantity between 
``average unambiguous $1$-certificate complexity''
and ``average unambiguous $0$-certificate complexity'' on the Fourier entropy. 
Formally, we prove the following.
\begin{theorem}
  \label{thm:one-sided-subspace-partition-intro}
  Let $f\colon\pmset{n} \to \pmset{}$ be a Boolean function and
  $\mathscr{C} = \{C_1, \ldots ,C_t\}$ be a monochromatic affine
  subspace partition of $\pmset{n}$ with respect to $f$ such that
  $\{C_1,\ldots , C_{t_1}\}$ for some $t_1 < t$ is an affine subspace partition
  of $f^{-1}(-1)$ and $\{C_{t_1 +1} , \ldots , C_t\}$ is an affine subspace
  partition of $f^{-1}(1)$. Further, $p := \Pr_x[f(x) =1]$. Then,
  \begin{align*}
    \ent(\hat{f}^2) \leq
    \begin{dcases*}
      2\left(\sum_{i=1}^{t_1} \mathsf{co\text{-}dim}(C_i) \cdot 2^{-\mathsf{co\text{-}dim}(C_i)} + p \cdot \max_{i \in \{1,\ldots ,t_1\}}\mathsf{co\text{-}dim}(C_i)\right), \\ 
      2\left(\sum_{i=t_1+1}^{t} \mathsf{co\text{-}dim}(C_i) \cdot 2^{-\mathsf{co\text{-}dim}(C_i)} + (1-p) \cdot \max_{i \in \{t_1+1,\ldots ,t\}}\mathsf{co\text{-}dim}(C_i)\right).
    \end{dcases*}
  \end{align*}
\end{theorem}
The outline for the proof of
Theorem~\ref{thm:one-sided-subspace-partition-intro}
remains the same as before, but it differs in implementation details. 
We sketch them now. Analogous to the proof of the main theorem
we consider a partition of inputs with respect to $f$ and
its tensorized version. Motivated by the $\dnf$ representation,
we study the following partition $\{C_1,\ldots , C_{t_1}, f^{-1}(1)\}$
which naturally inherits a distribution $\mathbf{C}$ given by
the uniform distribution on the underlying inputs.
Again we build a ``small'' set $\mathcal{B}$ of Fourier coefficients of $f^M$
based on the Fourier expansions of strongly typical sequences.
However, unlike before, the probability of observing
a strongly typical sequence doesn't capture the number of coefficients
it could contribute to $\mathcal{B}$. 
Here, we use stronger properties guaranteed by the strong AEP.
In particular, it guarantees that the \emph{empirical} distribution
of a typical sequence is close to the distribution of $\mathbf{C}$.
In contrast, the (weak) AEP only guarantees that
the \emph{empirical} entropy of a typical sequence is close to
the entropy of $\mathbf{C}$. Using the stronger property
we can now lower bound the magnitude of any non-zero Fourier coefficient
in the Fourier expansion of the indicator function of
a strongly typical sequence. We then use Parseval's Identity
(Fact~\ref{fact:plancherel}) to deduce an upper bound on its Fourier sparsity,
which in turn is used to bound the size of $\mathcal{B}$.
We also need to argue that coefficients \emph{not} in $\mathcal{B}$
have negligible Fourier weight, which can be done as before.
Using the two properties, we can now complete the proof. 

We also note that a similar bound for the general $\dnf$ representation,
i.e., when $\{C_1,\ldots ,C_{t_1}\}$ is an arbitrary DNF representation
of $f$ where the $C_i$s need not be disjoint,
suffices to establish Mansour's conjecture
(Conjecture~\ref{con:Mansour-intro}). In fact, following the analogy,
Theorem~\ref{thm:one-sided-subspace-partition-intro} implies
a bound of ``average $1$-certificate complexity'' in the general case.
In this direction, in Section~\ref{sec:Mansour} we observe that
a weaker bound of $1$-certificate complexity, i.e.,
showing $\ent(\widehat{f}^2)\leq O(\mathsf{C}_1(f))$,
would already suffice to answer Mansour's conjecture positively.
For details, see Section~\ref{sec:Mansour}.

\subsection{New upper bounds for the FMEI conjecture}
\label{sec:fmei-intro}
Given the hardness of obtaining better upper bounds on
the Fourier entropy of a Boolean function, we make progress on
a weaker conjecture, the FMEI conjecture. The FMEI conjecture is
much less studied than the FEI conjecture. In fact, we are aware of
only one very recent paper~\cite{shalev2018:minentropy}
which studies the FMEI conjecture for a particular class of functions.
Our second contribution is to give upper bounds on the min-entropy of
general Boolean functions in terms of the minimum parity-certificate complexity
(denoted $\certificate_{\min}^{\oplus}(f)$) and the approximate spectral norm of
Boolean functions (denoted $\|\widehat{f}\|_{1,\varepsilon}$).
The minimum parity-certificate complexity $\certificate_{\min}^{\oplus}(f)$
is also referred to as the parity kill number by
O'Donnell et al.~\cite{O'donnell:paritykill} and
is closely related to the communication complexity of
XOR functions~\cite{zhangshi:quantumxor,montanaro:xor,tsang:logrank}.
The approximate spectral norm $\|\widehat{f}\|_{1,\varepsilon}$ is related to
the \emph{quantum} communication complexity of XOR
functions~\cite{leeshraibman:lowerbounds,zhang:quantumxor}.
In particular, it characterizes the bounded-error
quantum communication complexity of XOR functions
with constant $\mathbb{F}_2$-degree~\cite{zhang:quantumxor}.
(By $\mathbb{F}_2$-degree, we mean the degree of a function
when viewed as a polynomial over $\mathbb{F}_2$.)
\begin{theorem}\label{thm:minentropyupperb-intro}
  Let $f\colon\pmset{n}\rightarrow \pmset{}$ be a Boolean function. Then,  
  \begin{enumerate}[label=(\roman*)]
  \item\label{min-vs-approx-norm-intro} For every $\varepsilon \geq 0$,~~
    \(\ent_{\infty}(\hat{f}^2)\leq 2\cdot\log \left(\|\widehat{f}\|_{1,\varepsilon}/(1-\varepsilon)\right)\).
  \item\label{min-vs-cert-intro} \(\ent_{\infty}(\hat{f}^2)\leq  2\cdot \certificate_{\min}^{\oplus}(f)\). 
  \item\label{min-vs-rpdt-intro} \(\ent_{\infty}(\hat{f}^2)\leq  2(1+\log_2 3)\cdot R_2^\oplus(f)\).\footnote{$R_2^\oplus(f)$ is the randomized parity-decision tree complexity of $f$ (we define this formally in Section~\ref{sec:prelims}).} 
  \end{enumerate}
\end{theorem}
The proof of Theorem~\ref{thm:minentropyupperb-intro}~\ref{min-vs-approx-norm-intro} expresses the quantity $\|\widehat{f}\|_{1,\varepsilon}$
as a (minimization) linear program.
We consider the dual linear program and exhibit a feasible solution that
achieves an optimum of \((1-\varepsilon)/\max_S|\fc{f}{S}|\).
This proves the desired inequality. In order to prove
part~\ref{min-vs-cert-intro} and \ref{min-vs-rpdt-intro} of the theorem,
the idea is to consider a ``simple'' function $g$
that has ``good'' correlation with $f$, and then upper bound
the correlation between $f$ and $g$
using Plancherel's theorem (Fact~\ref{fact:plancherel}) and
the fact that $g$ has a ``simple'' Fourier structure.
For part~\ref{min-vs-cert-intro},
$g$ is chosen to be the indicator function of
an (affine) subspace where $f$ is constant, whereas for
part~\ref{min-vs-rpdt-intro} the randomized parity decision tree
computing~$f$ itself plays the role of $g$.\footnote{We remark here that there exists simpler proof of part~\ref{min-vs-approx-norm-intro}, along the lines of parts~\ref{min-vs-cert-intro} and \ref{min-vs-rpdt-intro}. However, we believe that the linear-programming formulation of $\ent_{\infty}(\hat{f}^2) $ might help obtain better bounds, such as fractional block sensitivity.}

As a corollary (Corollary~\ref{coro:renyientropyupperb}) of this theorem
we also obtain upper bounds on the R\'{e}nyi Fourier entropy
$\ent_{1+\delta}(\hat{f}^2)$ of order $1+\delta$ for all $\delta>0$.
We recall that $\ent_{1+\delta}(\hat{f}^2) \geq \ent_{\infty}(\hat{f}^2)$
for every $\delta\geq 0$ and as $\delta \to \infty$,~
$\ent_{1+\delta}(\hat{f}^2)$ converges to $\ent_{\infty}(\hat{f}^2)$.
Also $\ent_1(\hat{f}^2)$, obtained by taking the limit as $\delta \to 0$,
is the standard Shannon entropy.

We believe that these improved bounds on min-entropy of
the Fourier distribution give a better understanding of Fourier coefficients
of Boolean functions, and could be of
independent interest. As a somewhat non-trivial application of
Theorem~\ref{thm:minentropyupperb-intro}
(in particular, part~\ref{min-vs-cert-intro}) we verify the FMEI conjecture
for read-$k$ $\dnf$s, for constant~$k$.
(A read-$k$ $\dnf$ is a formula where each variable appears
in at most $k$ terms.)
\begin{theorem}
  \label{thm:meic-dnf-intro}
  For every Boolean function $f \colon \pmset{n}\to \pmset{}$
  that can be expressed as a read-$k$ $\dnf$, we have  
  \begin{align*}
    \ent_{\infty}(\hat{f}^2) ~\leq~ O(\log k) \cdot \Inf(f).
  \end{align*}
\end{theorem}
This theorem improves upon a recent (and independent) result of
Shalev~\cite{shalev2018:minentropy} that establishes the FMEI conjecture
for ``regular" read-$k$ $\dnf$s (where regular means each term in
the $\dnf$ has more or less the same number of variables,
see~\cite{shalev2018:minentropy} for a precise definition).  
In order to prove Theorem~\ref{thm:meic-dnf-intro}, we essentially
show that $\Inf(f)$ is at least as large as $\certificate_{\min}(f)$,
for read-$k$ $\dnf$s (Lemma~\ref{lem:dnf-inf-lb}),
where $\certificate_{\min}(f)$ is the minimum certificate complexity of $f$.
Now the proof of Theorem~\ref{thm:meic-dnf-intro} follows in conjunction
with Theorem~\ref{thm:minentropyupperb-intro}~\ref{min-vs-cert-intro}.  

\paragraph{\textbf{New research work since the completion of this work}:}
A very recent work of Kelman et al.~\cite{KKLMS19} shows that
the FMEI conjecture holds up to a logarithmic factor in $\Inf(f)$.
In general they show that almost all the Fourier weight lies on
Fourier coefficients of size at most $O(\widetilde{\Inf(f)})$
and weight at least
\(2^{-O\left(\widetilde{\Inf(f)}\log\left(1+ \widetilde{\Inf(f)}\right)\right)}\),
where \(\widetilde{\Inf(f)} := \frac{\Inf(f)}{\Var(f)}\).
We remark that though these results
are very close to the FMEI conjecture, they are incomparable to our bounds.
In particular one could construct examples where
that $\certificate_{\min}^{\oplus}(f)$ is much smaller 
than \(\widetilde{\Inf(f)}\log\left(1+\widetilde{\Inf(f)}\right)\) 
(e.g., the inner-product-mod-2 function, AND, etc.)
and vice versa (e.g., Majority).

However we remark that there is a similarity in the high-level
proof idea of their work and ours. In order to capture
a Fourier coefficient of large weight consider
the inner product between the function $f$ and another function $g$.
In our case we choose $g$ to be a subfunction (of $f$) that has \emph{large}
inner product with $f$, whereas Kelman et al.~\cite{KKLMS19}
essentially choose $g$ to be the truncation of the Fourier spectrum
of $f$ to the Fourier coefficients of size at most  $O(\widetilde{\Inf(f)})$
and weight \emph{at most}
\(2^{-\Omega\left(\widetilde{\Inf(f)}\log\left(1+\widetilde{\Inf(f)}\right)\right)}\).
They then present a non-trivial analysis to show that
the inner product between $f$ and $g$ is \emph{very} small.
This is sufficient to reach the conclusion since 
the total Fourier weight on characters of size
at most $O(\widetilde{\Inf(f)})$, by Markov's inequality, 
is a large constant (at least $1 - \Var(f)/100$).

\subsection{Implications of the FEI conjecture and connections to the Bohnenblust-Hille inequality}
\label{sec:bh-intro}
Our final contribution is to better understand the structure of polynomials
that $\varepsilon$-approximate Boolean functions on the Boolean cube.
To be more specific, for simplicity we fix $\varepsilon=1/3$ and we
consider polynomials $p$ such that $|p(x)-f(x)| \leq 1/3$ for all
$x \in \pmset{n}$, where $f$ is a Boolean function. Such polynomials have
proved to be powerful and found diverse applications in
theoretical computer science. The single most important measure associated
with such polynomials is its \emph{degree}. The \emph{least} degree of
a polynomial that $1/3$-approximates $f$ is referred to as the
\emph{approximate degree} of $f$. Tight bounds on approximate degree
have both algorithmic and complexity-theoretic~implications,
see for instance Sherstov's recent paper~\cite{Sherstov18}
and references therein.

In this work we ask, suppose the FEI conjecture were true, what can be said
about approximating polynomials?
For instance, are these approximating polynomials $p$
sparse in their Fourier domain, i.e., is the number of monomials in $p$,
$|\{S \colon \fc{p}{S} \neq 0\}|$, small? Do approximating polynomials have
small spectral norm (i.e., small $\sum_S |\fc{p}{S}|$)?
In order to understand these questions better, we restrict
ourselves to a class of polynomials called \emph{flat} polynomials
over $\pmset{n}$, i.e., polynomials whose non-zero Fourier coefficients
have the same magnitude.

We first observe that if a flat polynomial $p$ $1/3$-approximates a
Boolean function $f$, then the entropy of the Fourier distribution of $f$
must be ``large''. In particular, we show that
$\ent(\hat{f}^2)$ must be at least as large as
the logarithm of the Fourier sparsity of $p$ (Claim~\ref{cl:entflat}).
It then follows that assuming the FEI conjecture,
a flat polynomial of degree $d$ and sparsity $2^{\omega(d)}$
cannot $1/3$-approximate a Boolean function (Lemma~\ref{lem:flatpoly}).
However, it is not clear to us how to obtain the same conclusion
\emph{unconditionally} (i.e., without assuming that the FEI conjecture
is true) and, so we pose the following conjecture. 
\begin{conjecture}
  \label{conj:introsparsity}
  No flat polynomial of degree $d$ and sparsity $2^{\omega(d)}$
  can $1/3$-approximate a Boolean function.
\end{conjecture}
\begin{remark}
  We remark that there exist degree-$d$ flat \emph{Boolean} functions
  of sparsity $2^d$. One simple example on $4$ bits is
  the function $x_1(x_2+x_3)/2+ x_4(x_2-x_3)/2$.
  By taking a $(d/2)$-fold product of this Boolean function
  on disjoint variables, we obtain our remark.
\end{remark}
Since we could not solve the problem as posed above, we make progress
in understanding this conjecture by further restricting ourselves to the
class of \emph{block-multilinear} polynomials. An $n$-variate polynomial
is said to be \emph{block-multilinear} if the input variables can be
\emph{partitioned} into disjoint blocks 
such that every monomial in the polynomial has \emph{at most}
one variable from each block.
Such polynomials have been well-studied in functional analysis
since the work of Bohnenblust and Hille~\cite{bohnenblust:unbalancing},
but more recently have found applications in
quantum computing~\cite{aaronsonambainis:forrelation,montanaro:hypercontractivity}, classical and quantum XOR games~\cite{briet:XOR}, and 
polynomial decoupling~\cite{Odonnell:decoupling}.
In the functional analysis literature homogeneous block-multilinear
polynomials are known as \emph{multilinear forms}. 
In an ingenious work~\cite{bohnenblust:unbalancing},
Bohnenblust and Hille showed that for every
degree-$d$ multilinear form $p\colon(\reals^n)^d\to \reals$, we  have
\begin{align}
  \label{eq:BHinequality-intro}
  \Big(\sum_{i_1,\ldots,i_d=1}^n |\widehat{p}_{i_1,\ldots,i_d}|^{\frac{2d}{d+1}}\Big)^{\frac{d+1}{2d}}\leq C_d \cdot  \max_{x^1,\ldots,x^d\in [-1,1]^n} |p(x^1,\ldots,x^d)|,
\end{align}
where $C_d$ is a constant that depends on $d$.
In~\cite{bohnenblust:unbalancing}, they showed that
it suffices to pick $C_d$ to be exponential in $d$
to satisfy the equation above.
For $d=2$, Eq.~\eqref{eq:BHinequality-intro} generalizes
Littlewood's famous $4/3$-inequality~\cite{littlewood:inequality}.
Eq.~\eqref{eq:BHinequality-intro} is commonly referred to as
the Bohnenblust-Hille (BH) inequality and is known to have
deep applications in various fields of analysis such as
operator theory, complex analysis, etc.
There has been a long line of work on improving the constant $C_d$
in the BH inequality (to mention a few~\cite{defant1:unbalancing,defant:unbalancing,albuquerque:unbalancing,BPSS14,pellergrino:sharpconstant}).
The best known upper bound on $C_d$ (we are aware of)
is polynomial in $d$. It is also conjectured that
it suffices to let $C_d$ be a \emph{universal} constant
(independent of $d$) in order to satisfy Eq.~\eqref{eq:BHinequality-intro}.

In our context, using the best known bound on $C_d$ in the
BH-inequality implies that a flat homogeneous block-multilinear polynomial
of degree $d$ and sparsity $2^{\omega(d\log d)}$ cannot
$1/3$-approximate a Boolean function. However, from the discussion
before Conjecture~\ref{conj:introsparsity},
we know that the FEI conjecture implies the following theorem.
\begin{theorem}
  \label{thm:unbalancing2-intro}
  If $p$ is a flat block-multilinear polynomial of degree $d$ and sparsity $2^{\omega(d)}$, then $p$ cannot $1/3$-approximate a Boolean function.
\end{theorem}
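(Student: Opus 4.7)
The plan is to derive Theorem~\ref{thm:unbalancing2-intro} from the FEI conjecture by combining three ingredients: an entropy lower bound coming from the flat structure of $p$, the FEI upper bound on entropy, and an upper bound on $\Inf(f)$ that exploits block-multilinearity together with the sharp $1/8$-approximation tolerance.

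First, I would invoke Claim~\ref{cl:entflat}: since $p$ is flat with sparsity $s$, its non-zero Fourier coefficients all have common magnitude $c$ with $sc^2=\|p\|_2^2\in[(7/8)^2,(9/8)^2]$, and the $L_2$-closeness $\|f-p\|_2\leq 1/8$ forces $\widehat{f}^2$ to be close in total variation to the uniform distribution on $\supp(\widehat{p})$, yielding $\ent(\widehat{f}^2)\geq \log s - o(\log s)$. Second, the FEI conjecture gives $\ent(\widehat{f}^2)\leq C\cdot \Inf(f)$.

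The crucial third step is to show $\Inf(f)=O(d)$. Since $|f(x)|=1$ and $|f(x)-p(x)|\leq 1/8$, we have $|p(x)|\in[7/8,9/8]$ for every $x$. Whenever $f(x)\neq f(x^{\oplus i})$, the values $p(x)$ and $p(x^{\oplus i})$ must have opposite signs, and so $|p(x)-p(x^{\oplus i})|\geq 7/4$. Since $p$ is multilinear, $p(x)-p(x^{\oplus i})=2x_i D_i p(x_{-i})$, so the event forces $|D_i p(x_{-i})|\geq 7/8$. Markov's inequality then gives
\[
\Inf_i(f) \;\leq\; \Pr_x\big[(D_i p(x_{-i}))^2 \geq (7/8)^2\big] \;\leq\; (64/49)\cdot \E[(D_i p)^2] \;=\; (64/49)\cdot \Inf_i(p).
\]
Summing over $i$ and using the block-multilinear structure (all Fourier mass of $p$ lives at level exactly $d$, so $\Inf(p)=d\|p\|_2^2\leq d(9/8)^2=81d/64$), we obtain $\Inf(f)\leq (64/49)(81d/64)<2d$, i.e.\ $\Inf(f)=O(d)$.

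Chaining the three bounds yields $\log s - o(\log s) \leq \ent(\widehat{f}^2) \leq C\cdot O(d)$, hence $s\leq 2^{O(d)}$, which is exactly the theorem. The only delicate point is the sign-flip estimate in the third ingredient: it requires $|p(x)|$ to be bounded strictly away from $0$ uniformly in $x$, which the $1/8$-approximation tolerance delivers cleanly while keeping the Markov factor $(8/7)^2=64/49$ small. Any tolerance strictly below $1$ would also suffice, at the price of a worse multiplicative constant, so the specific constant $1/8$ is chosen mainly for clean arithmetic.
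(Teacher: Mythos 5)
Your plan is to deduce the theorem from the FEI conjecture, but that misses the entire point of this statement. The paper already records (Lemma~\ref{lem:flatpoly}) that the FEI conjecture implies the conclusion for \emph{all} flat polynomials and any constant tolerance below $1/2$; Theorem~\ref{thm:unbalancing2-intro} is explicitly advertised as an \emph{unconditional} result, and its value is precisely that it holds without assuming FEI (and without assuming the Bohnenblust--Hille constant $C_d$ is $O(1)$). A proof that begins ``Second, the FEI conjecture gives $\ent(\widehat{f}^2)\le C\cdot\Inf(f)$'' establishes nothing beyond the already-known conditional lemma, and so does not prove the stated theorem.

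There is a second, more structural issue: your argument never actually uses block-multilinearity in an essential way. The Markov-inequality bound $\Inf(f)\le\tfrac{64}{49}\Inf(p)\le\tfrac{64}{49}\,d\,\|p\|_2^2=O(d)$ is a nice, self-contained alternative to citing Shi's $\Inf(f)=O(\deg_\varepsilon(f))$, but it holds for \emph{any} degree-$d$ multilinear $1/8$-approximator $p$, block-multilinear or not. Since FEI plus a generic $\Inf(f)=O(d)$ bound already yields the conclusion for arbitrary flat polynomials (this is Lemma~\ref{lem:flatpoly}), the restriction to block-multilinear $p$ in the theorem must be there because it buys something unconditional --- and indeed the paper's actual proof hinges on Lemma~\ref{lem:block-multilinear-deg}, an inductive structural argument showing that any Boolean $f$ that is $1/8$-approximated by a degree-$d$ block-multilinear polynomial has exact degree $\deg(f)\le d$. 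Once $\deg(f)\le d$, Fourier granularity forces every nonzero $\fc{f}{S}$ to satisfy $|\fc{f}{S}|\ge 2^{-d}$, hence $\ent(\widehat{f}^2)\le 2d$ unconditionally, and combining with Claim~\ref{cl:entflat} gives $\log T\le O(d)$. That inductive degree-collapse lemma is the heart of the paper's proof and is entirely absent from your proposal. (Also a minor slip: block-multilinear polynomials need not be homogeneous of degree $d$, as the definition allows monomials with strictly fewer than one variable per block; you only need $\deg(p)\le d$ for $\Inf(p)\le d\|p\|_2^2$, so this does not break your Step 3, but ``all Fourier mass lives at level exactly $d$'' is incorrect as stated.)
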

Moreover, the above theorem, with an added restriction of homogeneity on $p$,   
is also implied when the BH-constant $C_d$ is
assumed to be a universal constant.
Our main contribution here is to establish the above theorem
\emph{unconditionally}, i.e., neither assuming $C_d$ is
a universal constant nor assuming the FEI conjecture.
In order to show the theorem,
we show an inherent weakness of block-multilinear polynomials
in approximating  Boolean functions. More formally, we show the following.
\begin{lemma}
  Let $p$ be a block-multilinear polynomial of degree-$d$
  that $1/3$-approximates a Boolean function $f$. Then, \(\deg(f) \leq d\).
\end{lemma}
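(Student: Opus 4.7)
The plan is to first prove the statement for $d=1$ (affine polynomials) and then bootstrap to arbitrary $d$ by restricting all but one block at a time.

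For the base case, I would consider an affine polynomial $q(x) = a_0 + \sum_i a_i x_i$ with $|q(x) - g(x)| \leq 1/8$ on $\pmset{n}$ for a Boolean $g$, and show $\deg(g) \leq 1$. After the substitution $x_i \mapsto -x_i$ where necessary I may assume all $a_i \geq 0$. The crux is that $q$ takes values in $[-9/8,-7/8] \cup [7/8,9/8]$ at every point of $\pmset{n}$. For any coordinate $i$, the values $q(x)$ and $q(x \oplus e_i)$ differ by $2a_i$ and both lie in the above set, so either they are in the same interval (forcing $2a_i \leq 1/4$) or in opposite intervals (forcing $2a_i \geq 7/4$). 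Hence every $a_i \in [0,1/8] \cup [7/8,9/8]$, and since $a_i$ is a fixed number, this dichotomy is uniform across $x$: if $a_i \leq 1/8$ then flipping $x_i$ never changes $g$, while if $a_i \geq 7/8$ it always flips $g$. Additionally, $|q(\mathbf{1}) - q(-\mathbf{1})| = 2\sum_i a_i \leq 9/4$, so $\sum_i a_i \leq 9/8$, whence $S := \{i : a_i \geq 7/8\}$ has $|S| \leq 1$. Combining, $g$ depends only on the $S$-coordinates and equals $\epsilon \prod_{i \in S} x_i$ for some $\epsilon \in \pmset{}$, giving $\deg(g) = |S| \leq 1$.

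For the general case, let $p$ be block-multilinear with blocks $B_1,\ldots,B_d$ and $|p - f| \leq 1/8$ everywhere. For each $i \in [d]$ and each assignment $y$ to the variables outside $B_i$, the restriction $p|_y$ is affine in the $B_i$-variables (since monomials of $p$ use at most one variable per block) and $1/8$-approximates the Boolean restriction $f|_y$. The base case then yields that $f|_y$ is affine in the $B_i$-variables. As this holds for every such $y$, the multilinear expansion of $f$ has partial degree at most $1$ in the variables of every block, so every monomial of $f$ picks at most one variable per block. Since there are $d$ blocks, this forces $\deg(f) \leq d$.

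The only substantive step is the base case, whose delicate part is the per-coordinate dichotomy $a_i \in [0,1/8] \cup [7/8,9/8]$. The argument crucially exploits $\ell_\infty$-approximation (not just $\ell_2$): the gap between $[-9/8,-7/8]$ and $[7/8,9/8]$ sharply quantizes the slopes, and the global bound $\sum_i a_i \leq 9/8$ then rules out more than one large slope. Once this is in hand, the block-restriction step is essentially a one-line reduction.
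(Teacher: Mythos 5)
Your proof is correct, and it takes a genuinely different route from the paper. The paper proceeds by induction on the number of blocks $d$: it writes $p = q_0(z) + \sum_{i} x_i q_i(z)$ with the $x_i$ drawn from one block, proves a structural lemma (Lemma~\ref{lem:properties-of-q}) asserting that for each $z$ exactly one $q_j(z)$ is large, applies the inductive hypothesis to the degree-$(d-1)$ restrictions $p(\pm 1,\ldots,\pm 1,z)$ to define rounded functions $\widetilde{q}_i$ of degree $\leq d-1$, and then argues the glued polynomial $\widetilde{p} = \widetilde{q}_0 + \sum_i x_i \widetilde{q}_i$ is Boolean-valued and equals $f$. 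You instead prove only the affine ($d=1$) case directly, via the slope dichotomy $a_i \in [0,1/8]\cup[7/8,9/8]$ forced by hypercube edges and the bound $\sum_i a_i \leq 9/8$, and then apply it once per block: restricting all variables outside a single block $B_i$ yields an affine $1/8$-approximation of the restricted $f$, so the multilinear expansion of $f$ has partial degree at most $1$ in every block, and hence total degree at most $d$. This replaces the paper's inductive gluing and the uniqueness claim of Lemma~\ref{lem:properties-of-q}(ii) with a one-line observation about multilinear expansions, making the argument more modular and lighter on bookkeeping; the essential analytic content (the gap between $[-9/8,-7/8]$ and $[7/8,9/8]$ quantizing the linear coefficients, plus the global bound ruling out two large slopes) is the same in both.
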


\paragraph{Organization} 
In the remainder of the paper, we prove and elaborate on each of
these results in more detail. We start with preliminaries in
Section~\ref{sec:prelims}. In Section~\ref{sec:feiimprovement},
we give improved upper bounds on Fourier entropy
(Theorems~\ref{thm:subcube-partition-entropy-intro},~\ref{thm:subspace-partition-entropy-intro} and~\ref{thm:one-sided-subspace-partition-intro}),
verify the FEI conjecture for functions
with bounded average unambiguous certificate complexity,
and elaborate on the connections between Fourier entropy
vs.~certificate complexity and Mansour's conjecture.
In Section~\ref{sec:minentropy}, we establish new upper bounds on
Fourier min-entropy (Theorem~\ref{thm:minentropyupperb-intro}),
and verify the FMEI conjecture for read-$k$ $\dnf$s
(Theorem~\ref{thm:meic-dnf-intro}).
In Section~\ref{sec:unbalancinglights}, we pose a conjecture
(Conjecture~\ref{conj:introsparsity}) which is a consequence of
the FEI conjecture and make partial progress
(Theorem~\ref{thm:unbalancing2-intro}) towards its resolution.
We further discuss an intriguing connection between our conjecture
and the constants in the Bohnenblust-Hille inequality.
Finally we conclude with some open problems in Section~\ref{sec:conclusion}.

\section{Preliminaries}
\label{sec:prelims}
\paragraph{Notation}
We denote the set $\{1,2,\ldots ,n\}$ by $[n]$.
A \emph{partial assignment} of $[n]$ is a map
\(\tau \colon [n] \to \{-1,1,\ast\}\).
Define \(|\tau|=|\tau^{-1}(1) \cup \tau^{-1}(-1)|\).
A subcube of the Boolean cube $\pmset{n}$ is a set of $x\in \pmset{n}$
that agrees with some partial assignment $\tau$, i.e.,
\(\{x\in \pmset{n} :  x_i = \tau(i) \text{ for every }i \text{ with }
\tau(i)\neq \ast\}\).

\paragraph{Fourier Analysis}
We recall some definitions and basic facts from analysis of Boolean functions,
referring to~\cite{ODonnell-book, wolf:fouriersurvey} for more.
Consider the space of all functions from $\pmset{n}$ to $\reals$ equipped
with the inner product defined as
\begin{align*}
  \langle f,g\rangle \coloneqq \E_x[f(x)g(x)] =
  \frac{1}{2^n}\sum_{x\in \pmset{n}} f(x)g(x).
\end{align*}
For $S\subseteq [n]$, the character function
\(\chi_S\colon\pmset{n}\rightarrow \pmset{}\) is defined as
\(\chi_S(x) \coloneqq \prod_{i\in S}x_i\).
Then the set of character functions $\left\{\chi_S\right\}_{S\subseteq [n]}$
forms an orthonormal basis for the space of
all real-valued functions on $\pmset{n}$.
Hence, every real-valued function
$f \colon\pmset{n}\rightarrow \reals$ has a unique Fourier expansion
\begin{align*}
  f(x) = \sum_{S \subseteq [n]} \fc{f}{S} \chi_S(x).
\end{align*} 
The \emph{degree} of $f$, denoted $\deg(f)$, is defined as
\(\max\left\{|S|:\fc{f}{S}\neq 0\right\}\).
The \emph{spectral norm} of $f$ is defined to be
$\sum_S|\fc{f}{S}|$. The \emph{Fourier weight} of a function $f$
on a set $\mathcal{S}$ of coefficients is defined as
$\sum_{S\in \mathcal{S}} \widehat{f}(S)^2$.
The \emph{approximate spectral norm} of a Boolean function $f$ is defined as
\begin{align*}
  \|\widehat{f}\|_{1,\varepsilon}=\min\Big\{\sum_S|\widehat{p}(S)| : |p(x)-f(x)|\leq \varepsilon \text{ for every } x\in \pmset{n}\Big\}.
\end{align*}
We note a well-known fact that follows from
the orthonormality of the character functions.
\begin{fact}[Plancherel's Theorem]
  \label{fact:plancherel}
  For any $f,g \colon \pmset{n} \to \reals$,  
  \begin{align*}\E_x[f(x)g(x)] = \sum_{S\subseteq [n]}\fc{f}{S}\fc{g}{S}.\end{align*}
\end{fact}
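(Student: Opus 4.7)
The plan is to derive Plancherel's identity directly from the orthonormality of the character basis $\{\chi_S\}_{S\subseteq[n]}$, which the excerpt has already asserted. The whole proof is essentially a bilinear expansion plus one orthogonality computation, so the writeup will be short.

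First I would expand both functions in the Fourier basis, writing
\[
f(x)g(x) \;=\; \Bigl(\sum_{S\subseteq[n]} \fc{f}{S}\chi_S(x)\Bigr)\Bigl(\sum_{T\subseteq[n]} \fc{g}{T}\chi_T(x)\Bigr) \;=\; \sum_{S,T} \fc{f}{S}\fc{g}{T}\,\chi_S(x)\chi_T(x).
\]
Then I would use the identity $\chi_S(x)\chi_T(x) = \prod_{i\in S}x_i \prod_{j\in T}x_j = \chi_{S\triangle T}(x)$, which holds because $x_i^2 = 1$ for every $x\in\pmset{n}$, so variables appearing in both $S$ and $T$ cancel. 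Taking the expectation over a uniformly random $x$ and interchanging sum and expectation (the sum is finite, so no issue) yields
\[
\E_x[f(x)g(x)] \;=\; \sum_{S,T} \fc{f}{S}\fc{g}{T}\,\E_x[\chi_{S\triangle T}(x)].
\]

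The only substantive step is evaluating $\E_x[\chi_U(x)]$ for $U\subseteq[n]$. If $U=\emptyset$ then $\chi_U \equiv 1$ and the expectation is $1$; if $U\neq\emptyset$ then by independence of the coordinates $\E_x[\chi_U(x)] = \prod_{i\in U}\E[x_i] = 0$ since each $x_i$ is uniform in $\pmset{}$. Equivalently, this is the orthonormality relation $\langle \chi_S,\chi_T\rangle = \mathbf{1}[S=T]$ that the excerpt already invokes when stating that $\{\chi_S\}$ is an orthonormal basis. Applied above, it forces $S\triangle T=\emptyset$, i.e.\ $S=T$, so the double sum collapses to the diagonal and gives
\[
\E_x[f(x)g(x)] \;=\; \sum_{S\subseteq[n]} \fc{f}{S}\fc{g}{S},
\]
as claimed.

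There is no real obstacle here: the result is essentially a restatement of orthonormality of the characters, and the only thing to be careful about is justifying $\chi_S\chi_T = \chi_{S\triangle T}$ and the evaluation of $\E[\chi_U]$. Both are elementary consequences of $x_i\in\pmset{}$ and the uniform measure on $\pmset{n}$, so the proof is a few lines.
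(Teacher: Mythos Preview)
Your proof is correct and is precisely the standard argument the paper has in mind: the paper does not give an explicit proof but simply records this as a well-known fact that ``follows from the orthonormality of the character functions $\chi_S$,'' which is exactly what your bilinear expansion plus the computation $\langle \chi_S,\chi_T\rangle = \mathbf{1}[S=T]$ carries out.
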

In particular, if $f\colon\pmset{n}\rightarrow \pmset{}$ is
Boolean-valued and $g = f$,
we have Parseval's Identity \(\sum_{S}\fcsq{f}{S} = \E\left[f(x)^2\right]\),
which in turn equals $1$.
Hence $\sum_{S}\fcsq{f}{S}=1$ and we can view $\left\{\fcsq{f}{S}\right\}_S$
as a probability distribution, which allows us to discuss
the Fourier entropy and min-entropy of the distribution
$\left\{\fcsq{f}{S}\right\}_S$, defined as 
\begin{definition}
  \label{def:shannon-entropy}
  For a Boolean function $f\colon\pmset{n}\rightarrow \pmset{}$,
  its Fourier entropy (denoted $\ent(\hat{f}^2)$) and
  min-entropy (denoted $\ent_\infty(\hat{f}^2)$) are 
  \begin{align*}
    \ent(\hat{f}^2) \coloneqq \sum_{S \subseteq [n]}\fcsq{f}{S} \log \frac{1}{\fcsq{f}{S}}, \quad \mbox{and} \quad \ent_\infty(\hat{f}^2) \coloneqq \min_{\substack{S \subseteq [n]:\\\widehat{f}(S)\neq 0}} \Big\{\log \frac{1}{\fcsq{f}{S}}\Big\}.
  \end{align*}
\end{definition}
Similarly, we can also define the R\'{e}nyi Fourier entropy.
\begin{definition}[R\'{e}nyi Fourier entropy]
  \label{def:renyi-entropy}
  For $f\colon\pmset{n}\rightarrow \pmset{}$, $\alpha \geq 0$
  and $\alpha \neq 1$, the R\'{e}nyi Fourier entropy of $f$
  of order $\alpha$ is defined as
  \begin{align*}
    \ent_{\alpha}(\hat{f}^2) \coloneqq \frac{1}{1-\alpha}\log\left(\sum_{S\subseteq [n]}|\fc{f}{S}|^{2\alpha}\right).
  \end{align*}
\end{definition}
It is known that in the limit as $\alpha \to 1$, 
$\ent_{\alpha}(\hat{f}^2)$ is the (Shannon) Fourier entropy
$\ent(\hat{f}^2)$ (see~\cite[Chapter~17, Section~8]{CT91}) and
when $\alpha \to \infty$,
observe that $\ent_{\alpha}(\hat{f}^2)$ converges to $\ent_{\infty}(\hat{f}^2)$.
It is easily seen  that
\begin{align*}
  \ent_{\infty}(\hat{f}^2) \leq \ent_1(\hat{f}^2) \leq
  \ent_{\frac{1}{2}}(\hat{f}^2) \leq \ent_0(\hat{f}^2).
\end{align*}
For $f\colon\pmset{n}\rightarrow \pmset{}$,
the \emph{influence} of a coordinate $i\in [n]$, denoted $\Inf_i(f)$,
is defined as
\begin{align*}
  \Inf_i(f)=\Pr_{x \in \sbool^n} \left[ f(x) \neq f(x^{(i)})\right]
  = \E_x\Big[\Big(\frac{f(x)-f(x^{(i)})}{2}\Big)^2\Big],
\end{align*}
where the probability and expectation is taken according to
the uniform distribution on $\pmset{n}$ and
$x^{(i)}$ is $x$ with the $i$-th bit flipped.
The \emph{total influence} of $f$, denoted $\Inf(f)$, is 
\begin{align*}
  \Inf(f)=\sum_{i\in [n]} \Inf_i(f).
\end{align*}
In terms of the Fourier coefficients of $f$,
it can be shown, e.g.,~\cite{KKL88}, that
\(\Inf_i(f)=\sum_{S\ni i} \widehat{f}(S)^2\), and therefore 
\begin{align*}
  \Inf(f)=\sum_{S\subseteq [n]} |S|\widehat{f}(S)^2.
\end{align*}
The \emph{variance} of a real-valued function $f$ is given by
\(\Var(f):=\sum_{S\neq \emptyset} \widehat{f}(S)^2 = 1 -\fcsq{f}{\emptyset}\).
It easily follows that $\Var(f)\leq \Inf(f)$. 
We now introduce some basic complexity measures of Boolean functions
which we use often, referring to~\cite{buhrman&wolf:dectreesurvey} for more. 
\paragraph{Sensitivity}
For $x \in \pmset{n}$, the \emph{sensitivity} of $f$ at $x$,
denoted $\s_f(x)$, is defined to be the number of neighbors
$y$ of $x$ in the Boolean hypercube
(i.e., $y$ is obtained by flipping \emph{exactly} one bit of $x$)
such that $f(y)\neq f(x)$. The sensitivity $\s(f)$ of $f$ is
$\max_{x}\{\s_f(x)\}$. The \emph{average sensitivity} $\as(f)$ of $f$
is defined to be $\E_x[\s_f(x)]$. By the linearity of expectation observe that 
\begin{align*}
  \E_{x}\left[\s_f(x)\right]=\sum_{i=1}^n \Pr_x\left[f(x)\neq f(x^{(i)})\right]
  =\sum_{i=1}^n\Inf_i(f)=\Inf(f),
\end{align*}
so the average sensitivity of $f$ equals the total influence of $f$.
As a result, the FEI conjecture asks if
\(\ent(\hat{f}^2)\leq C\cdot  \as(f)\) for every Boolean function $f$.
\paragraph{Certificate complexity}
For $x \in \pmset{n}$,
the \emph{certificate complexity} of $f$ at $x$,
denoted $\certificate(f,x)$,
is the minimum number of bits in $x$ that needs to be fixed
to ensure that the value of $f$ is constant.
The certificate complexity $\certificate(f)$ of $f$ is
$\max_x\{\certificate(f,x)\}$.
The minimum certificate complexity of $f$ is
\(\certificate_{\min}(f)=\min_x\{\certificate(f,x)\}\).
The $0$-certificate complexity $\certificate^{0}(f)$ of $f$ is
$\max_{x : f(x) = 1}\{\certificate(f,x)\}$.
Similarly, the $1$-certificate complexity
$\certificate^{1}(f)$ of $f$ is
$\max_{x : f(x) = -1}\{\certificate^{1}(f,x)\}$.
Observe that for every $x\in \pmset{n}$,
$\s(f,x)\leq \certificate(f,x)$.
This gives $\s(f)\leq \certificate(f)$ and
$\as(f)\leq \ac(f)$ where $\ac(f)$ denotes
the \emph{average} certificate complexity of $f$.
As before, the average is taken with respect to
the uniform distribution on $\pmset{n}$. 
\paragraph{Parity-certificate complexity}
Analogously, we  define the \emph{parity-certificate complexity}
$\certificate^{\oplus}(f,x)$ of $f$ at $x$ as the minimum number of
parities on the input variables one has to fix in order to fix 
the value of $f$ at $x$, i.e., 
\begin{align*}
  \certificate^{\oplus}(f,x) \coloneqq \min\{\mathsf{co\text{-}dim}(H) \mid H \text{ is an affine subspace on which }f\text{ is constant and }x \in H \},
\end{align*} 
where $\mathsf{co\text{-}dim}(H)$ is
the \emph{co-dimension} of the affine subspace $H$.
It is easily seen that \(\certificate^{\oplus}(f,x) \leq \certificate(f,x)\).
We also define
\(\certificate^{\oplus}(f) \coloneqq \max_x\{\certificate^{\oplus}(f,x)\}\),
and
\(\certificate_{\min}^{\oplus}(f) \coloneqq \min_x\{\certificate^{\oplus}(f,x)\}\).
\paragraph{Unambiguous certificate complexity}
We now define the \emph{unambiguous certificate complexity} of $f$.
Let $\tau\colon [n]\rightarrow\{-1,1,\ast \}$ be a partial assignment.
We refer to
\begin{align*}
  S_\tau=\left\{x\in \pmset{n}: x_i=\tau(i) \text{ for every } i\in [n]\setminus \tau^{-1}(\ast)\right\}
\end{align*}
as the subcube generated by $\tau$.
We call $C \subseteq \pmset{n}$ a \emph{subcube} of $\pmset{n}$
if there exists a partial assignment $\tau$ such that
$C=S_\tau$ and the {co-dimension} of $C$ is
the number of bits fixed by $\tau$, i.e.,
\(\mathsf{co\text{-}dim}(C)=|\{i\in [n]:\tau(i)\neq \ast\}|\). 
A set of subcubes $\mathscr{C}= \{C_1,\ldots ,C_m\}$ \emph{partitions}
$\pmset{n}$ if the subcubes are disjoint and they cover $\pmset{n}$,
i.e., $C_i\cap C_j=\emptyset$ for $i\neq j$ and $\cup_{i} C_i=\pmset{n}$.  

An \emph{unambiguous certificate} $\mathcal{U}=\{C_1,\ldots,C_m\}$,
also referred to as a \emph{subcube partition},
is a set of subcubes partitioning $\pmset{n}$.
We say $\mathcal{U}$ computes a Boolean function
$f\colon\pmset{n}\rightarrow \pmset{}$ if $f$ is constant on each $C_i$
(i.e., $f(x)$ is the same for all $x\in C_i$).
For an unambiguous certificate $\mathcal{U}$,
the \emph{unambiguous certificate complexity} $\UC(\mathcal{U},x)$
on input $x$ equals $\mathsf{co\text{-}dim}(C_i)$
for the $C_i$ satisfying $x\in C_i$.
Define the \emph{average unambiguous certificate complexity} of $f$
with respect to $\mathcal{U}$ as
\begin{align*}
  \aUC(f,\mathcal{U}) \coloneqq \E_x[\UC(\mathcal{U},x)].
\end{align*}
Then, the \emph{average unambiguous certificate complexity} of $f$
is defined as  
\begin{align*}
  \aUC(f)\coloneqq \min_{\mathcal{U}} \aUC(f,\mathcal{U}),
\end{align*}
where the minimization is over all unambiguous certificates for $f$.
Finally, the \emph{unambiguous certificate complexity of $f$} is
\begin{align*}
  \UC(f) \coloneqq \min_{\mathcal{U}} \max_x \UC(\mathcal{U},x).
\end{align*}
Note that since unambiguous certificates are more restricted
than general certificates, we have $\certificate(f)\leq \UC(f)$.

An unambiguous $\oplus$-certificate
$\mathcal{U} = \{C_1,\ldots ,C_m\}$ for $f$ is defined to be
a collection of monochromatic \emph{affine subspaces}
that together partition the space $\pmset{n}$.
It is easily seen that a subcube is also an affine subspace.
Analogously, for an unambiguous $\oplus$-certificate $\mathcal{U}$,
on an input $x$, $\UC^\oplus(\mathcal{U},x) := \mathsf{co\text{-}dim}(C_i)$
for the $C_i$ satisfying $x\in C_i$, and
\(\aUC^\oplus(f,\mathcal{U}) := \E_x[\UC^\oplus(\mathcal{U},x)] \).
Similarly, we define $\aUC^\oplus(f)$ and $\UC^\oplus(f)$. 
\paragraph{$\dnf$s}
A $\dnf$ (disjunctive normal form) is
a disjunction ($\bor$) of conjunctions ($\band$s) of variables and
their negations. An \emph{unambiguous $\dnf$} is a $\dnf$ that
satisfies the additional property that$\colon$
on every $(-1)$-input, exactly one of the conjunctions outputs $-1$.
\paragraph{Approximate degree}
The $\varepsilon$-\emph{approximate degree} of
$f\colon\pmset{n}\rightarrow \reals$, denoted ${\deg}_{\varepsilon}(f)$,
is defined to be the minimum degree among all multilinear real polynomials
$p$ such that $|f(x)-p(x)| \leq \varepsilon$ for all $x \in \pmset{n}$.
Usually $\varepsilon$ is chosen to be $1/3$,
but it can be chosen to be any constant in $(0,1)$,
without significantly changing the model. 
\paragraph{Deterministic decision trees}
A deterministic decision tree for
$f\colon\pmset{n}\rightarrow \pmset{}$ is a rooted binary tree
where each node is labelled by $i\in [n]$ and the leaves are labelled
with an output bit $\pmset{}$. On input $x\in \pmset{n}$,
the tree proceeds at a node with label $i$ by evaluating the bit $x_i$
and continuing with the subtree corresponding to the value of $x_i$.
Once a leaf is reached, the tree outputs a bit.
We say that a deterministic decision tree computes $f$
if for all $x \in \pmset{n}$ its output equals $f(x)$. 

A parity-decision tree for $f$ is similar to a deterministic decision tree,
except that each node in the tree is labelled by a subset $S\subseteq [n]$.
On input $x\in \pmset{n}$, the tree proceeds at a node with label $S$
by evaluating the parity of the bits $x_i$ for $i\in S$ and
continues with the subtree corresponding to the value of
$\oplus_{i\in S}x_i$. Note that if the subsets at each node
have size $|S|=1$, then we get the standard deterministic decision tree model.
\paragraph{Randomized decision trees}
A randomized decision tree for $f$ is a probability distribution $\mu$
over deterministic decision trees for $f$.
On input $x$, a decision tree is chosen according to $\mu$,
which is then evaluated on $x$.
The complexity of the randomized tree is
the largest depth among all deterministic trees with non-zero probability
of being sampled according to $\mu$.
One can similarly define a randomized parity-decision tree
as a probability distribution over
deterministic parity-decision trees for $f$.

We say that a randomized decision tree computes $f$ with bounded-error 
if for all $x \in \pmset{n}$ its output equals $f(x)$ with probability
at least $2/3$. $R_2(f)$ (respectively, $R_2^\oplus(f)$)
denotes the complexity of the optimal randomized (respectively, parity)
decision tree that computes $f$ with bounded-error,
i.e., errs with probability at most $1/3$.
\paragraph{Information Theory}
We need some preliminaries from Information theory.
We start with the following consequence of the law of large numbers,
called the \emph{Asymptotic Equipartition Property (AEP)} or
\emph{the Shannon-McMillan-Breiman theorem}. See Chapter~3 in
the book \cite{CT91} for more details. 
\begin{theorem}[Asymptotic Equipartition Property (AEP) Theorem]
  \label{thm:aep}
  Let $\mathbf{X}$ be a random variable drawn from a distribution $P$ and
  suppose $\mathbf{X_1, X_2, \ldots ,X_M}$ are independently and
  identically distributed copies of\, $\mathbf{X}$, then 
  \begin{align*}
  -\frac{1}{M} \log P(\mathbf{X}_1,\mathbf{X}_2,\ldots ,\mathbf{X}_M)
  \longrightarrow \ent(\mathbf{X})
  \end{align*}
  in probability as $M \to \infty$. 
\end{theorem}
We now recall the notion of typical sequences.  
\begin{definition}
  \label{def:typical-set}
  Fix $\varepsilon\geq 0$. The \emph{typical set}
  $T^{(M)}_\varepsilon(\mathbf{X})$ with respect to a distribution $P$
  is defined to be the set of sequences
  $(x_1,x_2,\ldots ,x_M) \in \mathbf{X_1 \times X_2 \times \cdots \times X_M}$
  such that 
  \begin{align*}
    2^{-M(\mathbb{H}(\mathbf{X})+\varepsilon)} \leq P(x_1,x_2,\ldots ,x_M) \leq 2^{-M(\mathbb{H}(\mathbf{X}) - \varepsilon)}.
  \end{align*}
\end{definition}
The following properties of the typical set follows from the AEP. 
\begin{theorem}[{\cite[Theorem~3.1.2]{CT91}}]
  \label{thm:typical-set}
  Let $\varepsilon\geq 0$ and $T^{(M)}_\varepsilon(\mathbf{X})$
  be a typical set with respect to $P$, then
  \begin{enumerate}[label=(\roman*)]
  \item\label{ts-i} \( |T^{(M)}_\varepsilon(\mathbf{X})| \leq 2^{M(\mathbb{H}(\mathbf{X})+\varepsilon)} \). 
  \item\label{ts-ii} Suppose $x_1,\ldots,x_M$ are drawn i.i.d.~according
    to $\mathbf{X}$, then   
    \begin{align*}
      \Pr \left[(x_1,\ldots,x_M)\in T^{(M)}_\varepsilon(\mathbf{X})\right] \geq 1 - \varepsilon,
    \end{align*}
    for $M$ sufficiently large. 
  \item \(|T^{(M)}_\varepsilon(\mathbf{X})| \geq (1-\varepsilon)2^{M(\mathbb{H}(\mathbf{X})-\varepsilon)}\), for $M$ sufficiently large. 
  \end{enumerate}
\end{theorem}
We also require the following stronger version of typical sequences
and asymptotic equipartition property.    
\begin{definition}[{\cite[Chapter~11, Section~2]{CT91}}]
  \label{def:strongtypical-set}
  Let $\mathbf{X}$ be a random variable drawn according to
  a distribution $P$. Fix $\varepsilon > 0$.
  The \emph{strongly typical set} $T_{\varepsilon}^{\ast (M)}(\mathbf{X})$
  is defined to be the set of sequences
  \(\rho = (x_1,x_2,\ldots ,x_M) \in \mathbf{X_1 \times X_2 \times \cdots \times X_M}\) such that for every $x$, $N(x;\rho)=0$ if $P(x) = 0$, and otherwise
  \begin{align*}
    \left | \frac{N(x;\rho)}{M}- P(x) \right | \leq \frac{\varepsilon}{|\mathbf{X}|},
  \end{align*}
  where $N(x;\rho)$ is defined as the number of occurrences of $x$ in $\rho$.
\end{definition}
The strongly typical set shares similar properties with
its (weak) typical counterpart which we state now.
See \cite[Chapter~11, Section~2]{CT91} for a proof of this theorem. 
\begin{theorem}[Strong AEP Theorem]
  \label{thm:strongaep}
  Following the notation in Definition~\ref{def:strongtypical-set},
  let $T^{\ast (M)}_\varepsilon(\mathbf{X})$ be a strongly typical set.
  Then, there exists $\delta > 0$ such that $\delta \to 0$ as
  $\varepsilon \to 0$, and the following hold$\colon$   
  \begin{enumerate}[label=(\roman*)]
  \item\label{sts-i} Suppose $x_1,\ldots,x_M$ are drawn i.i.d.~according to
    $\mathbf{X}$, then 
    \begin{align*}
      \Pr\left[(x_1,\ldots,x_M)\in T^{\ast (M)}_\varepsilon(\mathbf{X})\right] \geq 1 - \varepsilon,
    \end{align*}
    for $M$ sufficiently large.
  \item\label{sts-ii} If $(x_1,\ldots,x_M) \in T^{\ast (M)}_\varepsilon(\mathbf{X})$, then 
    \begin{align*}
      2^{-M(\mathbb{H}(\mathbf{X})+\delta)} \leq P(x_1,\ldots,x_M) \leq 2^{-M(\mathbb{H}(\mathbf{X})-\delta)}.
    \end{align*}
  \item\label{sts-iii} For $M$ sufficiently large, 
    \begin{align*}
      (1-\varepsilon)2^{M(\mathbb{H}(\mathbf{X})-\delta)} \leq |T^{\ast (M)}_\varepsilon(\mathbf{X})| \leq 2^{M(\mathbb{H}(\mathbf{X})+\delta)}.
    \end{align*}
  \end{enumerate} 
\end{theorem}

\section{Better bounds on Fourier entropy}
\label{sec:feiimprovement}
In this section we prove the main result of this paper$\colon$ 
a new improved upper bound on the Fourier entropy of Boolean functions.
It is well-known that $\Inf(f)$ lower bounds
many combinatorial measures associated with Boolean functions
such as decision tree depth, certificate complexity, sensitivity, etc.
Given the difficulty in resolving the FEI conjecture,
it is natural to wonder if  the Fourier entropy can be
upper bounded by these larger measures.
Indeed, Chakraborty et al.~\cite{CKLS16} established many bounds on
the Fourier entropy, including \emph{average parity-decision tree complexity}.
We improve on their bounds by showing an upper bound of
\emph{average unambiguous parity-certificate complexity}.
In order to keep the presentation clear, we first prove a weaker upper bound of
\emph{average unambiguous certificate complexity} $\aUC(f)$ on
$\mathbb{H}(\hat{f}^2)$. We then explain how to generalize the proof
to establish the stronger upper bound of
\emph{average unambiguous parity-certificate complexity} $\aUC^\oplus(f)$.

\subsection{Average unambiguous certificate complexity}
We recall an unambiguous certificate $\mathscr{C} = \{C_1,\ldots ,C_t\}$
for $f$ is a collection of monochromatic subcubes (with respect to $f$)
that together partition the hypercube $\pmset{n}$.
The \emph{average unambiguous certificate complexity}
of $f$ with respect to $\mathscr{C}$, denoted by $\aUC(f,\mathscr{C})$,
equals $\E_{x \in \pmset{n}}\left[\UC(\mathscr{C},x)\right]$. Further,
\(\aUC(f) = \min_{\mathscr{C}}\left\{\aUC(f,\mathscr{C})\right\}\).
We now prove the main theorem. 
\begin{theorem}[Restatement of Theorem~\ref{thm:subcube-partition-entropy-intro}]
  \label{thm:partition-entropy}
  For every Boolean function $f \colon \pmset{n}\rightarrow \pmset{}$, we have
  \begin{align*}
    \ent(\hat{f}^2) \leq 2\cdot \aUC(f).
  \end{align*}
\end{theorem}
Before we prove this inequality, we first give a sketch of the proof. 
Given $f \colon \pmset{n} \rightarrow \pmset{}$ and an unambiguous certificate 
$\mathscr{C}$ for $f$, our first step is to consider
the function $f^M\colon \pmset{Mn} \rightarrow \pmset{}$ defined as
the $M$-fold product of $f$. 
We then consider a random variable $\mathbf{C}$ (supported on $\mathscr{C}$)
and let $T^{(M)}_\delta(\mathbf{C})$ be a typical set associated with
$M$ i.i.d copies of $\mathbf{C}$. 
Based on the typical set, we define a $2^{2M\cdot (\aUC(f) + \delta)}$-sized
set $\mathcal{B}$ of Fourier coefficients of $f^M$ and show that
$\mathcal{B}$ has fairly large Fourier weight. 
We then consider the re-normalized entropy when restricted to
the Fourier coefficients in $\mathcal{B}$. 
By taking the limit ($M\rightarrow \infty$),
we obtain $\ent(\hat{f}^2)\leq O(\aUC(f))$.
We now fill in the details.
\begin{proof}
  It suffices to show that for every unambiguous certificate $\mathscr{C}$
  for $f$, we have
  \begin{align*}
    \ent(\hat{f}^2) \leq 2\cdot\aUC(f,\mathscr{C}).
  \end{align*}
  Let \(\mathscr{C} := \{C_1,\ldots,C_t\}\) be an
  unambiguous certificate of $f$.
  For every $C_i$, let $\tau(C_i)$ be the partial assignment
  \(\tau(C_i)\colon [n]\rightarrow \{-1,1,\ast\}\),
  corresponding to the bits fixed by $C_i$.
  Let $\supp(\tau(C_i))\subseteq [n]$ be the set of indices
  where $\tau(C_i)\neq \ast$. As a shorthand, we denote
  $|\supp(\tau(C_i))|$ by $|\tau(C_i)|$. 
  Consider the Boolean function \(f^M\colon \pmset{Mn}\rightarrow \pmset{}\)
  given by the \emph{M-fold iterated product} of $f$
  with itself over distinct variables, i.e.,
  \begin{align*}
    f^M(x^1,\ldots,x^M) 
    = f(x^1_{1},\ldots,x^1_{n})\cdot f(x^2_{1},\ldots,x^2_{n})\cdots f(x^M_{1},\ldots ,x^M_{n}),
  \end{align*}
  where \(x^i\in \pmset{n}\) for every \(i\in [M]\).
  First, observe that \(\ent(\widehat{f^M}^2) = M \cdot \ent(\hat{f}^2)\).
  Similarly, we also have
  \(\aUC(f^M,\mathscr{C}^M) = M \cdot \aUC(f,\mathscr{C})\).

  We now bound the Fourier entropy of $f^M$ by showing that there is
  a ``small'' set of Fourier coefficients of $f^M$
  whose total Fourier weight is approximately $1$. 
	
  Let $\mathbf{C}$ be a subcube-valued random variable that equals $C_i$ with
  probability $2^{-|\tau(C_i)|}$.\footnote{Since $\{C_1,\ldots,C_t\}$ are disjoint subcubes partitioning $\pmset{n}$, we have that $\sum_{i=1}^t2^{-|\tau(C_i)|}=1$.}
  Further, let  \(\mathbf{C}_1,\mathbf{C}_2,\ldots , \mathbf{C}_M\)
  be i.i.d random copies of $\mathbf{C}$. For a choice of $\delta>0$,
  let $T^{(M)}_\delta(\mathbf{C})$ be the typical set
  with respect to the distribution \(\left\{2^{-|\tau(C_i)|}\right\}_{i\in [t]}\).
  We now define a set $\mathcal{B}$ of Fourier coefficients of $f^M$,
  which we argue below to have large Fourier weight. 
  \begin{align}
    \label{eq:defnB}
    \mathcal{B} :=\Big\{ (S_1, \ldots ,S_M) \subseteq [n]^M \mid S_i\subseteq \supp(\tau(C_i)) \subseteq [n] \text{ for } i\in [M] \text{ and } (C_1, \ldots , C_M)\in T^{(M)}_\delta(\mathbf{C})  \Big\}.
  \end{align}
  Using Theorem~\ref{thm:typical-set} about typical sequences,
  we are now ready to bound the size of $\mathcal{B}$ as follows. 
  \begin{claim}
    \label{claim:good-sets}
    \begin{align*}
      \left|\mathcal{B}\right| \leq 2^{2M(\aUC(f,\mathscr{C})+\delta )}.
    \end{align*}
  \end{claim}
  \begin{proof}[Proof of Claim]
    We first bound the size of $T^{(M)}_\delta(\mathbf{C})$ and
    then count contributions of a typical sequence $(C_1,\ldots ,C_M)$
    to $\mathcal{B}$.
    For the first bound, by the properties of the AEP
    Theorem~\ref{thm:typical-set}~\ref{ts-i},
    the total number $|T^{(M)}_\delta(\mathbf{C})|$ of typical sequences 
    is at most $2^{ M(\ent(\mathbf{C})+\delta )}$. For the second bound,
    observe that $(C_1,\ldots ,C_M)$ contributes a set
    $(S_1,\ldots ,S_M)$ to $\mathcal{B}$ if and only if
    $S_i \subseteq \supp(\tau(C_i))$ for all $i \in [M]$. Therefore,
    the maximum possible contribution of a typical sequence is bounded by  
    \begin{align*}
      2^{|\tau(C_1)|+\cdots +|\tau(C_M)|} =
      \left(\Pr[\mathbf{C}_1=C_1,\ldots,\mathbf{C}_M=C_M]\right)^{-1} \leq
      2^{ M(\ent(\mathbf{C})+\delta )},
    \end{align*}
    where the equality is because the random variable $\mathbf{C}$
    is sampled according to the distribution
    $\left\{2^{-|\tau(C_i)|}\right\}_{i\in[t]}$ and the inequality follows
    from Definition~\ref{def:typical-set} of typical sets.
  
    Combining both the upper bounds, we get
    \(\left| \mathcal{B}\right| \leq 2^{2M(\ent(\mathbf{C}) + \delta)}\).
    Finally, by the definition of entropy, we have
    \begin{align*}
      \ent(\mathbf{C})=\sum_{i=1}^t 2^{-|\tau(C_i)|}\cdot |\tau(C_i)|=
      \frac{1}{2^n}\sum_{x\in \pmset{n}} \UC(\mathscr{C},x) = \aUC(f,\mathscr{C}),
    \end{align*}
    where the second equality used that $\{C_1,\ldots,C_t\}$
    formed an unambiguous certificate for $f$.
  \end{proof}
  
  We now claim that $\mathcal{B}$ is a ``small'' set of
  Fourier coefficients of $f^M$ that has ``large'' Fourier weight.
  In order to quantitatively prove this, we show that
  the Fourier coefficients that are \emph{not} in $\mathcal{B}$
  have total Fourier weight at most $\delta$. In what follows,
  with some abuse of notation,
  we will denote a set $S \subseteq [n]$ and the corresponding subset
  $\{x_i \mid i \in S\}$ of variables by the same set $S$.  
  \begin{claim}
    \label{claim:bad-sets}
    \begin{align*}
      \sum_{(S_1,S_2,\ldots,S_M) \not\in \mathcal{B}} \fcsq{f^M}{S_1\cup\cdots \cup S_M} \leq \delta.
    \end{align*}
  \end{claim}
  \begin{proof}[Proof of Claim]
    We saw earlier that $\mathscr{C}^M$ is an unambiguous certificate of $f^M$.
    Let $\rho \in \mathscr{C}^M$ be a certificate of $f^M$,
    and $\mathds{1}_{\rho}(z)$ be the $\bool$-valued function
    that is $1$ if and only if $z$ is consistent with the certificate $\rho$.
    Further we denote the value $f^M$ takes on any input consistent
    with $\rho$ by $f^M(\rho)$. We can then  express $f^M$ on an input
    $z\in \pmset{Mn}$ as follows
  \begin{align}
    \label{eq:f-expansion}
    f^M(z)  = \sum_{\rho \in \mathscr{C}^M} f^M(\rho) \cdot \mathds{1}_\rho(z) 
    = \sum_{\rho \in T^{(M)}_\delta(\mathbf{C})} f^M(\rho) \cdot \mathds{1}_\rho(z) + \underbrace{\sum_{\rho \not\in T^{(M)}_\delta(\mathbf{C})} f^M(\rho) \cdot \mathds{1}_\rho(z)}_{:=g(z)}.
  \end{align}
  For $(S_1,\ldots,S_M)\subseteq [n]^M$,
  consider the expansion of the Fourier coefficient
  \begin{align*}
    \fc{f^M}{S_1\cup \cdots\cup S_M} & = \E_{z} \left[f^M(z) \chi_{S_1\cup\cdots\cup S_M}(z)\right]\\
    &=\E_{z} \left[\sum_{\rho \in T^{(M)}_\delta(\mathbf{C})} f^M(\rho) \cdot \mathds{1}_\rho(z)\cdot \chi_{S_1\cup\cdots\cup S_M}(z)  + \sum_{\rho \not\in T^{(M)}_\delta(\mathbf{C})} f^M(\rho) \cdot \mathds{1}_\rho(z) \cdot \chi_{S_1\cup\cdots\cup S_M}(z)\right] \\
    & = \sum_{\rho \in T^{(M)}_\delta(\mathbf{C})} f^M(\rho) \cdot \E_z\left[ \mathds{1}_\rho(z)\cdot \chi_{S_1\cup\cdots\cup S_M}(z)\right]  + \sum_{\rho \not\in T^{(M)}_\delta(\mathbf{C})} f^M(\rho) \cdot \E_z\left[\mathds{1}_\rho(z) \cdot \chi_{S_1\cup\cdots\cup S_M}(z)\right]. 
  \end{align*}
  Now observe that for a fixed certificate $\rho$, we have
  \(\E_z\left[ \mathds{1}_\rho(z)\cdot \chi_{S_1\cup\cdots\cup S_M}(z)\right] \neq 0\)
  if and only if $\rho$ fixes the variables in $S_1\cup\cdots\cup S_M$. 
  By definition of $\mathcal{B}$ it now follows that if
  \((S_1,\ldots ,S_M) \notin \mathcal{B}\), then
  \(\E_z\left[ \mathds{1}_\rho(z)\cdot \chi_{S_1\cup\cdots\cup S_M}(z)\right] = 0\) 
  for \(\rho \in T^{(M)}_\delta(\mathbf{C})\),
  and thus \(\fc{f^M}{S_1\cup\cdots \cup S_M}\) gets contribution
  only from $\rho$ that are \emph{not} typical, i.e.,
  \(\rho \not\in T^{(M)}_{\delta}(\mathbf{C})\). 
  
  In this direction, consider the function $g(z)$ defined in
  Eq.~\eqref{eq:f-expansion}, which is $\{-1,1,0\}$-valued.
  Using the argument above, we have that if
  \((S_1,\ldots, S_M) \not\in \mathcal{B}\), then
  \(\fc{f^M}{S_1\cup\cdots \cup S_M} = \fc{g}{S_1\cup\cdots \cup S_M}\).
  Then, clearly,   
  \begin{align}
    \label{eq:sumofFourcoeffnotintypical}
    \sum_{(S_1,S_2,\ldots,S_M) \not \in \mathcal{B}} 
    \fcsq{f^M}{S_1\cup\cdots \cup S_M} \leq \sum_T \fcsq{g}{T}.
  \end{align}
  Moreover by Parseval's Theorem (Fact~\ref{fact:plancherel}),
  \(\sum_T \fcsq{g}{T} = \E_z\left[g(z)^2\right]\). Therefore,
  \begin{align*}
    \sum_{(S_1,S_2,\ldots,S_M) \not \in \mathcal{B}} 
    \fcsq{f^M}{S_1\cup\cdots \cup S_M} \leq \E_z\left[g(z)^2\right] =\Pr_{z}\left[ z\notin T^{(M)}_\delta(\mathbf{C})\right] \leq \delta,   
  \end{align*}
  where the first inequality uses Eq.~\eqref{eq:sumofFourcoeffnotintypical}
  and Parseval, the second equality is because $g(z)^2\in \{0,1\}$ and the
  last inequality follows from Theorem~\ref{thm:typical-set}~\ref{ts-ii}.
  \end{proof}
   We are now ready to bound the Fourier entropy of $f$ and
  complete the proof. We need the following well-known trick
  to bound entropy when the underlying distribution has large weight
  on a small support. Let $\mathcal{S}$ be a set of Fourier characters
  such that $\sum_{S \not\in \mathcal{S}} \fcsq{g}{S} = \delta$.
  In order to upper bound the Fourier entropy,
  we first express $\ent(\hat{g}^2)$ as follows
  \begin{align*}
  \ent(\hat{g}^2)=\sum_{S\in \mathcal{S}} \fcsq{g}{S} \log \Big(\frac{1}{\fcsq{g}{S}}\Big)+\sum_{S\notin \mathcal{S}} \fcsq{g}{S} \log \Big(\frac{1}{\fcsq{g}{S}}\Big).
  \end{align*}
  We renormalize the first expression in the sum by $(1-\delta)$ and
  the second expression by $\delta$. By doing so, we get
  \begin{align}
    \label{eq:fourier-entropy}
    \ent(\hat{g}^2) &= (1-\delta)\ent\left(\frac{\fcsq{g}{S}}{1-\delta} \colon S \in \mathcal{S}\right) 
      + \delta \ent\left(\frac{\fcsq{g}{S}}{\delta} \colon S\not\in \mathcal{S}\right) - \sum_{S\in \mathcal S}\widehat{g}(S)^2 \log (1-\delta) - \sum_{S\notin \mathcal S}\widehat{g}(S)^2 \log (\delta), \nonumber \\
      &=(1-\delta)\ent\left(\frac{\fcsq{g}{S}}{1-\delta} \colon S \in \mathcal{S}\right) 
      + \delta \ent\left(\frac{\fcsq{g}{S}}{\delta} \colon S\not\in \mathcal{S}\right) + \mathsf{H}(\delta), 
  \end{align}
  where the equality used \(\sum_{S\in \mathcal S}\fcsq{g}{S} = 1 - \delta\)
  and we denote \(\mathsf{H}(p) :=p\log\frac{1}{p} +(1-p)\log\frac{1}{1-p}\). 

  We now use Eq.~\eqref{eq:fourier-entropy} when applied to
  the function $f^M$ and set  $\mathcal{S}=\mathcal{B}$. We then obtain
  \begin{align*}
    M \cdot \ent(\hat{f}^2) = \ent(\widehat{f^M}^2) & \leq \log |\mathcal{B}|+ \delta\cdot \log |\{S: S\notin \mathcal{B}\}|+ \mathsf{H}(\delta) \leq 2 M(\aUC(f,\mathscr{C})+\delta) + \delta nM + \mathsf{H}(\delta),
  \end{align*}
  where the second inequality used Claims~\ref{claim:good-sets}
  and~\ref{claim:bad-sets}. Dividing by $M$ on both sides, we get 
  \begin{align*}
    \ent(\hat{f}^2) \leq  2\cdot(\aUC(f,\mathscr{C}) + \delta)+ \delta n + \frac{\mathsf{H}(\delta)}{M}.  
  \end{align*}
  By the AEP theorem we know that $\delta \to 0$ as $M \to \infty$.
  Therefore, allowing $M \to \infty$ and taking the limit gives us the theorem.
\end{proof}

\subsubsection{Extension to $\oplus$-certificate complexity}
\label{sec:parity-certificate}
We now discuss a strengthening of Theorem~\ref{thm:partition-entropy}
where we improve the upper bound to 
\emph{average unambiguous parity-certificate complexity} $\aUC^{\oplus}(f)$. 
Recall that an unambiguous $\oplus$-certificate
$\mathscr{C} = \{C_1,\ldots ,C_t\}$ for $f$ is
a collection of monochromatic \emph{affine subspaces}
that together partition the space $\pmset{n}$. 
(Observe that a subcube is a special type of affine subspace.)
Analogously, the \emph{average unambiguous $\oplus$-certificate complexity}
of $f$ with respect to $\mathscr{C}$, denoted by $\aUC^{\oplus}(f,\mathscr{C})$,
equals \(\E_x[\UC^{\oplus}(\mathscr{C},x)]\)
and \(\aUC^{\oplus}(f) \coloneqq \min_{\mathscr{C}} \aUC^{\oplus}(f,\mathscr{C})\).
Let $A_{i}$ be the set of parities fixed by $C_{i}$ for $i \in [t]$. A parity
is defined over a subset of variables and thus, naturally, can be viewed
as a vector in $\bool^n$.

Like in the proof of Theorem~\ref{thm:partition-entropy},
we study the $M$-fold iterated product of $\mathscr{C}$.
In order to find a ``small'' set of coefficients where
the Fourier weight is concentrated, we define $\mathcal{B}$ differently.
The Fourier expansion of $f^M$, given by Eq.~\eqref{eq:f-expansion}, 
suggests the following definition.

For a set $S \subseteq [n]$, define $\mathds{1}_{S} \in \bool^n$
to be the indicator vector representing $S$ 
(i.e., $\mathds{1}_{S}(j)=1$ if and only if $j\in S$).  
Let $(S_1,\ldots ,S_M)$ be an $M$-tuple where each $S_i\subseteq [n]$.  
Then, we define $\mathcal{B}$ by letting
$(S_1,\ldots,S_M) \in \mathcal{B}$ if and only if there exists
a typical sequence \((C_{i_1},\ldots ,C_{i_M}) \in T^{(M)}_\delta(\mathbf{C})\) 
such that for all $j \in [M]$,
\(\mathds{1}_{S_j} \in \mathsf{span}\langle A_{i_j}\rangle\)
(where by $\mathsf{span}\langle A_{i_j} \rangle$,
we mean the linear $\mathbb{F}_2$-span of parities in $A_{i_j}$,
when viewed as vectors).
We recall that $A_{i_j}$ is the set of parities fixed by $C_{i_j}$. 
Observe that the earlier definition of $\mathcal{B}$ is now a special case of
this. 
With this definition of $\mathcal{B}$ the rest of the proof
follows similarly to establish the following generalization.
\begin{theorem}[Restatement of Theorem~\ref{thm:subspace-partition-entropy-intro}]
  \label{thm:subspace-partition-entropy}
  Let $f \colon\pmset{n}\rightarrow \pmset{}$ be any Boolean function. Then,
  \begin{align*}
    \ent(\hat{f}^2) \leq 2\cdot \aUC^{\oplus}(f).
  \end{align*}
\end{theorem}
We remark that as a corollary to the theorem it follows
that the FEI conjecture holds for the class of functions $f$
with bounded $\aUC^{\oplus}(f)$, and $\Inf(f) \geq 1$.
That is, for a Boolean function $f$ with $\Inf(f) \geq 1$, we have 
\begin{align*}
  \ent(\hat{f}^2) \leq 2 \cdot\aUC^{\oplus}(f)\cdot\Inf(f).
\end{align*}
We note that the reduction in \cite[Proposition E.2]{www14} shows that
removing the requirement $\Inf(f) \geq 1$ from the above inequality will prove
the FEI conjecture for all Boolean functions with $\Inf(f) \geq \log n$. 
Furthermore, if we could show the FEI conjecture for Boolean functions $f$
where $\aUC^{\oplus}(f) = \omega(1)$ is a slow-growing function of $n$,
again the padding argument in \cite{www14} shows that we would be able
to establish the FEI conjecture for all Boolean functions. 

\subsection{Further improving the bound: unambiguous $\dnf$s}
In this section, we further improve the bounds obtained
in the previous section by considering the $\dnf$ representation of $f$.
Let $\mathscr{C} := \{C_1,\ldots,C_t\}$ be an unambiguous certificate of $f$.
It covers both $1$ and $-1$ inputs of $f$.
Suppose $\{C_1,\ldots , C_{t_1}\}$ for some $t_1 < t$
is a subcube partition of $f^{-1}(-1)$ and
$\{C_{t_1 +1} , \ldots , C_t\}$ is a subcube partition of $f^{-1}(1)$.
To represent $f$, it suffices to consider $\displaystyle\bigvee_{i=1}^{t_1}C_i$.
This is a $\dnf$ representation of $f$ with an additional property
that  $\{C_1,\ldots ,C_{t_1}\}$ forms a partition of $f^{-1}(-1)$.
We call such a representation an \emph{unambiguous} $\dnf$. 
In general, a $\dnf$ representation need not satisfy this additional property.

As before, for every $C_i$, let $\tau(C_i)$ be the partial assignment
$\tau(C_i)\colon [n]\rightarrow \{-1,1,\ast\}$ corresponding to the bits
fixed by $C_i$, and $|\tau(C_i)|$ denotes the size of the set of indices
where $\tau(C_i)\neq \ast$. We know that
\begin{align*}
  \Inf(f) \leq \aUC(f,\mathscr{C}) = \sum_{i=1}^t|\tau(C_i)|\cdot 2^{-|\tau(C_i)|} = \sum_{i=1}^{t_1}|\tau(C_i)|\cdot 2^{-|\tau(C_i)|} + \sum_{i=t_1 +1}^t |\tau(C_i)|\cdot 2^{-|\tau(C_i)|}.
\end{align*}
However, the following better bound on the influence is easily seen
from the equivalence of influence and average sensitivity. 
\begin{proposition}
  \label{prop:inf-ub}
  Let $f\colon\pmset{n} \to \pmset{}$ be a Boolean function and
  $\mathscr{C}=\{C_1,\ldots ,C_t\}$ be an unambiguous certificate of $f$. Then, 
  \begin{align*}
    \Inf(f) \leq 2\cdot \min\left\{\sum_{i=1}^{t_1}|\tau(C_i)|2^{-|\tau(C_i)|}, \sum_{i=t_1 +1}^t |\tau(C_i)|2^{-|\tau(C_i)|} \right\} \leq \aUC(f,\mathscr{C}).
  \end{align*}
\end{proposition}
Therefore, it's natural to ask whether we can improve the upper bound
in Theorem~\ref{thm:partition-entropy} to
\begin{align*}
  \min\left\{\sum_{i=1}^{t_1}|\tau(C_i)|2^{-|\tau(C_i)|}, \sum_{i=t_1 +1}^t |\tau(C_i)|2^{-|\tau(C_i)|} \right\} \mbox{ (up to a constant factor)}.
\end{align*}
We remark that the quantity $\sum_{i=1}^{t_1}|\tau(C_i)|2^{-|\tau(C_i)|}$,
in a certain sense, is ``average unambiguous $1$-certificate complexity'' and,
similarly,  $ \sum_{i=t_1 +1}^t |\tau(C_i)|2^{-|\tau(C_i)|}$ captures
``average unambiguous $0$-certificate complexity''.
Moreover, answering this question is of significance to tackling
Mansour's conjecture positively.
For more details, see Section~\ref{sec:Mansour}.   

Building on our ideas from the main theorem in the previous section
and using a \emph{stronger} version of the AEP theorem
(Theorem~\ref{thm:strongaep}) we nearly establish
the aforementioned improved bound of the smaller quantity between 
``average unambiguous $1$-certificate complexity'' and
``average unambiguous $0$-certificate complexity'' on the Fourier entropy. 
Formally, we prove the following.
\begin{theorem}
  \label{thm:unambiguous-dnf}
  Let $f \colon \pmset{n} \to \pmset{}$ be a Boolean function
  and $\mathscr{C} = \{C_1, \ldots ,C_t\}$ be an unambiguous certificate
  of $f$ such that $\{C_1,\ldots , C_{t_1}\}$ for some $t_1 < t$
  is a subcube partition of $f^{-1}(-1)$ and
  $\{C_{t_1 +1} , \ldots , C_t\}$ is a subcube partition of $f^{-1}(1)$.
  Further, $ p := \Pr_x[f(x) =1]$. Then,
  \begin{align*}
  \ent(\hat{f}^2) \leq
  \begin{dcases*}
    2\left(\sum_{i=1}^{t_1} |\tau(C_i)| \cdot 2^{-|\tau(C_i)|} + p \cdot \max_{i \in \{1,\ldots ,t_1\}}|\tau(C_i)|\right), \\ 
    2\left(\sum_{i=t_1+1}^{t} |\tau(C_i)| \cdot 2^{-|\tau(C_i)|} + (1-p) \cdot \max_{i \in \{t_1+1,\ldots,t\}}|\tau(C_i)|\right). 
  \end{dcases*}
  \end{align*}
\end{theorem}
\begin{remark}
  \label{rem:one-sided-bound}
  An unsatisfactory part of the bound above is the presence of the term
  $p\cdot \max |\tau(C_i)|$.
  This is because when $\max |\tau(C_i)|$ term is \emph{not} weighted by $p$,
  it becomes a trivial bound on entropy.
  Ideally, one would like to get rid of this term altogether,
  possibly at the expense of increasing the constant factor
  in the first summand.

  Moreover, a similar bound when $\{C_1,\ldots ,C_{t_1} \}$
  is an arbitrary DNF representation of $f$
  (i.e., when the $C_i$s need not be disjoint)
  suffices to prove Mansour's conjecture (Conjecture~\ref{con:Mansour}).
  For details, see Section~\ref{sec:Mansour}.  
\end{remark}
\begin{proof}[Proof of Theorem~\ref{thm:unambiguous-dnf}]
  We only prove the inequality
  \begin{align}\label{eq:improved-inequality}  
    \ent(\hat{f}^2) \leq 2\cdot \left(\sum_{i=1}^{t_1} |\tau(C_i)| \cdot 2^{-|\tau(C_i)|} + p \cdot \max_{i \in [t_1]}|\tau(C_i)|\right).
  \end{align}
  In order to obtain the other inequality in the theorem statement,
  we can replace $f$ by $\neg f$ and observe that Fourier entropy of $f$
  and $\neg f$ is the same. We recall $\{C_1,\ldots , C_{t_1}\}$
  is a subcube partition of $f^{-1}(-1)$ and $p = \Pr_x[f(x) =1]$. 

  Like in the proof of Theorem~\ref{thm:partition-entropy},
  we define a random variable over subsets of $\pmset{n}$
  such that the support of the random variable partitions the space.
  In particular, define a random variable $\mathbf{C}$ that equals
  one of the $C_i$, $1 \leq i \leq t_1$, with probability $2^{-|\tau(C_i)|}$
  and equals the set $f^{-1}(1) := \{x \mid f(x) = 1\}$ with probability $p$.\footnote{Observe that $\sum_{i=1}^{t_1}2^{-|\tau(C_i)|}=1-p$, hence $\mathbf{C}$ is a valid random variable.}
  We consider the $M$-fold iterated product of this partition
  $\{C_1,\ldots,C_{t_1},f^{-1}(1)\}$ of $\pmset{n}$,
  which gives us a partition $\mathcal{P}$ of the space $\pmset{Mn}$
  such that $f^M$ is constant on each part.
  Overall there are $(t_1+1)^M$ parts in $\mathcal{P}$.
  For a part $\rho \in \mathcal{P}$, let $\mathds{1}_{\rho}(z)$ be
  the $\bool$-valued function that is $1$ if and only if
  $z \in \pmset{Mn}$ is in $\rho$. Furthermore, for $\rho\in \mathcal{P}$,
  let  $f^M(\rho)\in \pmset{}$ be the value of $f^M$ on the part $\rho$.
  Then, on an input $z \in \pmset{Mn}$, 
  \begin{align}
    \label{eq:folded-f-expansion}
    f^M(z)  = \sum_{\rho \in \mathcal{P}} f^M(\rho) \cdot \mathds{1}_\rho(z) 
    = \sum_{\rho \in T^{\ast (M)}_\varepsilon(\mathbf{C})} f^M(\rho) \cdot \mathds{1}_\rho(z) + \sum_{\rho \not\in T^{\ast(M)}_\varepsilon(\mathbf{C})} f^M(\rho) \cdot \mathds{1}_\rho(z).
  \end{align}
  where $T_{\varepsilon}^{\ast (M)}(\mathbf{C})$ is the strongly typical set
  with respect to the distribution of $\mathbf{C}$ and $\varepsilon > 0$.
  Similar to the proof of Theorem~\ref{thm:partition-entropy},
  we now define a set $\mathcal{B}$ of Fourier coefficients of $f^M$,
  which we argue below to be of ``small'' size and ``large'' Fourier weight. 
  \begin{align}
    \mathcal{B} :=\Big\{ (S_1, \ldots ,S_M) \subseteq [n]^M \mid
    \fc{\mathds{1}_\rho}{S_1\cup\cdots \cup S_M} \neq 0 \mbox{ for some }\rho \in T^{\ast (M)}_\varepsilon(\mathbf{C})  \Big\}.
    \end{align}
   \begin{claim}
    \label{claim:dnf-good-sets}
    \begin{align*}
    \left|\mathcal{B}\right| \leq  2^{2M \delta}\cdot 2^{2M\left(\sum_{i=1}^{t_1} |\tau(C_i)| (2^{-| \tau(C_i)|} + \varepsilon') ~+~  (p +\varepsilon' ) \max_{i=1}^{t_1} |\tau(C_i)| \right)},
    \end{align*}
    where $\varepsilon' = \varepsilon/(t_1 +1)$ and $\delta >0$ is such that
    $\delta \to 0$ as $\varepsilon \to 0$. 
  \end{claim}
  \begin{proof}[Proof of Claim]
    Similar to Claim~\ref{claim:good-sets} we first bound
    the size of the strongly typical set $T^{\ast (M)}_\varepsilon(\mathbf{C})$
    and then count contributions of a strongly typical sequence
    $\rho$ to $\mathcal{B}$. The contribution of $\rho$ can
    be upper bounded by the sparsity of the Fourier expansion of
    the indicator function $\mathds{1}_\rho$. Therefore, 
    \begin{align*}
      |\mathcal{B}| \leq |T^{\ast (M)}_\varepsilon(\mathbf{C})| \cdot \max_{\rho \in \mathcal{P}}\{\mbox{Fourier sparsity of }\mathds{1}_\rho\}.
    \end{align*}
    By the strong AEP theorem, Theorem~\ref{thm:strongaep}~\ref{sts-iii},
    we have \(|T^{\ast (M)}_\varepsilon(\mathbf{C})| \leq 2^{M (\mathbb{H}(\mathbf{C}) + \delta)}\). We now bound the Fourier sparsity of $\mathds{1}_\rho$
    for a strongly typical $\rho$. By Parseval's Theorem
    (Fact~\ref{fact:plancherel}), we have
    \begin{align*} 
      \sum_{(S_1,\ldots ,S_M)} \fcsq{\mathds{1}_\rho}{S_1\cup\cdots \cup S_M} = \Pr_z\left[\mathds{1}_\rho(z) = 1\right]. 
    \end{align*}
    Therefore, 
    \begin{align}
      \label{eq:sparsitypperboundbystrongtypical}
      \mathsf{sparsity}(\widehat{\mathds{1}_\rho}) \cdot \min_{(S_1,\ldots ,S_M)}\fcsq{\mathds{1}_\rho}{S_1\cup\cdots \cup S_M} \leq  \Pr_z\left[\mathds{1}_\rho(z) = 1\right] \leq 2^{-M(\mathbb{H}(\mathbf{C}) - \delta)}, 
    \end{align}
    where the second inequality follows from the strong AEP theorem,
    Theorem~\ref{thm:strongaep}~\ref{sts-ii}, and
    $\mathsf{sparsity}(\widehat{\mathds{1}_\rho})$ denotes
    the Fourier sparsity of $\mathds{1}_\rho$.
    To obtain an upper bound on the sparsity we now establish
    a lower bound on the magnitude of the non-zero Fourier coefficients
    of $\mathds{1}_\rho$.

    Since $\rho \in \{C_1, \ldots , C_{t_1},f^{-1}(1)\}^M$,
    the Fourier expansion of the indicator function $\mathds{1}_\rho$
    is a product of Fourier expansion of indicator functions of
    $C_1, \ldots , C_{t_1}$ and $f^{-1}(1)$. The indicator function of
    $C_i$ is just an $\band$ over the appropriate subset of literals,
    and the indicator function of $f^{-1}(1)$ is
    $\displaystyle 1 - \sum_{i=1}^{t_1} \mathds{1}_{C_i}$.
    Hence, a lower bound on the magnitude of
    a non-zero Fourier coefficient of $\mathds{1}_\rho$ is given by
    \begin{align*}
      \prod_{i=1}^{t_1} \left(\frac{1}{2^{|\tau(C_i)|}}\right)^{N(C_i;\rho)} \cdot \left(\frac{1}{2^{\max_{i \in \{1, \ldots ,t_1\}} \left\{|\tau(C_i)|\right\}}}\right)^{N(f^{-1}(1);\rho)},
    \end{align*}
    where $N(C_i ; \rho)$ (respectively, $N(f^{-1}(1);\rho)$)
    is the number of occurrences of $C_i$ (respectively, $f^{-1}(1)$)
    in $\rho$. Therefore, squaring the above lower bound and
    using it in Eq.~\eqref{eq:sparsitypperboundbystrongtypical} we obtain
    \begin{align}
      \label{eq:sparsity-ub}
      \mathsf{sparsity}(\widehat{\mathds{1}_\rho}) & \leq \frac{1}{2^{M(\mathbb{H}(\mathbf{C}) - \delta)}}\cdot \prod_{i=1}^{t_1} 2^{2 |\tau(C_i)| \cdot N(C_i;\rho)} \cdot 2^{2\max_{i \in [t_1]}\left\{ |\tau(C_i)| \right\}\cdot N(f^{-1}(1);\rho) }, \nonumber \\
      & = \frac{1}{2^{M(\mathbb{H}(\mathbf{C}) - \delta)}}\cdot 2^{2 \cdot \left(\sum_{i=1}^{t_1}|\tau(C_i)|\cdot N(C_i;\rho) ~+~ \max_{i\in [t_1]}\left\{|\tau(C_i)|\right\} \cdot N(f^{-1}(1);\rho) \right)}, \nonumber \\
      & \leq \frac{1}{2^{M(\mathbb{H}(\mathbf{C}) - \delta)}}\cdot 2^{2M\cdot \left(\sum_{i=1}^{t_1}|\tau(C_i)|\cdot \left(2^{-|\tau(C_i)|} + \varepsilon' \right) ~+~ \max_{i\in [t_1]}\left\{|\tau(C_i)| \right\} \cdot \left(p + \varepsilon'\right) \right) },
    \end{align}
    where the last inequality follows from the strong AEP theorem
    and $\varepsilon' = \varepsilon/(t_1+1)$.
    Now using inequality~\eqref{eq:sparsity-ub} along with the bound
    on the size of the strongly typical set
    $T^{\ast (M)}_\varepsilon(\mathbf{C})$ we obtain the claimed bound
    on the size of $\mathcal{B}$, thereby completing the proof of the claim.
  \end{proof}
  The next claim shows that the Fourier coefficients \emph{not} in
  $\mathcal{B}$ have low total Fourier weight.
  \begin{claim}
    \label{claim:dnf-bad-sets}
    \begin{align*}
      \sum_{(S_1,S_2,\ldots,S_M) \not\in \mathcal{B}} 
      \fcsq{f^M}{S_1\cup\cdots \cup S_M} \leq \varepsilon .
    \end{align*}
  \end{claim}
  \begin{proof}[Proof of Claim]
    We omit the proof as it is similar to the proof of Claim~\ref{claim:bad-sets}. 
  \end{proof}
  Now bounding the Fourier entropy of $f^M$ as in the proof of
  Theorem~\ref{thm:partition-entropy} and taking the limit as $M \to \infty$,
  we obtain the following bound on the Fourier entropy of $f$,
  \begin{align*}
    \ent(\hat{f}^2) \leq 2\cdot \left(\sum_{i=1}^{t_1} |\tau(C_i)| \cdot 2^{-|\tau(C_i)|} + p \cdot \max_{i \in [t_1]}|\tau(C_i)|\right).
  \end{align*}
\end{proof}

\subsubsection{Extension to affine subspace partition}
Since our techniques are oblivious to the change of basis,
again analogous to Theorem~\ref{thm:subspace-partition-entropy}
we obtain the following generalization of Theorem~\ref{thm:unambiguous-dnf}
to the setting when $\{C_1, \ldots ,C_t\}$ forms
a monochromatic affine subspace partition.
We state the generalization below without the proof. 
\begin{theorem}[Restatement of Theorem~\ref{thm:one-sided-subspace-partition-intro}]
  \label{thm:one-sided-subspace-partition}
  Let $f\colon\pmset{n} \to \pmset{}$ be a Boolean function and
  $\mathscr{C} = \{C_1, \ldots ,C_t\}$ be a monochromatic
  affine subspace partition of $\pmset{n}$ with respect to $f$
  such that $\{C_1,\ldots , C_{t_1}\}$ for some $t_1 < t$ is an affine subspace
  partition of $f^{-1}(-1)$ and $\{C_{t_1 +1} , \ldots , C_t\}$ is
  an affine subspace partition of $f^{-1}(1)$. Further, $ p := \Pr_x[f(x) =1]$. 
  Then,
  \begin{align*}
    \ent(\hat{f}^2) \leq
    \begin{dcases*}
     2\left(\sum_{i=1}^{t_1} \mathsf{co\text{-}dim}(C_i) \cdot 2^{-\mathsf{co\text{-}dim}(C_i)} + p \cdot \max_{i \in \{1,\ldots ,t_1\}}\mathsf{co\text{-}dim}(C_i)\right), \\ 
     2\left(\sum_{i=t_1+1}^{t} \mathsf{co\text{-}dim}(C_i) \cdot 2^{-\mathsf{co\text{-}dim}(C_i)} + (1-p) \cdot \max_{i \in \{t_1+1,\ldots ,t\}}\mathsf{co\text{-}dim}(C_i)\right).
   \end{dcases*}
  \end{align*}  
\end{theorem}

\subsection{Discussions on certificate complexity and Mansour's conjecture}
\label{sec:Mansour}
An important consequence of the FEI conjecture, among many,
is a positive answer to the long-standing conjecture of Mansour.  
\begin{conjecture}[Mansour's Conjecture~\cite{Mansour94}]
  \label{con:Mansour}
  Let $f \colon \pmset{n}\rightarrow \pmset{}$ be a Boolean function that is
  representable by a $t$-term $\dnf$. For every constant $\varepsilon > 0$,
  there exists a polynomial $p$ over $\pmset{}$ with sparsity
  $\mathsf{poly}(t)$ such that
  \(\E_x\left[(f(x)-p(x))^2\right] \leq \varepsilon\).
  (The exponent in $\mathsf{poly}(t)$ can depend on $1/\varepsilon$.) 
\end{conjecture}
In fact, Mansour's original conjecture states that sparsity of
the polynomial $p$ (in the conjecture above) can be taken to be
$t^{O\left(\log \frac{1}{\varepsilon}\right)}$.
Mansour's conjecture has a number of important consequences.
For instance, Gopalan et al.~\cite{GKK08} showed that
a positive answer to Mansour's conjecture (Conjecture~\ref{con:Mansour})
would imply that $\dnf$ formulas can be agnostically learned
in polynomial time up to any constant error parameter.
This has been a long-standing open question~\cite{gopalan:DNF}
in computational learning theory.

Earlier we saw that $\ent(\hat{f}^2)\leq O(\aUC^\oplus(f))$.
In fact we saw a somewhat stronger bound
(Theorem~\ref{thm:one-sided-subspace-partition}); informally,
capturing ``average unambiguous $1$-parity-certificate complexity''.
An interesting follow-up question in light of such results,
in particular Theorems~\ref{thm:partition-entropy}
and~\ref{thm:unambiguous-dnf}, is if one could strengthen
these upper bound to $O(\min\{\certificate^{0}(f),\certificate^{1}(f)\})$. 
In this section we observe that this bound on the Fourier-entropy
in terms of $\certificate^0(f),\certificate^1(f)$
(which is clearly weaker than the FEI conjecture)
suffices to establish Mansour's conjecture.

We remark that it was implicit in previous works~\cite{Kalai-blog,gopalan:DNF,KLW10,OWZ11} that one doesn't need the full power of the FEI conjecture
to establish Mansour's conjecture. However, the following
question$\colon$ \emph{What is the weakest form of the FEI conjecture that still implies Mansour's conjecture?},
was left unexplored. Our observation sharpens this relationship and
establishes Mansour's conjecture as a natural next step towards
resolving the FEI conjecture. 

We now formally state the weaker conjecture than the FEI conjecture
that suffices to imply Mansour's conjecture.
\begin{conjecture}
  \label{con:ent-vs-01-certificate}
  There exists a universal constant $\lambda > 0$ such that for every Boolean function $f\colon\pmset{n}\rightarrow \pmset{}$, we have
  \begin{align*}
    \ent(\hat{f}^2) \leq \lambda\cdot \min\left\{\certificate^{0}(f),\certificate^{1}(f)\right\}.
  \end{align*}
\end{conjecture}
It is weaker than the FEI conjecture because
\(\Inf(f) \leq \min\left\{\certificate^{0}(f),\certificate^{1}(f)\right\}\)~\cite{Bop97,Traxler09,Amano11}. 
To establish the implication we will use the following equivalent
form of Conjecture~\ref{con:ent-vs-01-certificate}.
\begin{conjecture}
  \label{con:ent-vs-1-certificate}
  There exists a universal constant $\lambda > 0$ such that for every
  Boolean function $f\colon\pmset{n}\rightarrow \pmset{}$, we have
  \begin{align*}
    \ent(\hat{f}^2) \leq \lambda\cdot \certificate^{1}(f).
  \end{align*}
\end{conjecture}
Before establishing the implication, we quickly argue that
Conjectures~\ref{con:ent-vs-01-certificate} and~\ref{con:ent-vs-1-certificate}
are equivalent.
It is easily seen that Conjecture~\ref{con:ent-vs-01-certificate} implies
Conjecture~\ref{con:ent-vs-1-certificate}. For the reverse direction, note that
both $f$ and $\neg f$ have the same Fourier-entropy,
while $\certificate^0(f)$ and $\certificate^1(f)$ reverse roles.  

We also note that Conjecture~\ref{con:ent-vs-1-certificate} is
readily seen to imply the weaker version of
``Mansour's conjecture for width'' posed by \cite{GMR13} which in turn
is known to imply Conjecture~\ref{con:Mansour}. 
However, for the sake of completeness, we present here
a direct argument to verify this implication.
It is also worthwhile to mention the recent works on DNF sparsification
\cite{LZ19,LWZ20} that can be used to establish the reverse implication too
(see \cite[Lemma 6.4]{GMR13}).  
\begin{proposition}
  \label{prop:ent-vs-cert-vs-Mansour}
  Conjecture~\ref{con:ent-vs-1-certificate} implies
  Conjecture~\ref{con:Mansour}. 
\end{proposition}
\begin{proof}
  Let $f$ be a $t$-term $\dnf$ and suppose $\delta_1,\delta_2> 0$
  are constants which we pick later.
  Let $g$ be a Boolean function obtained from $f$ by dropping all terms
  of length more than $\log (4t/\delta_1)$ in the $\dnf$ for $f$.
  Over the uniform distribution, each term of length greater than
  $\log (4t/\delta_1)$ equals $1$ with probability at most $\delta_1/4t$.
  Then $g(x)$ and $f(x)$ differ only if $x$ is accepted by a term of length
  greater than $\log (4t/\delta_1)$.
  Since there are at most $t$ terms, by a union bound, we get
  \begin{align}
    \label{eq:f-gdifference}
    \E_x\left[(f(x)-g(x))^2\right] \leq 4\cdot t\cdot \frac{\delta_1}{4t}=\delta_1.
  \end{align}
  Using Conjecture~\ref{con:ent-vs-1-certificate} for the function $g$, we get
  $\ent(\hat{g}^2) \leq \lambda \cdot \certificate^1(g) \leq \lambda \log (4t/\delta_1)$.
  We now construct a polynomial $p$ by defining its Fourier coefficients
  as follows:
  \begin{align*}
  \widehat{p}(S)= \begin{cases} 
    \widehat{g}(S) & \text{ if }|\widehat{g}(S)|\geq 2^{-\ent(\hat{g}^2)/(2\delta_2)}, \\
    0 & \text{ otherwise. } 
  \end{cases}
  \end{align*}
  By Parseval's identity (Fact~\ref{fact:plancherel}),
  it follows that the number of non-zero Fourier coefficients in $p$
  is at most $2^{\ent(\hat{g}^2)/\delta_2}$. Additionally we have that
  \begin{align}
    \label{eq:g-pdifference}
    \E_x\left[(g(x)-p(x))^2\right]=\sum_S \left(\fc{g}{S}-\fc{p}{S}\right)^2=\sum_{S\colon |\fc{g}{S}|<2^{-\ent(\hat{g}^2)/(2\delta_2)}} \fcsq{g}{S} \leq \delta_2,
  \end{align}
  where the inequality follows from the Markov inequality.  

  Using traingle inequality on $\ell_2$-norm,
  we can easily bound $\E[(f-p)^2]$.
  \begin{align*}
    \E\left[(f-p)^2\right] = \E\left[(f-g+g-p)^2\right] \leq (\sqrt{\delta_1} + \sqrt{\delta_2})^2, 
  \end{align*}
  where the inequality used substitutions from
  Eq.~\eqref{eq:f-gdifference} and~\eqref{eq:g-pdifference}. 
  By picking $\delta_1 = \delta_2 = \varepsilon/4$ we get
  $\E[(f-p)^2] \leq \varepsilon$, which ensures that $p$ has
  the approximation needed for Mansour's conjecture.
  Additionally, the Fourier sparsity of~$p$ is at most 
  \begin{align*}
    2^{\ent(\hat{g}^2)/\delta_2}\leq  2^{\lambda \log(4t/\delta_1)/\delta_2}=\Big(\frac{16t}{\varepsilon}\Big)^{\frac{4\lambda}{\varepsilon}}.
  \end{align*}
\end{proof}
We end this section with another open problem that could form an 
intermediate step towards resolving Mansour's conjecture. 
The following seemingly weaker conjecture than
Conjecture~\ref{con:ent-vs-01-certificate} is not known to imply
Mansour's conjecture. 
\begin{conjecture} 
  \label{con:ent-vs-certificate} 
  There exists a universal constant $\lambda > 0$ such that for any
  Boolean function $f\colon\pmset{n}\rightarrow \pmset{}$,
  \begin{align*}
    \ent(\hat{f}^2) \leq \lambda\cdot \max\left\{\certificate^{0}(f),\certificate^{1}(f)\right\}=\lambda \cdot \certificate(f).
  \end{align*}
\end{conjecture}

\section{Better bounds on Fourier Min-entropy}
\label{sec:minentropy}
The \emph{Fourier Min-entropy-Influence conjecture} (FMEI)
is a natural weakening of the FEI conjecture that
has received much less attention compared to the FEI conjecture.
The FMEI conjecture was raised by O'Donnell and others
in~\cite{OWZ11,ODonnell-book} as a simpler question to tackle,
given the hardness of resolving the FEI conjecture.
We restate the FMEI conjecture below. 
\begin{conjecture}[FMEI conjecture]
  There exists a universal constant $C>0$ such that
  for every Boolean function $f\colon\pmset{n}\rightarrow \pmset{}$,
  we have $\ent_{\infty}(\hat{f}^2)\leq  C \cdot \Inf(f)$.
\end{conjecture}
Although the FMEI conjecture is a natural first step towards
proving the FEI conjecture, it is also interesting in its own right.
For example, it is known to imply the famous KKL theorem~\cite{KKL88}. 
In fact, until very recently\footnote{Eldan and Gross \cite{EG20} developed new techniques using stochastic analysis that give a new way to prove the KKL theorem among other things.}, 
we did not know of any proof of the KKL theorem
that didn't go through hypercontractivity or
logarithmic Sobolev inequalities, which makes proving
the FMEI conjecture even more interesting.
The KKL theorem states that for every Boolean function
$f\colon\pmset{n}\rightarrow \pmset{}$,
there exists an index $i\in [n]$ such that
\(\Inf_i(f)\geq \Var(f) \cdot \Omega\left(\frac{\log n}{n}\right)\).
We now present an argument of \cite{OWZ11} that shows
how a positive answer to the FMEI conjecture
implies the KKL theorem for balanced functions: assume that $f$ is balanced
(i.e., $\fc{f}{\emptyset}=\E_x[f(x)]=0$).
Then the FMEI conjecture implies the existence of
$\emptyset \neq T\subseteq [n]$ such that
$\fcsq{f}{T}\geq 2^{-C\cdot \Inf(f)}$. Furthermore, for every $i\in T$, we have 
\begin{align}
  \label{eq:minfeikkl}
  \Inf_i(f)=\sum_{S:S\ni i}\fcsq{f}{S} \geq \fcsq{f}{T}\geq 2^{-C\cdot \Inf(f)}\geq 2^{-C n\cdot \max_{j} \left\{\Inf_j(f)\right\}},
\end{align}
where the first inequality follows because $T$ contains $i$,
the second inequality follows from the FMEI conjecture and
the last inequality because $\Inf(f)\leq n\cdot\max_{j}\{\Inf_j(f)\}$.
However note that $\max_j\{\Inf_j(f)\}$ clearly upper bounds
the left-hand side of Eq.~\eqref{eq:minfeikkl}. 
Thus, we have
\begin{align*}
  \max_{j \in [n]}\left\{\Inf_j(f)\right\}\geq 2^{-C n\cdot \max_{j} \left\{\Inf_j(f)\right\}}.
\end{align*}
Rearranging this inequality, we obtain
\(\max_{j\in [n]}\{\Inf_j(f)\} \geq \Omega\left(\frac{\log n}{n}\right)\),
which is the KKL theorem for balanced functions.
The proof can also be extended to non-balanced functions though using the FEI conjecture (see \cite{OWZ11}). 

We now prove Theorem~\ref{thm:minentropyupperb},
which is our main contribution in this section.
In the following theorem, we give upper bounds on
$\ent_{\infty}(\hat{f}^2)$ in terms of analytic and
combinatorial measures of Boolean functions. 
\begin{theorem}[Restatement of Theorem~\ref{thm:minentropyupperb-intro}]
  \label{thm:minentropyupperb}
  Let $f\colon\pmset{n}\rightarrow \pmset{}$ be a Boolean function. Then,  
  \begin{enumerate}[label=(\roman*)]
  \item\label{min-vs-approx-norm} For every $\varepsilon \geq 0$,
    \(\ent_{\infty}(\hat{f}^2)\leq 2\cdot\log \left(\|\widehat{f}\|_{1,\varepsilon}/(1-\varepsilon)\right)\).
  \item\label{min-vs-cert} \(\ent_{\infty}(\hat{f}^2)\leq  2\cdot \certificate_{\min}^{\oplus}(f)\). 
  \item\label{min-vs-rpdt} \(\ent_{\infty}(\hat{f}^2)\leq  2(1+\log_2 3)\cdot R_2^\oplus(f)\). 
  \end{enumerate}
\end{theorem}
Before giving a proof, we first make a few remarks about
the second statement in the theorem above. 
The FMEI conjecture asks if
\(\ent_{\infty}(\hat{f}^2)\leq C\cdot \as(f)\)?
Since we also know that for every $x\in \pmset{n}$,
we have $\s(f,x)\leq \certificate(f,x)$,
a weaker question than the FMEI conjecture
(with a larger right-hand side) would be,
is \(\ent_{\infty}(\hat{f}^2)\leq C\cdot \ac(f)\)?
In the theorem above, we give a positive answer to this question
and in fact show that
\(\ent_{\infty}(\hat{f}^2)\leq  2\cdot\certificate_{\min}^{\oplus}(f)\).
Observe that $\certificate_{\min}^{\oplus}(f)$ is
not only a lower bound on $\ac^{\oplus}(f)$ (and in turn $\ac(f)$),
but it is the parity certificate complexity on the ``easiest'' input to $f$,
unlike $\certificate^{\oplus}(f)$ where
the complexity is measured according to the ``hardest'' input $x$ to $f$.
In our perspective, this brings us closer to proving the FMEI conjecture.
In fact, we identify a non-trivial class of Boolean functions
for which $\certificate_{\min}(f)$ lower bounds $\Inf(f)$,
and hence establish the FMEI conjecture for
this class (Theorem~\ref{thm:meic-dnf}). 
\begin{proof}[Proof of Theorem~\ref{thm:minentropyupperb}]
  We prove the three parts separately as follows.
  \paragraph{Part~\ref{min-vs-approx-norm}}
  Fix $\varepsilon \geq 0$ and $d\in [n]$.
  Given a Boolean function $f$,
  suppose $p$ is a degree-$d$ polynomial that minimizes
  \begin{align*}
    \|\widehat{f}\|_{1,\varepsilon,d}=\min \left\{\|\widehat{p}\|_1:  \deg(p)\leq d \text{ and } |p(x)-f(x)| \leq \varepsilon \text{ for every } x\in \pmset{n} \right\},
  \end{align*}
  where the minimization is over all polynomials.
  Alternatively, $\|\widehat{f}\|_{1,\varepsilon,d}$
  can also be expressed as the following linear program
  and $p$ minimizes this program.
  \begin{align*}
    \boxed{
      \begin{array}{lll@{}ll}
	\|\widehat{f}\|_{1,\varepsilon,d}=& \min  &\sum_{S}|c_S| & \\
	&\text{subject to}& \Big| f(x)- \sum_{S : |S|\leq d} c_S\chi_{S}(x) \Big|\leq \varepsilon & \quad \text{ for every } ~x\in \pmset{n}  &\\
	&          &      c_S\in \reals                                          &\quad \text{ for every } S:|S| \leq d  & 
      \end{array}
    }
  \end{align*}
  Note that for every $\varepsilon \geq 0$ and
  $d \geq {\deg}_{\varepsilon}(f)$
  the above linear program is feasible.
  From standard manipulations, the dual of the linear program is as follows. 
  \begin{align*}
    \boxed{
      \begin{array}{lll@{}ll}
	& \max  &  \sum_{x\in \pmset{n}}\phi(x)f(x)-\varepsilon\sum_{x \in \pmset{n}}|\phi(x)| &\\
	&\text{subject to}& | \fc{\phi}{S}| \leq \frac{1}{2^n} & \quad  \text{ for every } S:|S| \leq d   &\\
	&          &      \phi(x)\in \reals                                          &\quad  \text{ for every } x\in \pmset{n}  & 
      \end{array}
    }
  \end{align*}
  Observe that both linear programs are feasible for
  $d \geq {\deg}_{\varepsilon}(f)$.
  Therefore, from the duality theorem of linear programs,
  the objective value of any dual feasible solution lower bounds
  the primal optimum and, moreover,
  the two programs have the same optimum value.
  We thus obtain the following characterization of
  $\|\widehat{f}\|_{1,\varepsilon,d}$.\footnote{We remark that similar linear program characterizations of approximate degree of Boolean functions have appeared before in the works of Sherstov~\cite{sherstov:pmm} and Bun and Thaler~\cite{bun&thaler:dualpoly}.}
  \begin{lemma}
    \label{lem:duality}
    Let $f\colon\pmset{n}\to\pmset{}$, $\varepsilon \geq 0$,
    and $d\in [n]$ such that $d \geq {\deg}_{\varepsilon}(f)$.
    Then, $\|\widehat{f}\|_{1,\varepsilon,d}\geq T$ if and only if
    there exists a polynomial $\phi\colon\pmset{n}\to \reals$
    satisfying $|\fc{\phi}{S}|\leq 2^{-n}$ for all $|S|\leq d$ and
    \begin{align*}
      \sum_{x\in \pmset{n}}\phi(x)f(x)-\varepsilon\sum_{x \in \pmset{n}}|\phi(x)| \geq T.
    \end{align*}
  \end{lemma}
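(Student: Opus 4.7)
The statement is essentially strong LP duality for the program defining $\|\widehat{f}\|_{1,\varepsilon,d}$, so my plan is to (i) write that program carefully in a standard form, (ii) derive its dual, verify that it coincides with the program displayed in the excerpt, and (iii) invoke strong LP duality together with the two feasibility checks.

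First I would linearize the absolute values in the primal. Split each Fourier coefficient as $c_S = c_S^+ - c_S^-$ with $c_S^\pm \geq 0$, so that at any optimum $|c_S| = c_S^+ + c_S^-$. Similarly, rewrite the constraint $|f(x)-\sum_{|S|\leq d}c_S\chi_S(x)|\leq \varepsilon$ as the two inequalities $\sum_S c_S\chi_S(x) - f(x) \leq \varepsilon$ and $-\sum_S c_S\chi_S(x)+f(x)\leq \varepsilon$, with corresponding nonnegative dual variables $\alpha_x$ and $\beta_x$. This places the primal in canonical LP form $\min c^\top z$ s.t.\ $Az \geq b$, $z \geq 0$, at which point the dual is mechanical.

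Next I would compute the dual. Collecting coefficients of $c_S^+$ (resp.\ $c_S^-$) across the two families of constraints yields the pair of inequalities $-1 \leq \sum_x (\beta_x-\alpha_x)\chi_S(x) \leq 1$ for every $|S|\leq d$. Defining $\phi(x) := \beta_x - \alpha_x$ (so that at optimum $\alpha_x + \beta_x = |\phi(x)|$, since one of them vanishes by optimality) and using $\fc{\phi}{S} = 2^{-n}\sum_x \phi(x)\chi_S(x)$, the constraints become $|\fc{\phi}{S}| \leq 2^{-n}$ for all $|S|\leq d$, and the dual objective simplifies to
\[
-\varepsilon\sum_x(\alpha_x+\beta_x) + \sum_x(\beta_x-\alpha_x)f(x) \;=\; \sum_x \phi(x)f(x) - \varepsilon\sum_x|\phi(x)|,
\]
which is exactly the program written in the excerpt.

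Finally I would invoke strong LP duality. The primal is feasible whenever $d \geq \deg_\varepsilon(f)$ (the approximating polynomial of degree $\deg_\varepsilon(f)$ gives a feasible $\{c_S\}$), and the dual is trivially feasible via $\phi \equiv 0$. Both programs have finite optima: the primal objective is nonnegative, and the dual objective is bounded above since $\sum_x \phi(x)f(x) \leq \sum_x |\phi(x)|$ while the Fourier constraints bound $\sum_x |\phi(x)|$ through the zero-degree coefficient (alternatively one can just cite LP strong duality when both programs are feasible and bounded). Hence the primal and dual optimum values coincide, i.e.\ $\|\widehat{f}\|_{1,\varepsilon,d}$ equals $\max_\phi \{\sum_x \phi(x)f(x) - \varepsilon\sum_x |\phi(x)|\}$ over dual-feasible $\phi$, which rephrases directly as the "if and only if" claimed in the lemma. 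I do not foresee a genuine obstacle here; the only care-needed step is keeping track of signs when deriving the dual, and ensuring that the minimum over $\alpha_x,\beta_x \geq 0$ with $\beta_x-\alpha_x = \phi(x)$ collapses to $|\phi(x)|$ so that the dual objective takes the claimed form.
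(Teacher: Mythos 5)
Your proposal is correct and follows essentially the same route as the paper: write $\|\widehat{f}\|_{1,\varepsilon,d}$ as a linear program, derive its dual (which is what the paper means by ``standard manipulations''), check that both primal and dual are feasible for $d\geq \deg_\varepsilon(f)$, and invoke strong LP duality. One small nitpick: your parenthetical claim that the Fourier constraints bound $\sum_x|\phi(x)|$ through the zero-degree coefficient is not quite right, since $\fc{\phi}{\emptyset}$ controls $\sum_x\phi(x)$ rather than $\sum_x|\phi(x)|$; but, as you yourself note, this is superfluous once both programs are known to be feasible, since weak duality then gives boundedness for free.
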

  Let us consider $\phi(x) = \frac{f(x)}{2^n\max_S |\fc{f}{S}|}$.
  Clearly the dual constraints are satisfied,
  and the objective value for this choice of $\phi$ is given by
  \begin{align*}
    \sum_{x\in \pmset{n}}\phi(x)f(x)-\varepsilon\sum_{x \in \pmset{n}}|\phi(x)|  = \frac{1-\varepsilon}{\max_S|\fc{f}{S}|}.
  \end{align*}
  The equality holds since $\phi(x)f(x) = |\phi(x)|$.
  Now, by Lemma~\ref{lem:duality}, we have   
  \begin{align*}
    \|\widehat{f}\|_{1,\varepsilon,d}\geq \frac{1-\varepsilon}{\max_S|\fc{f}{S}|}.
  \end{align*}
  Therefore, we obtain,
  \begin{align*}
    \ent_\infty(\hat{f}^2) \leq 2\cdot \log \left(\frac{\|\widehat{f}\|_{1,\varepsilon,d}}{1-\varepsilon}\right).
  \end{align*}
  Since $d$ is arbitrary, in order to ensure feasibility of
  the program we pick $d= {\deg}_{\varepsilon}(f)$.
  The first part of the theorem follows since
  \(\|\widehat{f}\|_{1,\varepsilon,n}=\|\widehat{f}\|_{1,\varepsilon}\).
  \paragraph{Part~\ref{min-vs-cert}}
  Suppose $\certificate_{\min}^{\oplus}(f)=k$. By definition of
  $\certificate_{\min}^{\oplus}(f)$, there exists an affine subspace
  $H \subseteq \pmset{n}$ such that $\mathsf{co\text{-}dim}(H)=k$ and
  $f$ is constant on $H$. Without loss of generality,
  assume that $f(x) = -1$ for every $x\in H$.
  Since $\mathsf{co\text{-}dim}(H)$ equals $k$,
  $H$ is given by a set of $k$ (linearly independent) parity constraints.
  That is, there exist $k$ linearly independent vectors
  $S_1,\ldots ,S_k \in \bool^n$, and $b_1,\dots,b_k \in \sbool$, such that 
  \begin{align*}
    H = \biggl\{ x \in \pmset{n} \colon \text{ for every } j\in [k] , \prod_{i \in \supp(S_j)}x_i = b_j \biggr\}.
  \end{align*}
  Consider the indicator function
  $\mathds{1}_{H} \colon\pmset{n}\rightarrow \pmset{}$,
  which evaluates to $-1$ for every $x\in H$ and $1$ otherwise.
  The Fourier expansion of $\mathds{1}_H$ is easy to understand.
  Observe that $H$ can be viewed as an $\band$ over parities or
  negated-parities. 
  For $j\in [k]$, let $y_j = \prod_{i \in \supp(S_j)} x_i$.
  It is now easily seen that 
  \begin{align}
    \label{eq:writingHperp}
    \mathds{1}_{H}(x) = \band (-b_1y_1,-b_2y_2,\ldots ,-b_ky_k).
  \end{align}
  Recall that $b_j$ is fixed, thus $-b_jy_j$ is either $y_j$ or $-y_j$. 
  Writing out the Fourier expansion for the $\band$ function in
  Eq.~\eqref{eq:writingHperp}, it follows that
  \begin{align*}
    \mathds{1}_{H}(x) &= \left(1 - \frac{1}{2^{k-1}}\right) + \sum_{T \subseteq [k]\colon T \neq \emptyset} \frac{(-1)^{|T|+1}}{2^{k-1}} \prod_{j \in T}-b_jy_j\\
    &=\left(1 - \frac{1}{2^{k-1}}\right) + \sum_{T \subseteq [k]\colon T \neq \emptyset} \left(\frac{-\prod_{j\in T}b_j}{2^{k-1}}\right) \prod_{j \in T}\left( \prod_{i \in \supp(S_j)}x_i\right).
  \end{align*}
  Now using the fact that $x_i^2 =1$, observe that the monomial
  $\prod_{j \in T}\prod_{i \in \supp(S_j)}x_i$ simplifies to a multilinear monomial.
  We further observe that for each non-empty $T$,
  we can simplify $\prod_{j \in T}\prod_{i \in \supp(S_j)}x_i$
  to a distinct multilinear monomial. This is a consequence of
  linear independence of $S_1,S_2, \ldots ,S_k$. Let us denote the set of
  non-zero Fourier coefficients of $\mathds{1}_H$ by $\mathcal{T}$.
  By what we argued just now, it follows that $|\mathcal{T}| = 2^k$. 
  We are now ready to conclude the proof.
  \begin{lemma}\label{lem:1-certificate}
    There exists a set $T \subseteq [n]$ such that
    $|\fc{f}{T}| \geq \frac{1}{2^k}$ . 
  \end{lemma}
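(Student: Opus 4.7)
The plan is to apply Plancherel's theorem to $f$ against a carefully chosen probe function derived from the affine subspace~$H$. Rather than working with $\mathds{1}_H$ itself (whose non-zero Fourier coefficients come in two different magnitudes---$1-1/2^{k-1}$ at $\emptyset$ and $\pm 1/2^{k-1}$ elsewhere), I will use the $\{0,1\}$-valued indicator $g\colon\pmset{n}\to\{0,1\}$ of $H$, defined by $g(x)=1$ if $x\in H$ and $g(x)=0$ otherwise. Writing
$$g(x) \;=\; \prod_{j=1}^{k}\frac{1+b_j\prod_{i\in \supp(S_j)} x_i}{2}$$
and expanding the product, the linear independence of $S_1,\dots,S_k$ over $\mathbb{F}_2$ implies that $g$ has exactly $2^k$ non-zero Fourier coefficients, each of magnitude exactly $1/2^k$, indexed by the mod-$2$ sums $\bigoplus_{T'\subseteq [k]} S_{T'}$.

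Next I will compute the correlation $\langle f,g\rangle$ directly from the definition of~$H$. Since $f\equiv -1$ on $H$ and $|H|/2^n=1/2^k$, we obtain $\E_x[f(x)g(x)] = -1/2^k$. By Plancherel (Fact~\ref{fact:plancherel}), this equals $\sum_T \widehat{f}(T)\widehat{g}(T)$. Taking absolute values and using $|\widehat{g}(T)|=1/2^k$ on the $2^k$-element support of $\widehat{g}$ yields
$$\frac{1}{2^k} \;\leq\; \frac{1}{2^k}\sum_{T\in \supp(\widehat{g})} |\widehat{f}(T)|,$$
so $\sum_{T\in\supp(\widehat{g})}|\widehat{f}(T)|\geq 1$. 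An averaging (pigeonhole) argument over the $2^k$ summands then produces some $T$ with $|\widehat{f}(T)|\geq 1/2^k$, which is Lemma~\ref{lem:1-certificate}. Taking the logarithm immediately yields $\ent_\infty(\hat{f}^2)\leq 2k = 2\certificate_{\min}^{\oplus}(f)$, establishing part~(\ref{min-vs-cert}) of Theorem~\ref{thm:minentropyupperb}.

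The argument has no serious obstacle; it is a short Plancherel-plus-averaging calculation. The only non-trivial idea is to replace $\mathds{1}_H$ by its $\{0,1\}$-valued counterpart, so that every non-zero Fourier coefficient of the probe function has the \emph{same} magnitude $1/2^k$. It is precisely this uniformity that makes the final averaging step tight; attempting the same calculation directly with $\mathds{1}_H$ would force one to treat the atypical empty-set coefficient separately and would lose a constant factor in the exponent.
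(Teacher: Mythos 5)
Your proof is correct, and it is essentially the same argument as the paper's: Plancherel applied to $f$ against the indicator of the affine subspace $H$, followed by pigeonhole over the $2^k$-element Fourier support of that indicator. The only difference is cosmetic: you use the $\{0,1\}$-valued indicator $g=(1-\mathds{1}_H)/2$, while the paper uses the $\pmset{}$-valued $\mathds{1}_H$. Your closing remark, however, is inaccurate — working directly with $\mathds{1}_H$ does \emph{not} lose a constant factor in the exponent. The paper equates $\langle f,\mathds{1}_H\rangle = \fc{f}{\emptyset} + 2^{1-k}$ (computed by counting) with the Plancherel expansion $\fc{f}{\emptyset}(1-2^{1-k}) + \sum_{T\in\mathcal{T}, T\neq\emptyset}\fc{f}{T}\cdot\bigl(\prod_{j}b_j\bigr)/2^{k-1}$; the $\fc{f}{\emptyset}$ terms cancel cleanly after dividing by $2^{1-k}$, leaving $1 \le \sum_{T\in\mathcal{T}}|\fc{f}{T}|$ with $|\mathcal{T}| \le 2^k$, hence the same $1/2^k$ bound. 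So the $\{0,1\}$-valued probe is a mild simplification that avoids tracking the cancellation, not a necessary fix.
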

  \begin{proof}
    Let \(\Pr_x[f(x) = -1] = p\).
    We consider the correlation between $f$ and $\mathds{1}_H$.
    \begin{align}
      \label{eq:via-counting}
      \langle f , \mathds{1}_H \rangle = \E_x\left[f(x)\cdot \mathds{1}_H(x)\right] =\frac{1}{2^k} +(-1) \left(p-\frac{1}{2^k}\right) + (1-p) = (1-2p) + \frac{1}{2^{k-1}} = \fc{f}{\emptyset} + \frac{1}{2^{k-1}}. 
    \end{align}
    On the other hand,
    \begin{align}\label{eq:via-parseval}
      \langle f, \mathds{1}_H \rangle = \sum_{S \subseteq [n]}\fc{f}{S}\cdot\fc{\mathds{1}_H}{S}
      = \fc{f}{\emptyset}\cdot\left(1-\frac{1}{2^{k-1}} \right) + \sum_{T \in \mathcal{T}\colon T \neq \emptyset} \fc{f}{T}\cdot\left(\frac{-\prod_{j\in T}b_j}{2^{k-1}}\right).
    \end{align}
    Putting together Eq.~\eqref{eq:via-counting} and
    Eq.~\eqref{eq:via-parseval} we have,
    \begin{align}\label{eq:linear-relation}
      1 = -\fc{f}{\emptyset} + \sum_{T \in\mathcal{T} \colon T \neq \emptyset} \fc{f}{T}\cdot\left(-\Pi_{j \in T}b_j\right) \leq \sum_{T \in \mathcal{T}} |\fc{f}{T}|.   
    \end{align}
    Since $|\mathcal{T}| = 2^k$,
    we obtain \(\max_{T \in \mathcal{T}} |\fc{f}{T}| \geq \frac{1}{2^k}\). 
  \end{proof}
  This completes the proof of part \ref{min-vs-cert}.
  \begin{remark}
    \label{rem:min-vs-cmin}
    We note that in fact the proof of part~\ref{min-vs-cert},
    in particular Eq.~\eqref{eq:linear-relation}, 
    shows that the following bounds hold when $H$ is a subcube.
    \begin{enumerate}[label=(\alph*)]
    \item There exists a non-empty set $S\subseteq [n]$ of size at most
      $\certificate_{\min}(f)$ such that 
      \begin{align*}
        \left|\fc{f}{S}\right| \geq \frac{1 - |\fc{f}{\emptyset}|}{2^{\certificate_{\min}(f)} - 1} \geq \frac{\Var(f)}{2^{\certificate_{\min}(f) + 1}}.
      \end{align*}
    \item There exists a non-empty set $S\subseteq [n]$ of size at most
      \(\max\left\{\certificate^0_{\min}(f), \certificate^1_{\min}(f)\right\}\)
      such that 
      \begin{align*}
        \left|\fc{f}{S}\right|
        \geq \frac{1 + |\fc{f}{\emptyset}|}{2^{\max\left\{\certificate^0_{\min}(f), \certificate^1_{\min}(f)\right\}} - 1}
        \geq \frac{\sqrt{\Var(f)}}{2^{\max\left\{\certificate^0_{\min}(f), \certificate^1_{\min}(f)\right\}}}.
      \end{align*}
      This inequality follows from using either $f$ or $-f$ in
      Eq.~\eqref{eq:linear-relation},
      depending on whether $\fc{f}{\emptyset}$ is positive or not.  
    \end{enumerate}
    The size bound on $S$ follows because when $H$ is a subcube
    the Fourier expansion of $\mathds{1}_{H}$ is  supported
    on monomials of degree at most $\mathsf{co\text{-}dim}(H)$.
    We end the remark here and continue with the proof of the theorem. 
  \end{remark}
  
  \paragraph{Part~\ref{min-vs-rpdt}}
  Consider a randomized parity-decision tree $R_\mu$ computing $f$
  with probability at least $2/3$.
  Let $\mathcal{T}$ be the set of deterministic parity-decision trees
  such that $R_\mu$ assigns a non-zero probability to every
  $T\in \mathcal{T}$. By definition,
  it then follows that $\E_x\E_{T\sim \mu} [f(x)T(x)]\geq 1/3$.
  This also shows that
  \begin{align}\label{eq:r2-lb}
    \frac{1}{3} \leq \E_x\E_{T\sim \mu} [f(x)T(x)] = \E_{T \sim \mu}\E_x[f(x)T(x)]
    = \E_{T\sim \mu}\left[\sum_{S\subseteq [n]}\fc{f}{S}\fc{T}{S} \right]. 
  \end{align}
  On the other hand, one can upper bound the last expression
  in Eq.~\eqref{eq:r2-lb} as follows
  \begin{align}\label{eq:r2-ub}
    \E_{T\sim \mu}\left[\sum_{S\subseteq [n]}\fc{f}{S}\fc{T}{S} \right]
    \leq \E_{T\sim \mu}\left[\sum_{S\subseteq [n]}|\fc{f}{S}||\fc{T}{S}| \right]
    \leq \left(\max_{S\subseteq [n]}|\fc{f}{S}|\right) \E_{T\sim \mu}\left[ \sum_{S\subseteq [n]}|\fc{T}{S}|\right].
  \end{align}
  Putting together Eq.~\eqref{eq:r2-lb} and Eq.~\eqref{eq:r2-ub} we have,
  \begin{align*}
    \frac{1}{3} \leq \left(\max_{S\subseteq [n]}|\fc{f}{S}|\right) \E_{T\sim \mu}\left[ \sum_{S\subseteq [n]}|\fc{T}{S}|\right] \leq \left(\max_{S\subseteq [n]}|\fc{f}{S}|\right) 2^{R_2^{\oplus}(f)}. 
  \end{align*}
  The second inequality follows from the fact that each $T$
  is a deterministic parity-decision tree of depth at most $R_2^\oplus(f)$,
  hence it easily follows that the spectral norm of
  the Fourier coefficients of $T$ can be upper bounded by
  $2^{R_2^\oplus(f)}$ (for a proof of this, see~\cite{BOH90}).
  Rewriting the last inequality, we have
  \(\max_{S\subseteq [n]}|\fc{f}{S}| \geq \frac{1}{2^{R_2^\oplus(f) + \log 3}}\),
  which gives us the third part of the theorem.
\end{proof}
Using a well-known fact that upper bounds R\'{e}nyi entropy of order
$1+\delta$ (for every $\delta>0$) by a constant times the min-entropy
of $\left\{\hat{f}(S)^2\right\}$, we deduce the following corollary.
Since this fact works for all $\delta>0$, it is tempting to say that
we can improve the bounds in Theorem~\ref{thm:minentropyupperb} from
$\ent_{\infty}(\hat{f}^2)$ to $\ent(\hat{f}^2)$, but this relation between
the R\'{e}nyi entropies breaks down for $\delta=0$. 
\begin{corollary}\label{coro:renyientropyupperb}
  Let $f\colon\pmset{n}\rightarrow \pmset{}$, and $\delta>0$. Then,
  \begin{enumerate}[label=(\roman*)]
  \item For every $\varepsilon \geq 0$,
    \(\ent_{1+\delta}(\hat{f}^2) \leq 2\left(1+\frac{1}{\delta}\right)\cdot \log \left(\|\widehat{f}\|_{1,\varepsilon}/(1-\varepsilon)\right)\).
  \item \(\ent_{1+\delta}(\hat{f}^2)\leq  2\left(1+\frac{1}{\delta}\right)\cdot \certificate_{\min}^{\oplus}(f)\). 
  \item \(\ent_{1+\delta}(\hat{f}^2)\leq  2(1+\log 3)\left(1+\frac{1}{\delta}\right)\cdot R_2^\oplus(f)\). 
  \end{enumerate}
\end{corollary}
\begin{proof}
  Use the fact that for any distribution $P$,\;
  \(\ent_{1+\delta}(P) \leq \left(1+\frac{1}{\delta}\right)\ent_{\infty}(P)\).
  Indeed, it is easily seen from the definition of R\'{e}nyi entropy
  that \(-\frac{1}{\delta}\log \left(\sum_j p_j^{1+\delta}\right) \leq -\frac{1+\delta}{\delta}\log \left(\max_j\{ p_j\}\right)\),
  and thus the fact follows. We remark that a tighter analysis of
  the R\'{e}nyi entropy can be used to improve the constants.
\end{proof}
As a corollary to Theorem~\ref{thm:minentropyupperb}~\ref{min-vs-cert},
we now establish the FMEI conjecture for read-$k$ $\dnf$s, for constant $k$.
A Boolean function is said to belong to the class of read-$k$ $\dnf$
if it can be expressed as a
$\dnf$ such that every variable (negated or un-negated) appears in at most
$k$ terms. We note that, independently, Shalev~\cite{shalev2018:minentropy}
showed, among other things, that FMEI holds for ``regular'' read-$k$ $\dnf$s.
However, we show it for the general class of read-$k$ $\dnf$s.
We remark that this improvement crucially uses our \emph{sharper} bound
of $\certificate_{\min}$ on the min-entropy of $\left\{\hat{f}(S)^2\right\}_S$. 

We will need the well-known KKL theorem which we state below.
\begin{theorem}[\cite{KKL88,FK96}]
  \label{thm:KKL}
  There exists a universal constant $c > 0$ such that for every $f \colon \pmset{n} \to \pmset{}$, we have
  \begin{align*}
    \Inf(f) \geq c \cdot \Var(f) \cdot \log \frac{1}{\max_i\Inf_i(f)}.
  \end{align*}
\end{theorem}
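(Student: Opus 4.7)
The plan is to prove the KKL theorem via the classical hypercontractivity route, which is the standard approach (and, as the authors note, essentially the only known route other than log-Sobolev). Let $\delta = \max_i \Inf_i(f)$. The key object is, for each coordinate $i \in [n]$, the discrete derivative $D_i f(x) := (f(x) - f(x^{(i)}))/2$, which is Boolean-valued in $\{-1,0,1\}$ with $|D_i f(x)| = 1$ iff $i$ is sensitive at $x$. Its Fourier expansion is $D_i f = \sum_{S \ni i}\widehat{f}(S)\,\chi_{S \setminus \{i\}}$, so $\|D_i f\|_2^2 = \Inf_i(f)$, and for every $p > 0$ we have the crucial identity $\|D_i f\|_p^p = \Inf_i(f)$.

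First I would invoke the Bonami--Beckner $(2, 1+\rho^2)$-hypercontractive inequality: for any $\rho \in (0,1)$ and any $g : \pmset{n} \to \mathbb{R}$, $\|T_\rho g\|_2 \leq \|g\|_{1+\rho^2}$, where $T_\rho$ is the noise operator that multiplies $\widehat{g}(S)$ by $\rho^{|S|}$. Applying this to $g = D_i f$ and squaring gives
\[\sum_{S \ni i} \rho^{2(|S|-1)}\,\widehat{f}(S)^2 \;\leq\; \Inf_i(f)^{2/(1+\rho^2)}.\]
Summing over $i \in [n]$ and using the identity $\sum_i \sum_{S \ni i} = \sum_S |S|$ on the left, together with the bound $\sum_i \Inf_i(f)^{2/(1+\rho^2)} \leq \delta^{(1-\rho^2)/(1+\rho^2)} \sum_i \Inf_i(f) = \delta^{(1-\rho^2)/(1+\rho^2)}\Inf(f)$ on the right (since the exponent $2/(1+\rho^2) > 1$), I obtain
\[\sum_{S} |S|\,\rho^{2(|S|-1)}\,\widehat{f}(S)^2 \;\leq\; \delta^{(1-\rho^2)/(1+\rho^2)}\,\Inf(f).\]

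Next I would lower-bound the left-hand side in terms of $\Var(f)$ by truncating at some level $k$. For $1 \leq |S| \leq k$, the coefficient $|S|\,\rho^{2(|S|-1)}$ is at least $\rho^{2(k-1)}$, so the LHS is at least $\rho^{2(k-1)} \sum_{1 \leq |S| \leq k}\widehat{f}(S)^2 \geq \rho^{2(k-1)}\bigl(\Var(f) - \Inf(f)/(k+1)\bigr)$, where the last step uses the Markov-type bound $\sum_{|S| > k}\widehat{f}(S)^2 \leq \Inf(f)/(k+1)$. Combining gives
\[\Var(f) \;\leq\; \Inf(f)\left(\frac{1}{k+1} \;+\; \rho^{-2(k-1)}\,\delta^{(1-\rho^2)/(1+\rho^2)}\right).\]

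Finally I would choose parameters. Fix any $\rho \in (0,1)$ (e.g.\ $\rho^2 = 1/2$) and set $k := \lfloor c_\rho \log(1/\delta)\rfloor$ with $c_\rho := \tfrac{(1-\rho^2)/(1+\rho^2)}{2\log(1/\rho)}$. Then the second term in the parenthesis is $O(1)$, while the first term is $O(1/\log(1/\delta))$. Making both $O(1/\log(1/\delta))$ by a slight decrease of $c_\rho$ yields $\Var(f) \leq O(1/\log(1/\delta))\cdot \Inf(f)$, i.e.\ $\Inf(f) \geq c\cdot \Var(f)\cdot \log(1/\delta)$ for a universal $c > 0$. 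The main technical input is the hypercontractive inequality itself; once that is assumed, the rest is parameter balancing. The one subtlety to be careful about is the regime where $\delta$ is close to $1$, where the logarithm is small and the statement becomes trivial given $\Var(f) \leq \Inf(f)$.
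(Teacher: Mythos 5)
Your proof is correct. The paper does not prove this theorem---it is cited from~\cite{KKL88}---and what you give is precisely the standard hypercontractive argument from that paper: apply Bonami--Beckner to the discrete derivatives $D_i f$, exploit the identity $\|D_i f\|_p^p = \Inf_i(f)$ (valid since $|D_i f|$ is $\{0,1\}$-valued), sum over $i$, truncate the Fourier spectrum at level $k$, and balance parameters against $\delta = \max_i \Inf_i(f)$. You also correctly flag the only real subtlety, namely that when $\delta$ is bounded away from $0$ the claim degenerates to $\Inf(f) \geq \Var(f)$.
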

The next lemma establishes a lower bound of \emph{minimum} certificate size
on the total influence of constant-read $\dnf$.
A similar argument appears in Shalev~\cite{shalev2018:minentropy} too. 
\begin{lemma}
  \label{lem:dnf-inf-lb}
  There exists a universal constant $c>0$ such that for all $f \colon \pmset{n}\rightarrow \pmset{}$ that can be expressed as a read-$k$ $\dnf$, we have 
  \begin{align*}
    \Inf(f) \geq c\cdot \Var(f)\cdot \left(\certificate_{\min}(f) -1 - \log k\right).
  \end{align*}
\end{lemma}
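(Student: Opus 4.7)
The plan is to apply the KKL theorem (Theorem~\ref{thm:KKL}) after obtaining a strong upper bound on $\max_i \Inf_i(f)$ using the read-$k$ DNF structure. All the action is in reducing $\max_i \Inf_i(f)$ to something like $k\cdot 2^{-\certificate_{\min}(f)}$; KKL then does the rest.

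First I would bound the individual influences of $f$ via a standard DNF union-bound argument. For any variable $x_i$, if flipping $x_i$ changes $f(x)$, then some term of the DNF containing the literal $x_i$ or $\bar{x}_i$ must change value. A fixed such term $T$ can change value only if all of its $|T|-1$ literals other than $x_i$ are already satisfied, which happens with probability $2^{-(|T|-1)}$. Since $f$ is read-$k$, at most $k$ terms contain $x_i$ or $\bar{x}_i$, so the union bound gives
\[\Inf_i(f)\;\le\;\sum_{T\,\ni\,x_i\text{ or }\bar x_i} 2^{-(|T|-1)}\;\le\; 2k\cdot 2^{-w_{\min}},\]
where $w_{\min}:=\min_T |T|$ is the minimum term width in the DNF.

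Next I would connect $w_{\min}$ to $\certificate_{\min}(f)$. Any input satisfying the DNF is certified by pointing to a single satisfied term, and conversely any minimum $1$-certificate must force some term to be satisfied, so $\certificate_{\min}^{1}(f)=w_{\min}$. Using the trivial inequality $\certificate_{\min}(f)\le \certificate_{\min}^{1}(f)$, this yields $\max_i \Inf_i(f)\le 2k\cdot 2^{-\certificate_{\min}(f)}$. Plugging this into Theorem~\ref{thm:KKL} immediately gives
\[\Inf(f)\;\ge\; c\cdot \Var(f)\cdot \log\frac{1}{\max_i\Inf_i(f)}\;\ge\; c\cdot \Var(f)\cdot \bigl(\certificate_{\min}(f)-1-\log k\bigr),\]
which is the claimed bound (the statement is vacuous when the right-hand side is non-positive).

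I do not expect a real obstacle: Step~1 is a classical DNF influence estimate and Step~2 is essentially a definition-chase on satisfying inputs. The only conceptual content is recognizing that the read-$k$ assumption is precisely what is needed to make the union bound in Step~1 depend logarithmically (rather than linearly) on the number of terms, feeding KKL exactly the kind of bound on $\max_i\Inf_i(f)$ it needs.
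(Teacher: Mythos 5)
Your proof is correct and takes essentially the same route as the paper: bound $\max_i \Inf_i(f)$ by $O(k)\cdot 2^{-\certificate_{\min}(f)}$ using a union bound over the at most $k$ terms touching $x_i$, then feed this into KKL (Theorem~\ref{thm:KKL}). One small inaccuracy worth flagging: your claim $\certificate_{\min}^1(f) = w_{\min}$ is not true in general (e.g., $(x_1\wedge x_2)\vee(x_1\wedge\bar x_2)=x_1$ has $w_{\min}=2$ but minimum $1$-certificate size $1$), since a minimum certificate need not force any single term to be satisfied; however, only the direction $\certificate_{\min}^1(f)\le w_{\min}$ is used in your chain $\certificate_{\min}(f)\le\certificate_{\min}^1(f)\le w_{\min}$, and that direction is correct, so the argument goes through unchanged.
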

\begin{proof}
  Suppose $f$ is a $t$-term read-$k$ $\dnf$,
  then $f$ can be written as $f(x) = \bigvee_{j=1}^t T_j(x)$,
  where $T_j(x)$ is a term. Recall,
  \(\Inf_i(f) = \Pr_x[f(x) \neq f(x^{(i)})]\).
  Using the fact that $f$ is a $\dnf$ we upper bound
  the $\Inf_i(f)$ as follows,
  \begin{align*}
    \Inf_i(f) = \Pr_x\left[f(x) \neq f(x^{(i)})\right] \leq \sum_{j=1}^t \Pr_x\left[T_j(x) \neq T_j(x^{(i)})\right].
  \end{align*}
  Clearly when $T_j$ is not defined over a variable $x_i$,
  $\Pr_x[T_j(x) \neq T_j(x^{(i)})] =0$, and otherwise it equals
  $\frac{1}{2^{|T_j|-1}}$ because all other literals must be
  set to true in order to satisfy that term. Therefore, we have
  \begin{align*}
    \Inf_i(f) \leq \sum_{j=1}^t \Pr_x\left[T_j(x) \neq T_j(x^{(i)})\right] =\sum_{\substack{j=1\\x_i  \text{ appears in }T_j}}^t \Pr_x\left[T_j(x) \neq T_j(x^{(i)})\right] \leq k2^{-(\certificate_{\min}(f)-1)}.
  \end{align*}
  The second inequality follows because a variable appears
  in at most $k$ terms, and $|T_j| \geq \certificate_{\min}(f)$ for all $j$.
  Now using the KKL theorem (Theorem~\ref{thm:KKL}), we obtain 
  \begin{align*}
    \Inf(f) \geq c \cdot \Var(f)\cdot \log \frac{1}{\max_i\Inf_i(f)} \geq c\cdot\Var(f)\cdot (\certificate_{\min}(f) - 1 -\log k).
  \end{align*}
  This concludes the proof of the lemma.
\end{proof}     
We now use Lemma~\ref{lem:dnf-inf-lb} and
Theorem~\ref{thm:minentropyupperb}~\ref{min-vs-cert}
to show that FMEI conjecture holds for read-$k$ $\dnf$. 
\begin{theorem}[Restatement of Theorem~\ref{thm:meic-dnf-intro}]
  \label{thm:meic-dnf}
  Let $f \colon\pmset{n}\rightarrow \pmset{}$ be a read-$k$ $\dnf$. Then,
  \begin{align*}
    \ent_{\infty}(\hat{f}^2) \leq C \cdot \Inf(f),\; \text{ where } C = O(\log k).
  \end{align*}
\end{theorem}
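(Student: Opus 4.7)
The plan is to chain two results already in hand: Theorem~\ref{thm:minentropyupperb}(\ref{min-vs-cert}), which gives $\ent_{\infty}(\hat{f}^2)\leq 2\,\certificate_{\min}^{\oplus}(f)\leq 2\,\certificate_{\min}(f)$, and the newly established Lemma~\ref{lem:dnf-inf-lb}, which says $\Inf(f)\geq c\,\Var(f)\,(\certificate_{\min}(f)-1-\log k)$ for a read-$k$ $\dnf$. The one wrinkle is the $\Var(f)$ factor on the right-hand side of the lemma: it becomes weak when $f$ is highly biased, so the proof naturally splits on the size of $\Var(f)$.

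In the regime $\Var(f)\geq 1/2$, solving Lemma~\ref{lem:dnf-inf-lb} for the minimum certificate complexity gives $\certificate_{\min}(f)\leq \tfrac{2}{c}\,\Inf(f)+1+\log k$, and substituting into Theorem~\ref{thm:minentropyupperb}(\ref{min-vs-cert}) yields
\[\ent_{\infty}(\hat{f}^2)\;\leq\;\tfrac{4}{c}\,\Inf(f)+2+2\log k.\]
Poincar\'e's inequality $\Inf(f)\geq \Var(f)\geq 1/2$ then lets me absorb the additive terms into $O(\log k)\cdot \Inf(f)$.

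In the regime $\Var(f)<1/2$, I bypass the certificate bound entirely. Since $\fc{f}{\emptyset}^2=1-\Var(f)>1/2$ while $\sum_S\fcsq{f}{S}=1$, every other coefficient has squared value strictly below $1/2$, so $\fc{f}{\emptyset}^2$ is the unique maximum Fourier weight. Hence
\[\ent_{\infty}(\hat{f}^2)\;=\;\log\frac{1}{1-\Var(f)}\;\leq\;2\,\Var(f)\;\leq\;2\,\Inf(f),\]
using $-\log(1-x)\leq 2x$ on $[0,1/2]$ and one more application of Poincar\'e. This is even tighter than the $O(\log k)\cdot \Inf(f)$ bound we are after, so combining the two regimes finishes the proof.

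The bulk of the work is already done by Theorem~\ref{thm:minentropyupperb}(\ref{min-vs-cert}) and Lemma~\ref{lem:dnf-inf-lb}; the only real point to notice is that the biased regime cannot go through the certificate bound (because $\Var(f)$ degrades Lemma~\ref{lem:dnf-inf-lb}) but is handled essentially for free by inspecting the empty Fourier coefficient.
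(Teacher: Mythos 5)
Your proof is correct and takes essentially the same route as the paper: the same case split on $\Var(f) \geq 1/2$ versus $\Var(f) < 1/2$, with the large-variance case combining Theorem~\ref{thm:minentropyupperb}(\ref{min-vs-cert}) with Lemma~\ref{lem:dnf-inf-lb}, and the small-variance case handled via the empty Fourier coefficient together with $\Var(f)\leq \Inf(f)$. The only cosmetic difference is the specific elementary inequality used to bound $\log\frac{1}{1-\Var(f)}$ by a constant times $\Var(f)$ (the paper uses $\log\frac{1}{1-x}\leq (\log e)\frac{x}{1-x}$ and then bounds $1-x\geq 1/2$, while you use the secant bound $-\log_2(1-x)\leq 2x$ on $[0,1/2]$), which yields a marginally different but equally valid constant.
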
   
\begin{proof}
  We consider two cases based on whether
  $\Var(f)$ is ``small'' $(< 1/2)$ or ``large'' $(\geq 1/2)$. 
  \paragraph{Case 1} $\Var(f) < 1/2$.
  Recall, $\Var(f) = 1 -\fcsq{f}{\emptyset}$. Therefore, we have 
  \begin{align*}
    \ent_{\infty}(\hat{f}^2)\leq  \log \frac{1}{\fcsq{f}{\emptyset}} = \log \frac{1}{1-\Var(f)} \leq (\log e)\frac{\Var(f)}{1-\Var(f)} \leq (2\log e) \Var(f) \leq (2\log e) \Inf(f).
  \end{align*}
  The second inequality uses the fact, for $x \in (0,1)$,
  $\log\frac{1}{1-x} \leq (\log e)\frac{x}{1-x}$. 
  \paragraph{Case 2} $\Var(f) \geq 1/2$.
  Using Theorem~\ref{thm:minentropyupperb}~\ref{min-vs-cert}
  and Lemma~\ref{lem:dnf-inf-lb} we bound the min-entropy as follows, 
  \begin{align*}
    \ent_{\infty}(\hat{f}^2) \leq 2 \cdot \certificate_{\min}(f) \leq (2/c)\cdot \frac{\Inf(f)}{\Var(f)}+ 2(1+\log k),
  \end{align*} where $c$ is a universal constant. 
  Since $\Inf(f) \geq \Var(f) \geq 1/2$,
  we further bound the last term in the
  above inequality to obtain
  \begin{align*}
    \ent_{\infty}(\hat{f}^2) \leq ((4/c)+ 4(1+\log k)) \cdot \Inf(f).
  \end{align*}
  This completes the proof of the theorem.
\end{proof}

\section{Implications of the FEI conjecture}
\label{sec:unbalancinglights}
The FEI and FMEI conjecture seem to impose a strong constraint
on the Fourier spectrum of a Boolean function.
For example, the FMEI conjecture (if true) would show
the existence of a large Fourier coefficient in
the spectrum of every Boolean function that has small average sensitivity.
In the introduction we saw that the FEI conjecture implies 
the existence of a Fourier-sparse polynomial $p$
that approximates a Boolean function $f$ in $\ell_2$-distance,
i.e., $\E_x[(f(x)-p(x))^2]$ is small.
(A particular case being Mansour's conjecture,
which is also a consequence of the FEI conjecture.)
In this section we discuss one implication of the FEI conjecture
relating to the structure of polynomials
that approximate Boolean functions in the $\ell_\infty$-distance,
i.e, $|p(x)-f(x)|$ is small for every $x\in \pmset{n}$.
In particular, we consider the question
``Do polynomials approximating a Boolean function in $\ell_\infty$-distance satisfy some property?''. In this direction,
we make progress by showing that if the FEI conjecture is true,
then we can rule out polynomials with ``large'' Fourier sparsity
from representing or approximating Boolean functions.
In this section, we also consider a class of polynomials
and show that no polynomial in that class can
$1/3$-approximate a Boolean function
(without assuming that the FEI conjecture is true).
\begin{definition}
  An $n$-variate multilinear polynomial $p$ is said to be
  a \emph{flat polynomial} if all  its \emph{non-zero}
  coefficients have the same magnitude.
\end{definition}
\begin{lemma}
  \label{lem:flatpoly}
  Let $\varepsilon \in (0,1/2)$ be a constant, and
  suppose the FEI conjecture is true.
  Let $p$ be a flat polynomial with degree $d$ and
  sparsity $T= 2^{\omega(d)}$.
  Then $p$ cannot $\varepsilon$-approximate any Boolean function.
\end{lemma}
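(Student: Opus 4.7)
The plan is to argue by contradiction, using the preceding Claim~\ref{cl:entflat} together with the assumed FEI conjecture. Suppose some flat polynomial $p$ of degree $d$ and sparsity $T = 2^{\omega(d)}$ does $\varepsilon$-approximate a Boolean function $f$. Claim~\ref{cl:entflat} then guarantees $\ent(\widehat{f}^2) \geq \log T$, and combining this with the FEI bound $\ent(\widehat{f}^2) \leq C \cdot \Inf(f)$ yields
\[
\Inf(f) \;\geq\; \frac{\log T}{C} \;=\; \frac{\omega(d)}{C}.
\]

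To close the contradiction, one needs a matching upper bound $\Inf(f) = O(d)$, using only $\deg(p) \leq d$ and $\|f - p\|_\infty \leq \varepsilon < 1/2$. Since $\|p\|_2 \leq 1+\varepsilon$, Parseval gives $\Inf(p) = \sum_S |S|\, \widehat{p}(S)^2 \leq d\,\|p\|_2^2 = O(d)$. Writing $h := f - p$ so that $\|h\|_2 \leq \varepsilon$, one expands
\[
\Inf(f) \;=\; \Inf(p) \,+\, 2\sum_S |S|\,\widehat{p}(S)\widehat{h}(S) \,+\, \sum_S |S|\,\widehat{h}(S)^2.
\]
The cross term is bounded by $2\varepsilon \sqrt{\sum_S |S|^2\widehat{p}(S)^2} \leq 2\varepsilon\,d\,\|p\|_2 = O(d\varepsilon)$ via Cauchy--Schwarz (using $\deg(p) \leq d$), so the task reduces to bounding the quadratic tail $\sum_S |S|\,\widehat{h}(S)^2$ by $O(d)$.

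This tail is the main obstacle: the naive Parseval estimate gives only $n\varepsilon^2$, which can dwarf $\log T$ when $n$ is large. The plan is a two-pronged attack. For $|S| \leq d$ outside the support $\mathcal{S}$ of $p$ we have $\widehat{p}(S) = 0$ and $\sum_{|S|\leq d,\, S\notin \mathcal{S}} \widehat{h}(S)^2 \leq \varepsilon^2$, contributing at most $d\varepsilon^2$ to the influence. For $|S| > d$, the bound $\sum_{|S|>d}\widehat{h}(S)^2 \leq \varepsilon^2$ must be promoted to a per-level tail estimate; invoking a Nisan--Szegedy-style bound $\deg(f) \leq \deg_\varepsilon(f)^{O(1)} = \mathrm{poly}(d)$ caps the influence by $\mathrm{poly}(d)$, which already contradicts $\Inf(f) = \omega(d)$ in the regime $T = 2^{\omega(\mathrm{poly}(d))}$. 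A finer analysis exploiting the flat structure of $p$ (forcing high-degree coefficients of $f$ to be individually tiny rather than merely small on average) should yield the sharper $\Inf(f) = O(d)$ needed for the stated hypothesis $T = 2^{\omega(d)}$, at which point the contradiction with the FEI-derived lower bound $\Inf(f) = \omega(d)/C$ completes the proof.
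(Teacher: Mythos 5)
Your strategy --- contradiction via Claim~\ref{cl:entflat} plus the assumed FEI bound, then an upper bound on $\Inf(f)$ --- matches the paper's outline, and the first two steps are correct. The gap is exactly the one you flag: you need $\Inf(f) = O(d)$ from $\deg_\varepsilon(f) \leq d$, and your route only delivers $\mathrm{poly}(d)$. Your decomposition $\Inf(f) = \Inf(p) + (\text{cross term}) + \sum_S |S|\,\widehat{h}(S)^2$ correctly isolates the obstruction to the high-level tail $\sum_{|S|>d}|S|\,\widehat{f}(S)^2$ (for $|S|>d$ one has $\widehat{p}(S)=0$, so $\widehat{h}(S)=-\widehat{f}(S)$), but the closing appeal to ``a finer analysis exploiting the flat structure of $p$'' cannot work: the problematic Fourier mass lives above level $d$, where $p$ vanishes identically, so flatness of $p$ places no constraint on those coefficients. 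Flatness is used only inside Claim~\ref{cl:entflat}; after that point $p$ plays no further role in the argument, and the remaining task is a statement about $f$ alone.

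The missing ingredient is a known, non-trivial theorem: Shi~\cite{shi:approxdeg} showed $\Inf(f) = O(\deg_\varepsilon(f))$ for every Boolean function $f$ and constant $\varepsilon \in (0,1/2)$. (Equivalently, low $\ell_\infty$-approximate degree forces geometrically decaying Fourier weight above level $O(\deg_\varepsilon(f))$; a version appears in Tal's work~\cite{Tal14,tal:fourierconcentration}, cited a little later in this same section of the paper.) This gives $\Inf(f) = \sum_S |S|\,\widehat{f}(S)^2 = O(d)$ directly, with no need for your decomposition around $p$ at all. The paper simply invokes Shi's bound and closes with $\Omega(\log T) \leq \ent(\hat{f}^2) \leq C\cdot\Inf(f) \leq O(d)$, contradicting $T = 2^{\omega(d)}$. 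So your skeleton is right, but the final step requires Shi's inequality; it cannot be recovered from the Nisan--Szegedy $\mathrm{poly}(d)$ estimate, and the flatness hand-wave points in the wrong direction.
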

\begin{proof}
  In order to prove the lemma, we crucially use the following claim.
  \begin{claim}\label{cl:entflat}
    Let $p$ be a flat polynomial with sparsity $T$ and, further,
    suppose $p$ $\varepsilon$-approximates a Boolean function $f$. Then, 
    \begin{align}\label{eq:entflat}
      \ent(\hat{f}^2) \geq \Omega(\log T), 
    \end{align}
    where the constant in $\Omega(\cdot)$ depends on $\varepsilon$. 
  \end{claim}
  We assume this claim and conclude the proof of the lemma.
  By contradiction, let us assume that $p$
  $\varepsilon$-approximates a Boolean function $f$,
  so $\deg_{\varepsilon}(f) \leq d$.
  Assuming the FEI conjecture is true,
  we have $\ent(\hat{f}^2) = O(\Inf(f))$.
  Furthermore, using a result of Shi~\cite{shi:approxdeg},
  we have $\Inf(f)\leq O(\deg_{\varepsilon}(f))$ for every $f$
  and constant $\varepsilon$.
  So, we have $\ent(\hat{f}^2) \leq  O(\deg_{\varepsilon}(f))\leq O(d)$.
  Using Claim~\ref{cl:entflat}, it follows that
  \begin{align*}
    \Omega(\log T)\leq \ent(\hat{f}^2) \leq O(d).
  \end{align*}
  But this upper bound of $T=2^{O(d)}$ contradicts
  the assumption on $T$ in the statement of the lemma.
  Hence, we conclude that  $p$ cannot $\varepsilon$-approximate
  any Boolean function.
\end{proof}
We now prove the claim. 
\begin{proof}[Proof of Claim~\ref{cl:entflat}]
  Suppose $p$ $\varepsilon$-approximates a Boolean function $f$ and
  that all non-zero coefficients
  of $p$ have magnitude $\frac{\alpha}{\sqrt{T}}$ for some
  $\alpha\in [(1-\varepsilon),(1+\varepsilon)]$.
  Such an $\alpha$ exists because,
  by Parseval's identity (Fact~\ref{fact:plancherel}), we have
  \begin{align*}
    \sum_S \fcsq{p}{S}=\E_x\left[p(x)^2\right] \in \left[(1-\varepsilon)^2,(1+\varepsilon)^2\right],
  \end{align*}
  where the inclusion assumes $p$ $\varepsilon$-approximates $f$. 
  Now consider the following set $\mathcal{A}$ of ``large''
  Fourier coefficients,
    \begin{align}
      \label{eq:defnofA}
      \mathcal{A} \coloneqq \left\{ S\ :\ |\fc{f}{S}|\geq \frac{2 \alpha }{\sqrt{T}}\right\}.
    \end{align}
    For every $ S\in \mathcal{A}$,
    we have $|\fc{f}{S} - \fc{p}{S}| \geq |\fc{f}{S}|/2$ and hence
    \begin{align}
      \label{eq:uppeboundonfcinA}
      \sum_{S\in \mathcal{A}} \fcsq{f}{S} \leq 4\sum_{S\in \mathcal A} \left(\fc{f}{S} - \fc{p}{S}\right)^2\leq 4\sum_{S\subseteq [n]} \left(\fc{f}{S} - \fc{p}{S}\right)^2 =4\E_x\left[(f(x)-p(x))^2\right] \leq 4 \varepsilon^2,
    \end{align}
    where the equality uses Parseval's identity (Fact~\ref{fact:plancherel})
    and the last inequality uses that $p$ $\varepsilon$-approximates~$f$.
    From Eq.~\eqref{eq:uppeboundonfcinA},
    we have \(\sum_{S\not\in \mathcal{A}} \fcsq{f}{S} \geq 1-4\varepsilon^2\).
    This gives us our desired lower bound on the Fourier entropy of $f$,
    \begin{align*}
      \ent(\hat{f}^2) & =\sum_{S\subseteq [n]} \fcsq{f}{S} \log \frac{1}{\fcsq{f}{S}} \\
      & \geq \sum_{S \not\in \mathcal{A}} \fcsq{f}{S} \log \frac{1}{\fcsq{f}{S}} \geq \sum_{S\not\in \mathcal{A}} \fcsq{f}{S} \cdot \log\Big(\frac{T}{4\alpha^2}\Big)\\
      &\geq (1-4\varepsilon^2) \log\Big(\frac{T}{4\alpha^2}\Big)
      \geq (1-4\varepsilon^2)\log\frac{T}{4(1+\varepsilon)^2}=\Omega(\log T).
    \end{align*}
    The second inequality follows by the definition of
    $\mathcal A$ (Eq.~\eqref{eq:defnofA}),  
    the third inequality by negation of the inequalities
    in Eq.~\eqref{eq:uppeboundonfcinA},
    the last inequality holds because $\alpha \leq (1+\varepsilon)$
    by the definition of $\alpha$,
    and the last equality assumes $\varepsilon \in (0, 1/2)$.
  \end{proof}
Since we are still unable to resolve the FEI conjecture,
an interesting intermediate question would be to
\emph{unconditionally} prove the following conjecture.
\begin{conjecture}
  \label{conj:unbalancing}
  No flat polynomial of degree $d$ and sparsity $2^{\omega(d)}$
  can $\varepsilon$-approximate a Boolean function.
\end{conjecture}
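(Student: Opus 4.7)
My plan is to combine two unconditional facts about $f$---an entropy lower bound from Claim~\ref{cl:entflat} and a structural consequence of Fourier granularity---and then to try to reduce to the block-multilinear case resolved in Theorem~\ref{thm:unbalancing2-intro}. Write $p$ so that its non-zero Fourier coefficients all have magnitude $\alpha/\sqrt{T}$ for some $\alpha \in [1-\varepsilon,1+\varepsilon]$, supported on a set $\mathcal{S}\subseteq 2^{[n]}$ of size $T$ with each $S \in \mathcal{S}$ satisfying $|S|\leq d$. Since $\E_x[(f(x)-p(x))^2]\leq \varepsilon^2$, an elementary counting argument on $\supp(p)$ shows that for at least a $(1-O(\varepsilon^2))$-fraction of $S \in \mathcal{S}$ we have $\widehat{f}(S)\neq 0$ and $|\widehat{f}(S)|\leq 2\alpha/\sqrt{T}$. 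By the granularity of Fourier coefficients of Boolean functions, any non-zero $|\widehat{f}(S)|\geq 2^{-\deg(f)}$, so $2^{-\deg(f)} \leq 2(1+\varepsilon)/\sqrt{T}$, giving $\deg(f) \geq \tfrac{1}{2}\log T - O(1)$; for $T=2^{\omega(d)}$ this forces $\deg(f)=\omega(d)$.

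The next step is to reduce the general flat case to the block-multilinear case, where the key lemma stated immediately after Theorem~\ref{thm:unbalancing2-intro} forces $\deg(f)\leq d$ exactly. I would apply the standard polarization construction, lifting $p$ to a symmetric block-multilinear form $\tilde{p}(y^{(1)},\ldots,y^{(d)})$ on $\pmset{nd}$ that agrees (up to a $d!$ factor) with $p$ on the diagonal $y^{(1)}=\cdots=y^{(d)}=x$. Since $p$ is multilinear, there are no repeated variables within a monomial to spoil flatness, so $\tilde{p}$ remains essentially flat, block-multilinear of degree $d$, and of Fourier sparsity at least $T$. The strategy is then to construct a Boolean function $\tilde f$ on $\pmset{nd}$ that $\tilde p$ approximates with the required error, apply the block-multilinear lemma to conclude $\deg(\tilde f)\leq d$, and finally transfer this back to a contradiction with $\deg(f)=\omega(d)$ via restriction to the diagonal.

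The hard part is precisely the construction of $\tilde f$: polarization controls the behaviour of $\tilde p$ only on the diagonal, where it coincides with $d!\cdot p$, while off the diagonal $\tilde p$ can take arbitrary real values in general. Producing a genuinely Boolean-valued target on the full product cube while preserving both the approximation error and the block-multilinear structure appears to require a new idea, and it is precisely the place where the parallel question about Bohnenblust--Hille constants also gets stuck. A plausible intermediate avenue is to refine the granularity bound via the polynomial-method inequality $\deg(f)\leq \mathsf{poly}(\deg_\varepsilon(f))\leq \mathsf{poly}(d)$, which would at least prove Conjecture~\ref{conj:unbalancing} at the coarser scale $T=2^{\omega(\mathsf{poly}(d))}$; closing the gap all the way to $T=2^{\omega(d)}$ unconditionally seems to be the genuine bottleneck, and is intertwined with both the FEI conjecture and the uniform-boundedness conjecture for the BH constant.
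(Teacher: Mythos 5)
The statement you are addressing is posed as a \emph{conjecture} in the paper: the authors explicitly write that they ``have not been able to resolve this conjecture,'' and they prove it only for the restricted class of flat block-multilinear polynomials (Theorem~\ref{thm:unbalancing2-intro}, via Lemma~\ref{lem:block-multilinear-deg}). So there is no full proof in the paper to compare against, and you are right to treat the general statement as open.

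That said, your first paragraph is a correct and genuinely interesting piece of unconditional progress that the paper does not record. Parseval gives that a $(1-O(\varepsilon^2))$-fraction of $S$ in the support of $p$ satisfy $0<|\widehat f(S)|\le 2\alpha/\sqrt T$; granularity of Boolean Fourier coefficients then gives $2^{-\deg(f)}\le 2\alpha/\sqrt T$, i.e.\ $\deg(f)\ge\tfrac12\log T-O(1)$; and the polynomial method ($\deg(f)\le O(\deg_\varepsilon(f)^{c})$ for some constant $c$) gives $\log T\le\mathsf{poly}(d)$. This rules out flat $\varepsilon$-approximators of sparsity $2^{\omega(\mathsf{poly}(d))}$ with \emph{arbitrary} supports. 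The paper's own partial progress for general flat polynomials is different and incomparable: they apply Tal's level-$d$ bound $\sum_{|S|=d}|\widehat f(S)|\le O(d)^d$ to rule out only the single extremal sparsity $T=\binom{n}{d}$ (support equal to all degree-$d$ monomials, and only when $d=o(n^{1/3})$), which covers some moderate-$n$ regimes your bound does not, while your bound applies to arbitrary supports.

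Where your plan cannot close the gap is precisely where you flag it: polarization controls $\tilde p$ only on the diagonal $y^1=\cdots=y^d$, where it agrees with $p$ up to normalization; off the diagonal $\tilde p$ is unconstrained, and the Bohnenblust--Hille inequality in fact forces a flat multilinear form of large sparsity to take large values somewhere on the full product cube. So there is no Boolean $\tilde f$ on $\pmset{nd}$ that $\tilde p$ $1/8$-approximates, and Lemma~\ref{lem:block-multilinear-deg} cannot be invoked. Tightening the exponent from $\mathsf{poly}(d)$ to $O(d)$ for general flat polynomials remains open and, as you correctly observe, is entangled with whether the BH constant $C_d$ is universal.
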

Although we have not been able to resolve this conjecture,
we now discuss some partial progress towards resolving it.
Additionally, we give an intriguing connection between this conjecture
and the Bohnenblust-Hille inequality.

\paragraph{Partial progress towards resolving Conjecture~\ref{conj:unbalancing}}
As a first step towards proving the conjecture we observe that a
\emph{weaker} version of the conjecture holds when further restricting
the approximating polynomials to be \emph{homogeneous}.
A polynomial is said to be \emph{homogeneous} if every monomial in
the polynomial has the same degree. More formally, we observe the following.
\begin{prop}
  \label{prop:hom-conj}
  No flat homogeneous polynomial of degree $d$ and sparsity
  $2^{\omega(d\log d)}$ can $1/3$-approximate a Boolean function. 
\end{prop}
\begin{proof}
  Let $p$ be a flat homogeneous polynomial of degree $d$ that
  $1/3$-approximates a Boolean function $f$.
  Further, suppose the sparsity of $p$ is $T$.
  This implies $|\fc{p}{S}| =\frac{\alpha}{\sqrt{T}}$
  for some $\alpha\in [2/3,4/3]$. Then, we have
  \begin{align}
    \label{eq:diff-l1-norm-ub}
    \sum_{S \colon |S| =d}|\fc{p}{S} - \fc{f}{S}| \leq \left(\sum_{S \colon |S| =d}|\fc{p}{S}-\fc{f}{S}|^2\right)^{1/2} \cdot \sqrt{T} \leq \frac{1}{3} \sqrt{T},
  \end{align}
  where the first inequality is Cauchy-Schwarz and the second
  uses $|p(x) -f(x)| \leq 1/3$ for all $x$. On the other hand,
  \begin{align}
    \label{eq:diff-l1-norm-lb}
    \sum_{S \colon |S| =d}|\fc{p}{S} - \fc{f}{S}| \geq \sum_{S \colon |S| =d}\left(|\fc{p}{S}| - |\fc{f}{S}|\right) & = \sum_{S \colon |S| =d}|\fc{p}{S}| - \sum_{S \colon |S|=d}|\fc{f}{S}| \nonumber \\
    & \geq \frac{2}{3}\sqrt{T} - \sum_{S \colon |S|=d}|\fc{f}{S}|,
  \end{align}
  where the first inequality uses the reverse-triangle inequality and
  the last uses the lower bound on $|\fc{p}{S}|$.
  Combining Eq.~\eqref{eq:diff-l1-norm-ub} and \eqref{eq:diff-l1-norm-lb} 
  we obtain
  \begin{align}
    \label{eq:lb-l1-norm-f}
    \sum_{S \colon |S|=d}|\fc{f}{S}| \geq \frac{1}{3}\sqrt{T}. 
  \end{align}
  However, from results of Tal (see \cite[Claim~2.13]{Tal14} and \cite[Lemmas~29 and~34]{tal:fourierconcentration}) we know
  \begin{align*}
    \sum_{S \colon |S| =d} |\fc{f}{S}|\leq O(d)^{d}.
  \end{align*}
  Thus, from Eq.~\eqref{eq:lb-l1-norm-f} we obtain
  \begin{align*}
    T \leq 2^{O(d\log d)}.
  \end{align*}
\end{proof}

Conjecture~\ref{conj:unbalancing} asks if a similar result also holds
when $T=2^{\omega(d)}$. 
We now consider a restricted class of polynomials called
\emph{block-multilinear polynomials}.
An $n$-variate polynomial is said to be \emph{block-multilinear}
if the input variables can be \emph{partitioned} into disjoint blocks
$A_1, \ldots, A_d \subseteq \{x_1,\ldots,x_n\}$
such that every monomial in the
polynomial has \emph{at most} one variable from each block.
Without loss of generality, we will assume that each block is of the same size. 
In other words, a block-multilinear polynomial
$p\colon(\reals^n)^d\rightarrow \reals$ can be written as
\begin{align}
\label{eq:defnofp}
p(x^1,\ldots,x^d)= \sum_{S\subseteq [nd] \colon \forall j \in [d],\; |S \cap x^j| \leq 1}\fc{p}{S} \prod_{l \in S}x_l,
\end{align}
where $\fc{p}{S} \in \reals$ for every $S \subseteq [nd]$. 
Note that this is the standard Fourier decomposition of $p$
if $x^i\in \pmset{n}$ for every $i \in [d]$.
Clearly such a block-multilinear polynomial has degree at most $d$
and sparsity at most $(n+1)^d$.
Such polynomials have found applications in
quantum computing~\cite{aaronsonambainis:forrelation,montanaro:hypercontractivity}, classical and quantum XOR games~\cite{briet:XOR},
polynomial decoupling~\cite{Odonnell:decoupling} and
in functional analysis which we discuss later.
Our main contribution in this section is that
we show a positive answer to Conjecture~\ref{conj:unbalancing}
for the class of flat block-multilinear polynomials.
\begin{theorem}[Restatement of Theorem~\ref{thm:unbalancing2-intro}]
  \label{thm:unbalancing2}
  If $p$ is an $n$-variate flat block-multilinear polynomial with degree $d$
  and sparsity $2^{\omega(d)}$,
  then $p$ cannot $1/3$-approximate a Boolean function.
\end{theorem}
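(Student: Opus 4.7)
The plan is to derive a contradiction from the assumption that such a $p$ $1/8$-approximates a Boolean $f$, combining three ingredients: (i) the block-multilinear lemma stated in the introduction (a block-multilinear polynomial of degree $d$ that $1/8$-approximates a Boolean $f$ forces $\deg(f) \leq d$); (ii) granularity of Fourier coefficients of bounded-degree Boolean functions, i.e., that every non-zero Fourier coefficient of a degree-$d$ Boolean function has magnitude at least $2^{-d}$; and (iii) the flatness argument underlying Claim~\ref{cl:entflat}, now tuned to $\varepsilon = 1/8$. The role of (ii) here is precisely what, assuming the FEI conjecture, was played by the entropy bound in Lemma~\ref{lem:flatpoly}: it converts the flatness of $p$ into a rigid lower bound on $|\fc{f}{S}|$ that is incompatible with $T$ being very large.

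In detail, assume $p$ $1/8$-approximates a Boolean $f$. By the block-multilinear lemma, $\deg(f)\leq d$, so by granularity every non-zero $|\fc{f}{S}|$ satisfies $|\fc{f}{S}|\geq 2^{-d}$. By flatness, every non-zero $|\fc{p}{S}|$ equals $\alpha/\sqrt{T}$ for some $\alpha \in [7/8, 9/8]$: indeed, $|p(x)-f(x)|\leq 1/8$ together with $|f(x)|=1$ gives $|p(x)|\in [7/8, 9/8]$, so by Parseval $\alpha^2 = \sum_S \fcsq{p}{S} = \E_x[p(x)^2]\in [(7/8)^2,(9/8)^2]$.

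Following the proof of Claim~\ref{cl:entflat}, set $\mathcal{A}:=\{S : |\fc{f}{S}|\geq 2\alpha/\sqrt{T}\}$. For every $S\in\mathcal{A}$, either $\fc{p}{S}=0$ or $|\fc{p}{S}|=\alpha/\sqrt{T}\leq |\fc{f}{S}|/2$, so $|\fc{f}{S}-\fc{p}{S}|\geq |\fc{f}{S}|/2$, which together with Parseval applied to $f-p$ yields
\[ \sum_{S\in \mathcal{A}} \fcsq{f}{S} \;\leq\; 4 \sum_{S} (\fc{f}{S}-\fc{p}{S})^2 \;=\; 4\,\E_x[(f(x)-p(x))^2] \;\leq\; 4\cdot (1/8)^2 \;=\; 1/16. \]
Since $T=2^{\omega(d)}$, for $d$ large enough we have $2\alpha/\sqrt{T} < 2^{-d}$, so by granularity every $S$ with $\fc{f}{S}\neq 0$ lies in $\mathcal{A}$. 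Applying Parseval to $f$ then gives the contradiction $1 = \sum_S \fcsq{f}{S} = \sum_{S\in\mathcal{A}} \fcsq{f}{S} \leq 1/16$.

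The main obstacle in this plan is establishing the block-multilinear lemma itself; the rest is essentially a quantitative repackaging of the flatness argument in Lemma~\ref{lem:flatpoly}, with unconditional granularity replacing the need to invoke the FEI conjecture. A natural approach to the block-multilinear lemma is via a symmetrization/restriction reduction, exploiting that a block-multilinear polynomial on $\pmset{nd}$ is a $d$-linear form over disjoint blocks of variables: one would try to extract from such a $p$ an \emph{exact} degree-$d$ polynomial representation of $f$ (for instance by averaging within blocks, or by a random-restriction argument that selects one coordinate per block and then corrects), since the generic Nisan--Szegedy-type relation $\deg(f)=\mathrm{poly}(\deg_{1/8}(f))$ is far too weak to give $\deg(f)\leq d$ on the nose.
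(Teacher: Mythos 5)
Your reduction from the theorem to the degree bound $\deg(f)\le d$ is correct, and it is essentially the same route the paper takes, just phrased without the entropy language: the paper combines Claim~\ref{cl:entflat} ($\ent(\hat{f}^2)\ge\Omega(\log T)$) with Lemma~\ref{lem:block-multilinear-deg} and the $2^{-d}$-granularity of Fourier coefficients of a degree-$d$ Boolean function (which bounds the number of nonzero coefficients, hence the entropy, by $O(d)$), whereas you apply granularity directly to force every nonzero $\fc{f}{S}$ into the set $\mathcal{A}$ of large coefficients and contradict Parseval. Both versions hinge on exactly the same three ingredients, so this part buys no new generality, though your phrasing is a bit more self-contained.

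The genuine gap is that you do not prove the block-multilinear lemma ($\deg(f)\le d$), and you explicitly flag this as ``the main obstacle.'' That lemma is where nearly all of the work in the paper's proof of Theorem~\ref{thm:unbalancing2} lives, and your two sketched routes toward it do not clearly get there. ``Averaging within blocks'' destroys the $\ell_\infty$-closeness to $f$: averages of $p$ over a block approximate averages of $f$, not $f$ itself, and those averages need not be Boolean. A ``random restriction that picks one coordinate per block'' presupposes that a single globally chosen coordinate per block carries the whole function, but the paper's structural Lemma~\ref{lem:properties-of-q} shows something subtler: after isolating one block $A_1$ and writing $p(x)=q_0(z)+\sum_{i}x_iq_i(z)$ with $z$ the variables outside $A_1$, for each fixed $z$ there is a \emph{$z$-dependent} unique index $j$ with $|q_j(z)|\ge 7/8$ and $\sum_{i\ne j}|q_i(z)|\le 1/8$. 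In particular, which coordinate ``matters'' varies with $z$, so no single restriction of the block can be fixed in advance. The paper then defines $\{-1,0,1\}$-valued roundings $\widetilde{q}_i$ of the $q_i$'s, shows inductively (on $d$, using that each $q_i$ is a degree-$(d-1)$ block-multilinear $1/8$-approximation of a $\{-1,0,1\}$-valued function obtained from restrictions of $f$) that $\deg(\widetilde{q}_i)\le d-1$, and verifies that $\widetilde{p}(x)=\widetilde{q}_0(z)+\sum_i x_i\widetilde{q}_i(z)$ equals $f$ exactly. That inductive rounding, and especially the uniqueness-per-$z$ dichotomy that makes the rounding well-behaved, is the missing idea; without it your plan does not close.
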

We defer the proof of this theorem to the next subsection.
We now continue with the relevance of block-multilinear polynomials
and the theorem above to functional analysis literature.

\paragraph{Relation between Theorem~\ref{thm:unbalancing2} and the Bohnenblust-Hille (BH) inequality}
Consider a homogeneous block-multilinear polynomial
$p\colon(\reals^{n})^d\rightarrow \reals$. 
One way to show that $p$ is not $1/3$-close to a Boolean function
would be to show that there exists $x'$ such that $|p(x')|$
is greater than $4/3$. Understanding if such an $x'$ exists
for such a polynomial $p$ can be cast as the following maximization~problem
\begin{align}
  \label{eq:maxp}
  \|p\|:=\max_{x^1,\ldots,x^d \in [-1,1]^n} \Big|\sum_{i_1,\ldots,i_d=1}^n \widehat{p}_{i_1,\ldots,i_d} x^1_{i_1}\cdots x^d_{i_d}\Big|.
\end{align}
In order to develop an intuition for the maximization problem,
consider the case $d=2$ and furthermore suppose
$\widehat{p}_{i_1,i_2}\in \pmset{}$ and $x^i\in \pmset{n}$,
then giving a lower bound on $\|p\|$ is well-known in computer science
as the so-called \emph{unbalancing lights} problem
(see~\cite[Section~2.5]{alon&spencer-book},
where they show the existence of sign vectors such that
$\|p\|\geq \sqrt{n}$).
For larger $d$ and arbitrary $p$,
showing lower bounds on $\|p\|$ in Eq.~\eqref{eq:maxp}
has been extensively studied in the functional analysis literature and
is sometimes referred to as the ``generalized unbalancing lights problem''.

The first paper giving a lower bound to Eq.~\eqref{eq:maxp}
was by Bohnenblust and Hille~\cite{bohnenblust:unbalancing} in 1931.
They gave a lower bound on the injective tensor norm of
degree-$d$ multilinear forms, which in our context translates
to a lower bound on $\|p\|$. To be precise, their result
states the following$\colon$  for every $n$, $d$ there exists
a constant $C_d\geq 1$ such that, for every degree-$d$ homogeneous 
block-multilinear polynomial 
$p\colon(\reals^n)^d\rightarrow \reals$, we  have
\begin{align}
  \label{eq:BHinequality}
  \Big(\sum_{i_1,\ldots,i_d=1}^n |\widehat{p}_{i_1,\ldots,i_d}|^{\frac{2d}{d+1}}\Big)^{\frac{d+1}{2d}}\leq C_d \cdot  \max_{x^1,\ldots,x^d\in [-1,1]^n} |p(x^1,\ldots,x^d)|.
\end{align}
The result of~\cite{bohnenblust:unbalancing} showed that  it suffices to pick
\(C_d=d^{\frac{d+1}{2d}}2^{\frac{d-1}{2}}\)
in order to satisfy Eq.~\eqref{eq:BHinequality}.
In addition, their bound on $C_d$ recovers the well-known
Littlewood's $4/3$ inequality~\cite{littlewood:inequality} when $d=2$.
Since their seminal work, a lot of  research in
the functional analysis literature (for the last 80 years!)
has been in finding the optimal BH-\emph{constants}, i.e.,
the smallest $C_d$ for which Eq.~\eqref{eq:BHinequality} holds.
We cite a few results~\cite{montanaro:hypercontractivity, albuquerque:unbalancing, pellergrino:sharpconstant,defant:unbalancing, defant1:unbalancing,  BPSS14},
referring the reader to the references within these papers for more. 
It is a long-standing open question
whether $C_d$ is a universal constant
(there has been recent work~\cite{pellergrino:sharpconstant}
giving numerical evidence that this is the case).

After a series of works, Pellegrino and Seoane-Sep\'{u}lveda
showed~\cite{pellegrino:newupperbound} that it suffices to pick
$C_d=\text{poly}(d)$. As far as we are aware,
the best upper bound on $C_d$ was shown by Bayart et al.~\cite{BPSS14}
as $C_d=O(d^{0.365})$. 
We also point the interested reader
to~\cite[Theorem~17]{montanaro:hypercontractivity} for a suboptimal,
yet elegant proof that shows that it suffices
to pick $C_d=O(d^{1.45})$ in order to satisfy Eq.~\eqref{eq:BHinequality}.

We now discuss the relevance of the BH-inequality
to a special case of Theorem~\ref{thm:unbalancing2}
where the approximating polynomial is also assumed to be homogeneous.
Consider an arbitrary flat homogeneous block-multilinear polynomial
$p\colon(\pmset{n})^d\rightarrow \reals$ with degree $d$ and sparsity $T$
that $1/3$-approximates a Boolean function.
That is, every non-zero Fourier coefficient
$|\widehat{p}_{i_1,\ldots,i_d}|$ equals $\alpha/\sqrt{T}$
for some  $\alpha \in [2/3,4/3]$.
Then using Eq.~\eqref{eq:BHinequality}, we get
\begin{align*}
  \Big(\Big(\frac{2/3}{\sqrt{T}}\Big)^{\frac{2d}{d+1}} \cdot
  T\Big)^{\frac{d+1}{2d}}\leq C_d\cdot \|p\| \leq (4/3)\cdot C_d.
\end{align*}
With further simplification, we have
\begin{align*}
  T \leq 2^{2d}\cdot C_d^{2d}.
\end{align*}
Using the result of Bayart et al.~\cite{BPSS14},
$C_d=O(d^{0.365})$, we get that the sparsity $T \leq 2^{O(d\log d)}$.
However, from the FEI conjecture (cf.~Claim~\ref{cl:entflat})
it follows that $T \leq 2^{O(d)}$. Thus, using the current best bound
on $C_d$ we cannot even conclude the special case of
Theorem~\ref{thm:unbalancing2}. 
However, if the long-standing open question of $C_d$
being a universal constant were true, then this special case 
follows. But proving $C_d$ a universal constant
seems to be a very hard problem. Nevertheless, in the next section,
we will prove Theorem~\ref{thm:unbalancing2}, 
while circumventing the barrier of improving the upper bound on $C_d$.

We remark that Theorem~\ref{thm:unbalancing2} does not say
anything about the BH-inequality, since Theorem~\ref{thm:unbalancing2}
states that a flat block-multilinear polynomial
either takes a value in the range $(-2/3,2/3)$ or
takes a value of magnitude more that $4/3$,
while the BH-inequality states that such a polynomial definitely
takes a value of high magnitude ($> 4/3$)
on at least one input.

\subsection{Proof of Theorem~\ref{thm:unbalancing2}}
We restate the theorem for convenience.
\paragraph{Theorem~\ref{thm:unbalancing2} (restated)} \emph{If $p$ is an $n$-variate flat block-multilinear polynomial with degree $d$ and sparsity $2^{\omega(d)}$, then $p$ cannot $1/3$-approximate a Boolean function.}

Without loss of generality, let us assume $d$ divides $n$.
Let $\{A_1,\ldots,A_d\}$ be a partition of $\{x_1,\ldots,x_n\}$
and for simplicity suppose $A_1 =\{x_1, \ldots, x_k\}$ for $k =n/d$.
Let $B=\{x_1,\ldots,x_n\}\setminus A_1$
denote the set of remaining variables.
Then the monomials of $p$ can be divided into
those containing variables in $A_1$ and
those independent of variables in $A_1$. So we can write $p$ as follows
\begin{align}
  \label{eq:rewritep}
  p(x) = q_0(x_B)+ \sum_{i=1}^kx_iq_i(x_B),
\end{align}
where $q_0,\ldots,q_k$ are polynomials of degree at most $d-1$.
From here on, for notational simplicity we simply rewrite $x_B$ as $z$.
\begin{lemma}
  \label{lem:properties-of-q}
  Let $p$ be a degree-$d$ block-multilinear polynomial that
  $1/3$-approximates a Boolean function
  $f \colon\pmset{n}\rightarrow \pmset{}$.
  Further,  let $q_0,\ldots,q_k$ be as defined in Eq.~\eqref{eq:rewritep}. Then,
  \begin{enumerate}[label=(\Roman*)]
  \item\label{item:1} For every $z \in\pmset{n-k}$,
    we have $q_i(z) \in [-4/3,-2/3]\cup[-1/3,1/3]\cup[2/3,4/3]$,
    i.e., $q_i$ is a $\frac{1}{3}$-approximation to
    a $\{-1,0,1\}$-valued function.
  \item\label{item:2} For every $z \in \pmset{n-k}$,
    there exists a unique $j \in \{0,\ldots, k\}$ which satisfies
    \begin{align*}
      |q_j(z)| \geq 2/3\quad \text{ and } \quad   \sum_{i \neq j} |q_i(z)| \leq 1/3.
    \end{align*}
  \end{enumerate}
\end{lemma}
\begin{proof}
  The proof of the first part is fairly straightforward,
  while the second part requires some calculations.
  \begin{proof}[Proof of~\ref{item:1}]
  We first rewrite $q_1,\ldots,q_k$ as follows:
  \begin{align*}
    q_i(z)=\frac{p(x_1,\ldots, x_{i-1},1,x_{i+1},\ldots,x_k,z) - p(x_1,\ldots,x_{i-1},-1,x_{i+1},\ldots ,x_k,z)}{2},
  \end{align*}
  for $i \in [k]$ and similarly we rewrite $q_0$ as
  \begin{align*}
    q_0(z)=\frac{p(x_1,\ldots,x_k,z) + p(-x_1,\ldots ,-x_k,z)}{2}.
  \end{align*}
  Now the first part follows from the fact that $p$ is
  a $1/3$-approximation to the Boolean function $f$.
  So the $q_i$s are $1/3$-approximation to a $\{-1,0,1\}$-valued function.
  \end{proof}
  \begin{proof}[Proof of~\ref{item:2}]
    Fix $z \in \pmset{n-k}$. First observe that for every
    $x_1,\ldots,x_k\in \pmset{}$, we have
    \begin{align}
      \label{eq:intervalofp}
      p(x_1,\ldots,x_k,z) \in \underbrace{\left[q_0(z)-\sum_{i=1}^k|q_i(z)|,~ q_0(z)+\sum_{i=1}^k|q_i(z)|\right]}_{:=\mathcal{I}(z)}.
    \end{align}
    Furthermore, there exists a choice of
    $x_1,\ldots ,x_k \in \pmset{}$
    such that $p$ evaluates to either one of the end points
    in the interval $\mathcal{I}(z)$.
    Moreover $\mathcal{I}(z)$ satisfies
    \emph{exactly} one of the following containments:
  \begin{enumerate}[label=(\alph*)]
  \item $\mathcal{I}(z) \subseteq [2/3,4/3]$, or
  \item $\mathcal{I}(z) \subseteq [-4/3,-2/3]$, or
  \item $\mathcal{I}(z)$ intersects both the intervals
    $[-4/3,-2/3]$ and $[2/3,4/3]$.
  \end{enumerate}
  We now prove that in all the three cases,
  there exists $j\in \{0,\ldots,k\}$ such that
  $|q_j(z)| \geq 2/3$ and  $\sum_{i\neq j}|q_i(z)| \leq 1/3$,
  hence proving the lemma statement. 
  \paragraph{Cases (a) and (b)}
  The proofs for Case~(a) and (b) are exactly the same,
  so we prove the lemma assuming
  $\mathcal{I}(z) \subseteq [2/3,4/3]$. In this case, observe that
  \(q_0(z)-\sum_{i=1}^k|q_i(z)| \geq 2/3\) and
  \(q_0(z)+\sum_{i=1}^k|q_i(z)| \leq 4/3\).
  Clearly this implies $q_0(z) \geq 2/3$.
  Furthermore, from both the inequalities it follows that
  \begin{align*}
    \sum_{i=1}^k|q_i(z)| \leq \min\left\{q_0(z) -2/3, 4/3 - q_0(z)\right\}.
  \end{align*}
  Using the fact that $q_0(z) \in [2/3, 4/3]$,
  it follows that $\sum_{i=1}^k|q_i(z)| \leq 1/3$,
  which concludes the proof for Case~(a).
  \paragraph{Case (c)}
  Recall the range of $p(x_1,\ldots ,x_k,z)$
  from Eq.~\eqref{eq:intervalofp}. 
  Let $a,b \in \pmset{k}$ be such that
  \begin{align}
    \label{eq:defnofpab}
    p(a,z) = q_0(z)-\sum_{i=1}^k|q_i(z)| \quad \text{ and } p(b,z) = q_0(z)+\sum_{i=1}^k|q_i(z)|.
  \end{align}
  It is not hard to see that $a = -b$.
  Consider a path from $a$ to $b$ on the Boolean hypercube.
  Since $p(a,z) \in [-4/3,-2/3]$ and $p(b,z) \in [2/3,4/3]$,
  there exists an edge\footnote{An \emph{edge} in the \emph{direction} of $j\in [k]$ on the Boolean hypercube refers to a tuple $(w,w^{(j)})$ where $w,w^{(j)}\in \pmset{k}$ and $w^{(j)}$ is the bit string obtained by flipping the sign of the $j$th bit of $w$.}
  on this path such that the value of $p$ on this edge
  jumps from the interval $[-4/3,-2/3]$ to $[2/3,4/3]$.
  For now the existence of such an edge is sufficient,
  below we explicitly construct such an edge.
  Let us denote the direction of this edge by $j \in [k]$.
  We now claim that $|q_j(z)| \geq 2/3$ and
  $\sum_{i \neq j} |q_i(z)| \leq 1/3$.

  The first claim $|q_j(z)| \geq 2/3$ follows immediately,
  because the change in $p$ on this edge is
  $2\cdot |q_j(z)|$ (by Eq.~\eqref{eq:defnofpab}), and
  the change of value of $p$ on this edge is $\geq 2\cdot (2/3)$.
  Thus we have $|q_j(z)| \geq 2/3$.
  
  We now prove $\sum_{i \neq j} |q_i(z)| \leq 1/3$ by considering
  two cases based on whether $\mathsf{sign}(q_0(z))$ is positive or negative.
  Since the proofs for both cases are similar,
  for simplicity we prove it assuming $\mathsf{sign}(q_0(z))$ is positive.
  We now define the edge between $a,b$ where the value of
  $p$ jumps from the interval $[-4/3,-2/3]$ to $[2/3,4/3]$.
  Consider the edge in the $j$th direction given by setting
  all the variables except $x_j$ as follows$\colon$
  \(x_i = \mathsf{sign}(q_i(z))$ for $i\in[k]\setminus \{j\}\).
  Clearly the value of $p$ only depends on $x_j$ and
  equals \(\sum_{i \neq j} |q_i(z)| + x_jq_j(z)\).
  When $x_j = \mathsf{sign}(q_j(z))$,
  $p$ takes the value \(\sum_{i \neq j} |q_i(z)| + |q_j(z)|\)
  which is in the interval $[2/3, 4/3]$ and
  \begin{align}
    \label{eq:boundonsumqi9/9}
    \sum_{i  \neq j} |q_i(z)| + |q_j(z)| \in [2/3,4/3], \text{ implies } \sum_{i  \neq j} |q_i(z)|\leq 4/3 - |q_j(z)|.
  \end{align}
  Similarly, when $x_j = -\mathsf{sign}(q_j(z))$
  then $p$ takes the value
  \(\sum_{i \neq j} |q_i(z)| - |q_j(z)|\)
  which is in the interval $[-4/3, -2/3]$, and
  \begin{align}
    \label{eq:boundonsumqi9/9-2}
    \sum_{i  \neq j} |q_i(z)| - |q_j(z)| \in [-4/3,-2/3], \text{ implies } \sum_{i  \neq j} |q_i(z)|
    \leq |q_j(z)|-2/3.
  \end{align}	
  From Eq.~\eqref{eq:boundonsumqi9/9} and~\eqref{eq:boundonsumqi9/9-2}
  we have $\sum_{i \neq j} |q_i(z)| \leq 1/3$. 
  The uniqueness of $j$ in each case follows because
  $\sum_{i=0}^k|q_i(z)| \leq 4/3$.
  \end{proof}
\end{proof}
We now show that if $p(x)$ $\frac{1}{3}$-approximates
a Boolean function $f$ then $\deg(f) \leq d$.
\begin{lemma}
  \label{lem:block-multilinear-deg}
  Let $p$ be a degree-$d$ block-multilinear polynomial that
  $\frac{1}{3}$-approximates a Boolean function
  $f \colon\pmset{n}\rightarrow \pmset{}$.
  Then, \(\deg(f) \leq d.\)
\end{lemma}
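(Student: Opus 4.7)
The plan is to use Lemma~\ref{lem:properties-of-q}(ii) block by block to show that $f$, viewed as a multilinear polynomial on $\pmset{n}$, uses at most one variable from each of the $d$ blocks $A_1,\ldots,A_d$. This immediately yields $\deg(f)\le d$.

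The first, and main, step is to establish a pointwise statement for block $A_1$: for every $z \in \pmset{n-k}$, the restriction $x \mapsto f(x,z)$ is a polynomial of degree at most $1$ on $\pmset{k}$. To see this, let $j \in \{0,1,\ldots,k\}$ be the unique index furnished by Lemma~\ref{lem:properties-of-q}(ii). If $j = 0$, then $\sum_{i=1}^{k}|q_i(z)| \le 1/8$, so $|p(x,z) - q_0(z)| \le 1/8$ for every $x \in \pmset{k}$; combined with $|f(x,z) - p(x,z)| \le 1/8$ and $|q_0(z)| \ge 7/8$, this forces $f(x,z) = \mathsf{sign}(q_0(z))$ for every $x$, i.e., $f(\cdot,z)$ is constant. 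If $j \ge 1$, then the ``noise'' $q_0(z) + \sum_{i \neq j} x_i q_i(z)$ is bounded by $1/8$ in magnitude, so both $p(x,z)$ and $f(x,z)$ share the sign of $x_j q_j(z)$, giving $f(x,z) = x_j \cdot \mathsf{sign}(q_j(z))$, which is linear in $x_j$ alone.

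Next I would translate this pointwise fact into a statement about the Fourier coefficients of $f$. Writing the Fourier expansion of $f$ by grouping variables inside and outside $A_1$,
\[
f(x,z) \;=\; \sum_{S_1 \subseteq A_1} \chi_{S_1}(x)\, \widetilde{f}_{S_1}(z), \qquad \widetilde{f}_{S_1}(z) := \sum_{S' \subseteq B} \widehat{f}(S_1 \cup S')\, \chi_{S'}(z).
\]
For each fixed $z$, uniqueness of the multilinear representation on $\pmset{k}$ combined with the previous paragraph forces $\widetilde{f}_{S_1}(z) = 0$ whenever $|S_1|\ge 2$. Since this holds for every $z$, we get $\widetilde{f}_{S_1} \equiv 0$, and then uniqueness of the Fourier expansion on $\pmset{n-k}$ gives $\widehat{f}(S_1 \cup S') = 0$ for all $S'\subseteq B$ and all $|S_1| \ge 2$. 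Equivalently, $\widehat{f}(S) \neq 0$ implies $|S \cap A_1| \le 1$.

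Finally, the block-multilinear hypothesis on $p$ is symmetric in the $d$ blocks, so exactly the same argument applies with $A_1$ replaced by any $A_j$. Hence every non-zero Fourier coefficient $\widehat{f}(S)$ satisfies $|S \cap A_j| \le 1$ for every $j \in [d]$, and summing over $j$ gives $|S| \le d$, proving $\deg(f) \le d$. The only substantive step is the pointwise degree bound in block $A_1$, and Lemma~\ref{lem:properties-of-q}(ii) is precisely tailored to deliver it; the rest is routine Fourier bookkeeping, so I do not expect any further obstacle.
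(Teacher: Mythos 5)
Your proof is correct, but it takes a genuinely different route from the paper. The paper argues by induction on $d$: it applies the inductive hypothesis to the $(d-1)$-block polynomials $p(\pm 1,\ldots,\pm 1,z)$ to get low-degree $\{-1,0,1\}$-valued functions $\widetilde{q}_i$ approximated by the $q_i$, assembles them into $\widetilde{p}(x)=\widetilde{q}_0(z)+\sum_i x_i\widetilde{q}_i(z)$, and uses Lemma~\ref{lem:properties-of-q}(ii) to argue $\widetilde{p}$ is $\pm1$-valued and within $1/4$ of $f$, hence $\widetilde{p}=f$ and $\deg(f)\le\deg(\widetilde{p})\le d$. You instead avoid induction altogether: you fix the variables outside one block and show pointwise, via Lemma~\ref{lem:properties-of-q}(ii), that the restriction of $f$ to that block is a degree-$\le 1$ function; uniqueness of the multilinear representation then kills every Fourier coefficient $\widehat{f}(S)$ with $|S\cap A_1|\ge 2$, and by the obvious symmetry of the block decomposition the same holds for every block, giving $|S|=\sum_j|S\cap A_j|\le d$. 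Both arguments hinge on the same key lemma, but your approach is non-inductive and yields (explicitly rather than implicitly) the stronger conclusion that $f$ is itself block-multilinear with respect to the same partition; the paper's approach has the minor advantage of producing an explicit Boolean polynomial $\widetilde{p}$ equal to $f$. One small bookkeeping point worth stating explicitly: in the case $j\ge 1$, the sum $\sum_{i\neq j}|q_i(z)|\le 1/8$ from Lemma~\ref{lem:properties-of-q}(ii) ranges over $i\in\{0,\ldots,k\}\setminus\{j\}$ and thus already includes $|q_0(z)|$, which is exactly what you need for your bound on the ``noise'' term.
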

\begin{proof}
  We will establish the proof by induction on $d$. 
  \paragraph{Base case} We have $d=1$.
  Let $p(x) = a_0 + \sum_{i=1}^na_ix_i$, where $a_i \in \reals$.
  Since $p$ approximates $f$, it follows that
  $\mathsf{sign}(p(x)) = f(x)$ for every $x$. We now express
  $\mathsf{sign}(p(x))$ as a polynomial of degree at most $1$.
  From Lemma~\ref{lem:properties-of-q}~\ref{item:2}, it follows that there
  exists a unique $i \in \{0\}\cup[n]$ such that
  \begin{align*}
    \mathsf{sign}(p(x)) =
    \begin{cases}
      \mathsf{sign}(a_i)\cdot x_i & \text{if } i \in [n], \\
      \mathsf{sign}(a_0) & \mbox{otherwise.}
    \end{cases}
  \end{align*}
  Thus the base case follows.  
  \paragraph{Inductive assumption} Assume the lemma statement for
  polynomials $p$ of degree at most $d-1$. 
  \paragraph{Inductive step} Consider the decomposition of
  $p(x_1,\ldots ,x_n)$
  \begin{align*} p(x) = q_0(z) +\sum_{i=1}^kx_iq_i(z).\end{align*}
  Recall from the proof of Lemma~\ref{lem:properties-of-q}~\ref{item:1},
  that $q_i(z)$ can be expressed as a difference between two
  $(d-1)$-block-multilinear polynomials by fixing $x_1,\ldots, x_k$. That is,
  \begin{align}
    \label{eq:defnofq0}
    \begin{aligned}
      q_0(z) & = \frac{p(1,\ldots,1,z) + p(-1,\ldots ,-1,z)}{2}, \\
      q_i(z) & = \frac{p(1,\ldots, 1,1,1,\ldots,1,z) - p(1,\ldots,1,-1,1,\ldots ,1,z)}{2} \quad \text{ for } i\in [k],
    \end{aligned}
  \end{align}
  where the $-1$ in the expression
  $p(1,\ldots,1,-1,1,\ldots ,1,z)$ is at the $i$-th coordinate.
  Clearly, it follows that $p(-1,\ldots,-1,z)$ and
  $p(1,\ldots,1,z)$ are block-multilinear polynomials
  with $d-1$ parts that $\frac{1}{3}$-approximate
  Boolean functions $f(-1,\ldots ,-1,z)$ and $f(1,\ldots,1,z)$, respectively.
  Therefore, by the inductive assumption,
  both $\deg(f(-1,\ldots,-1,z))$ and $\deg(f(1,\ldots,1,z))$ are
  $\leq d-1$. Additionally the function $\widetilde{q}_0$ defined as
  \begin{align*}
    \widetilde{q}_0(z) = \frac{f(1,\ldots,1,z) + f(-1,\ldots ,-1,z)}{2}
  \end{align*}
  is a $\{-1,0,1\}$-valued function satisfying
  $\deg(\widetilde{q}_0) \leq d-1$.
  Moreover, $q_0$ (as defined in Eq.~\eqref{eq:defnofq0})
  is a  $\frac{1}{3}$-approximation to $\widetilde{q}_0$.
  In a similar fashion, one can define $\widetilde{q_i}$
  for all $i \in [k]$  which is a $\{-1,0,1\}$-valued function
  of degree at most $d-1$ that additionally satisfies that
  $q_i$ (as defined in Eq.~\eqref{eq:defnofq0}) is
  a $\frac{1}{3}$-approximation to $\widetilde{q}_i$.
  Finally consider the polynomial $\widetilde{p}$ defined as follows,
  \begin{align*}
    \widetilde{p}(x) = \widetilde{q}_0(z) +\sum_{i=1}^kx_i\widetilde{q}_i(z).
  \end{align*}
  Firstly note that $\widetilde{p}(x)$ is a polynomial
  that takes value in $\{-1, +1\}$ for all $x\in \{-1, +1\}^n$.
  This is because for all $i$, $q_i(z)$ $\frac{1}{3}$-approximates
  $\widetilde{q}_i(z)$ and so from Lemma~\ref{lem:properties-of-q},
  for any given $z$ there exists exactly one $i$ such that
  $\widetilde{q}_i(z)$ takes a non-zero value. So  for a given $x$
  if $j$ is the unique index from  Lemma~\ref{lem:properties-of-q}
  such that $|q_j(z)| \geq 2/3$,
  then for that $x$ we have $\widetilde{p}(x) = x_j\widetilde{q}_j(z)$.
  
  From this observation we also get
  $|f(x) - \widetilde{p}(x)| \leq 2/3$ for any $x \in \pmset{n}$.
  This is because for a given $x$ if $j$ is the unique index
  from Lemma~\ref{lem:properties-of-q} such that
  $4/3\geq |q_j(z)| \geq 2/3$, then $f(x) = \mathsf{sign}(x_jq_j(z))$
  and $|f(x) - x_jq_j(z)| \leq 1/3$
  (because of  Lemma~\ref{lem:properties-of-q}~\ref{item:2}), 
   $|x_jq_j(z) - x_j\widetilde{q}_j(z)| \leq 1/3$
  (by definition of $\widetilde{q}_j$) and
  $\widetilde{p}(x) = x_j\widetilde{q}_j(z)$.
  Since both $f$ and $\widetilde{p}$ are
  $\{-1, +1\}$-valued functions and 
  $|f(x) - \widetilde{p}(x)| \leq 2/3$,
  it follows that $\widetilde{p}(x) = f(x)$.
	
  By construction, the degree of $\widetilde{p}$ is at most $d$.
  The lemma now follows as $\widetilde{p}(x) = f(x)$.
\end{proof}
Using Claim~\ref{cl:entflat} and Lemma~\ref{lem:block-multilinear-deg},
it now follows that if $p$ is a degree-$d$ flat block-multilinear
polynomial that $1/3$-approximates a Boolean function,
then the sparsity of $p$ is at most $2^{O(d)}$.
This completes the proof of Theorem~\ref{thm:unbalancing2}.

\section{Conclusion}
\label{sec:conclusion}
We gave improved upper bounds on Fourier entropy of
Boolean functions in terms of average unambiguous
(parity)-certificate complexity, and as a corollary
verified the FEI conjecture for functions with
bounded average unambiguous (parity)-certificate complexity.
We established many bounds on Fourier min-entropy
in terms of analytic and combinatorial measures,
namely minimum certificate complexity,
logarithm of the approximate spectral norm and
randomized (parity)-decision tree complexity.
As a corollary to this,
we verified the FMEI conjecture for read-$k$ $\dnf$s.
We also studied structural implications of the FEI conjecture
on approximating polynomials.
In particular, we proved that flat block-multilinear
polynomials of degree $d$ and sparsity $2^{\omega(d)}$
can not approximate Boolean functions.

We now list a few open problems which we believe
are structurally interesting and could lead towards proving
the FEI or FMEI conjecture.
Let $f\colon\pmset{n}\rightarrow \pmset{}$ be a Boolean function.
\begin{enumerate}
\item Does there exist a Fourier coefficient $S\subseteq [n]$
  such that $|\widehat{f}(S)|\geq 2^{-O(\deg_{1/3}(f))}$?
  This would show $\ent_{\infty}(\hat{f}^2)\leq O(\deg_{1/3}(f))$.
\item Can we show $\ent(\hat{f}^2)\leq O(Q(f))$? Or,
  $\ent_{\infty}(\hat{f}^2)\leq O(Q(f))$?
  (where $Q(f)$ is the $1/3$-error quantum query complexity of~$f$,
  which Beals et al.~\cite{bbcmw:polynomialsj} showed
  to be at least $\deg_{1/3}(f)/2$).
\item Does there exist a universal constant $\lambda>0$ such that
  $\ent(\hat{f}^2)\leq \lambda\cdot \min\{\certificate^1(f),\certificate^0(f)\}$? This would resolve Mansour's conjecture.
\end{enumerate}
In an earlier version of this manuscript we suggested that
bounding the logarithm of the approximate spectral norm by
$O(\deg_{1/3}(f))$ or $O(Q(f))$ might be an approach to answer
Question (1) or (2) above. However,
in a very recent work \cite{CCMP19} it is shown that
$\log (\|\widehat{f}\|_{1,\varepsilon})$ could be as large as
$\Omega(Q(f)\cdot \log n)$, thus nullifying the suggested approach.

\begin{acks}
Part of this work was carried out when NS and SC visited CWI, Amsterdam. SA did most of this work while a Postdoc at Center for Theoretical Physics at MIT, PhD student at QuSoft, CWI,
Amsterdam, and a visitor at University of Bristol (partially supported by EPSRC grant EP/L021005/1).  SA thanks Ashley Montanaro for his hospitality.
NS and SC would like to thank Satya Lokam for many helpful discussions on the Fourier entropy-Influence conjecture. SA and SC thank Jop Bri\"{e}t for pointing us to the literature on unbalancing lights and many useful discussions regarding Section~\ref{sec:unbalancinglights}. We also thank Penghui Yao and Avishay Tal for discussions during the course of this project, and Fernando Vieira Costa J\'{u}nior for pointing us to the reference~\cite{BPSS14}. Finally, we thank the anonymous reviewers for many helpful comments that greatly improved the presentation of the paper. An extended abstract of this manuscript appeared in the conference proceedings of STACS 2020 \cite{ACKSW20}. 

\paragraph{Funding}
\subparagraph{Srinivasan Arunachalam:} Work done when at QuSoft, CWI, Amsterdam, supported by ERC Consolidator Grant 615307 QPROGRESS and MIT-IBM Watson AI Lab under the project {\it Machine Learning in Hilbert space}.
\subparagraph{Michal Kouck{\'{y}}:} Partially supported by ERC Consolidator Grant 616787 LBCAD, and GA\v{C}R grant 19-27871X.
\subparagraph{Nitin Saurabh:} Part of the work was done when the author was at IUUK Prague supported by the European Union's Seventh Framework Programme (FP/2007-2013)/ERC Grant Agreement no. 616787, and at Max Planck Institut f\"{u}r Informatik, Saarbr\"{u}cken, Germany.
\subparagraph{Ronald de Wolf:} Partially supported by ERC Consolidator Grant 615307 QPROGRESS (ended Feb 2019) and by NWO under QuantERA project QuantAlgo 680-91-034 and the Quantum Software Consortium.
\end{acks}

\bibliographystyle{ACM-Reference-Format}
\bibliography{entropy}


\begin{thebibliography}{71}


\ifx \showCODEN    \undefined \def \showCODEN     #1{\unskip}     \fi
\ifx \showDOI      \undefined \def \showDOI       #1{#1}\fi
\ifx \showISBNx    \undefined \def \showISBNx     #1{\unskip}     \fi
\ifx \showISBNxiii \undefined \def \showISBNxiii  #1{\unskip}     \fi
\ifx \showISSN     \undefined \def \showISSN      #1{\unskip}     \fi
\ifx \showLCCN     \undefined \def \showLCCN      #1{\unskip}     \fi
\ifx \shownote     \undefined \def \shownote      #1{#1}          \fi
\ifx \showarticletitle \undefined \def \showarticletitle #1{#1}   \fi
\ifx \showURL      \undefined \def \showURL       {\relax}        \fi
\providecommand\bibfield[2]{#2}
\providecommand\bibinfo[2]{#2}
\providecommand\natexlab[1]{#1}
\providecommand\showeprint[2][]{arXiv:#2}

\bibitem[\protect\citeauthoryear{Aaronson and Ambainis}{Aaronson and
  Ambainis}{2018}]%
        {aaronsonambainis:forrelation}
\bibfield{author}{\bibinfo{person}{S. Aaronson} {and} \bibinfo{person}{A.
  Ambainis}.} \bibinfo{year}{2018}\natexlab{}.
\newblock \showarticletitle{Forrelation: {A} Problem That Optimally Separates
  Quantum from Classical Computing}.
\newblock \bibinfo{journal}{\emph{{SIAM} Journal of Computing}}
  \bibinfo{volume}{47}, \bibinfo{number}{3} (\bibinfo{year}{2018}),
  \bibinfo{pages}{982--1038}.
\newblock


\bibitem[\protect\citeauthoryear{Akavia, Bogdanov, Guo, A.Kamath, and
  A.Rosen}{Akavia et~al\mbox{.}}{2014}]%
        {ABGKR14}
\bibfield{author}{\bibinfo{person}{A. Akavia}, \bibinfo{person}{A. Bogdanov},
  \bibinfo{person}{S. Guo}, \bibinfo{person}{A.Kamath}, {and}
  \bibinfo{person}{A.Rosen}.} \bibinfo{year}{2014}\natexlab{}.
\newblock \showarticletitle{Candidate Weak Pseudorandom Functions in
  {AC}$^0\circ${MOD2}}. In \bibinfo{booktitle}{\emph{Proceedings of the 5th
  Conference on Innovations in Theoretical Computer Science, {ITCS} 2014}}.
  \bibinfo{publisher}{ACM}, \bibinfo{pages}{251--260}.
\newblock


\bibitem[\protect\citeauthoryear{Albuquerque, Bayart, Pellegrino, and
  Seoane-Sep\'{u}lveda}{Albuquerque et~al\mbox{.}}{2014}]%
        {albuquerque:unbalancing}
\bibfield{author}{\bibinfo{person}{N. Albuquerque}, \bibinfo{person}{F.
  Bayart}, \bibinfo{person}{D. Pellegrino}, {and} \bibinfo{person}{J.~B.
  Seoane-Sep\'{u}lveda}.} \bibinfo{year}{2014}\natexlab{}.
\newblock \showarticletitle{Sharp generalizations of the multilinear
  {B}ohnenblust-{H}ille inequality}.
\newblock \bibinfo{journal}{\emph{Journal of Functional Analysis}}
  \bibinfo{volume}{266}, \bibinfo{number}{6} (\bibinfo{year}{2014}),
  \bibinfo{pages}{3276--3740}.
\newblock


\bibitem[\protect\citeauthoryear{Alon and Spencer}{Alon and Spencer}{2000}]%
        {alon&spencer-book}
\bibfield{author}{\bibinfo{person}{N. Alon} {and} \bibinfo{person}{J.
  Spencer}.} \bibinfo{year}{2000}\natexlab{}.
\newblock \bibinfo{booktitle}{\emph{The Probabilistic Method}}.
\newblock \bibinfo{publisher}{Wiley-Interscience Series in Discrete Mathematics
  and Optimization}.
\newblock


\bibitem[\protect\citeauthoryear{Amano}{Amano}{2011}]%
        {Amano11}
\bibfield{author}{\bibinfo{person}{K. Amano}.} \bibinfo{year}{2011}\natexlab{}.
\newblock \showarticletitle{Tight Bounds on the Average Sensitivity of
  k-{CNF}}.
\newblock \bibinfo{journal}{\emph{Theory of Computing}} \bibinfo{volume}{7},
  \bibinfo{number}{4} (\bibinfo{year}{2011}), \bibinfo{pages}{45--48}.
\newblock


\bibitem[\protect\citeauthoryear{Ambainis, Kokainis, and Kothari}{Ambainis
  et~al\mbox{.}}{2016}]%
        {ambainis:unambiguous}
\bibfield{author}{\bibinfo{person}{A. Ambainis}, \bibinfo{person}{M. Kokainis},
  {and} \bibinfo{person}{R. Kothari}.} \bibinfo{year}{2016}\natexlab{}.
\newblock \showarticletitle{Nearly Optimal Separations Between Communication
  (or Query) Complexity and Partitions}. In
  \bibinfo{booktitle}{\emph{Proceedings of the 31st Conference on Computational
  Complexity, {CCC} 2016}}. \bibinfo{pages}{4:1--4:14}.
\newblock


\bibitem[\protect\citeauthoryear{Arunachalam, Chakraborty, Kouck{\'{y}},
  Saurabh, and de~Wolf}{Arunachalam et~al\mbox{.}}{2020}]%
        {ACKSW20}
\bibfield{author}{\bibinfo{person}{S. Arunachalam}, \bibinfo{person}{S.
  Chakraborty}, \bibinfo{person}{M. Kouck{\'{y}}}, \bibinfo{person}{N.
  Saurabh}, {and} \bibinfo{person}{R. de Wolf}.}
  \bibinfo{year}{2020}\natexlab{}.
\newblock \showarticletitle{Improved Bounds on Fourier Entropy and
  Min-Entropy}. In \bibinfo{booktitle}{\emph{Proceedings of the 37th
  International Symposium on Theoretical Aspects of Computer Science, {STACS}
  2020}} \emph{(\bibinfo{series}{LIPIcs}, Vol.~\bibinfo{volume}{154})}.
  \bibinfo{pages}{45:1--45:19}.
\newblock


\bibitem[\protect\citeauthoryear{Bayart, Pellegrino, and
  Seoane-Sep\'{u}lveda}{Bayart et~al\mbox{.}}{2014}]%
        {BPSS14}
\bibfield{author}{\bibinfo{person}{F. Bayart}, \bibinfo{person}{D. Pellegrino},
  {and} \bibinfo{person}{J.~B. Seoane-Sep\'{u}lveda}.}
  \bibinfo{year}{2014}\natexlab{}.
\newblock \showarticletitle{The {B}ohr radius of the n{-}dimensional polydisk
  is equivalent to $\sqrt{(\log{n})/n}$}.
\newblock \bibinfo{journal}{\emph{Advances in Mathematics}}
  \bibinfo{volume}{264} (\bibinfo{year}{2014}), \bibinfo{pages}{726--746}.
\newblock


\bibitem[\protect\citeauthoryear{Beals, Buhrman, Cleve, Mosca, and Wolf}{Beals
  et~al\mbox{.}}{2001}]%
        {bbcmw:polynomialsj}
\bibfield{author}{\bibinfo{person}{R. Beals}, \bibinfo{person}{H. Buhrman},
  \bibinfo{person}{R. Cleve}, \bibinfo{person}{M. Mosca}, {and}
  \bibinfo{person}{R.~{de} Wolf}.} \bibinfo{year}{2001}\natexlab{}.
\newblock \showarticletitle{Quantum Lower Bounds by Polynomials}.
\newblock \bibinfo{journal}{\emph{J. ACM}} \bibinfo{volume}{48},
  \bibinfo{number}{4} (\bibinfo{year}{2001}), \bibinfo{pages}{778--797}.
\newblock


\bibitem[\protect\citeauthoryear{Ben{-}David, Hatami, and Tal}{Ben{-}David
  et~al\mbox{.}}{2017}]%
        {shalev:unambigious}
\bibfield{author}{\bibinfo{person}{S. Ben{-}David}, \bibinfo{person}{P.
  Hatami}, {and} \bibinfo{person}{A. Tal}.} \bibinfo{year}{2017}\natexlab{}.
\newblock \showarticletitle{Low-Sensitivity Functions from Unambiguous
  Certificates}. In \bibinfo{booktitle}{\emph{Proceedings of the 8th
  Innovations in Theoretical Computer Science Conference, {ITCS} 2017}}.
  \bibinfo{pages}{28:1--28:23}.
\newblock


\bibitem[\protect\citeauthoryear{Bohnenblust and Hille}{Bohnenblust and
  Hille}{1931}]%
        {bohnenblust:unbalancing}
\bibfield{author}{\bibinfo{person}{H.~F. Bohnenblust} {and} \bibinfo{person}{E.
  Hille}.} \bibinfo{year}{1931}\natexlab{}.
\newblock \showarticletitle{On the absolute convergence of {D}irichlet series}.
\newblock \bibinfo{journal}{\emph{Annals of Mathematics}}
  (\bibinfo{year}{1931}), \bibinfo{pages}{600--622}.
\newblock


\bibitem[\protect\citeauthoryear{Boppana}{Boppana}{1997}]%
        {Bop97}
\bibfield{author}{\bibinfo{person}{R. Boppana}.}
  \bibinfo{year}{1997}\natexlab{}.
\newblock \showarticletitle{The Average Sensitivity of Bounded-Depth Circuits}.
\newblock \bibinfo{journal}{\emph{Inform. Process. Lett.}}
  \bibinfo{volume}{63}, \bibinfo{number}{5} (\bibinfo{year}{1997}),
  \bibinfo{pages}{257--261}.
\newblock


\bibitem[\protect\citeauthoryear{Bourgain and Kalai}{Bourgain and
  Kalai}{1997}]%
        {BK97}
\bibfield{author}{\bibinfo{person}{J. Bourgain} {and} \bibinfo{person}{G.
  Kalai}.} \bibinfo{year}{1997}\natexlab{}.
\newblock \showarticletitle{Influences of Variables and Threshold Intervals
  under Group Symmetries}.
\newblock \bibinfo{journal}{\emph{Geometric and Functional Analysis (GAFA)}}
  \bibinfo{volume}{7}, \bibinfo{number}{3} (\bibinfo{year}{1997}),
  \bibinfo{pages}{438--461}.
\newblock


\bibitem[\protect\citeauthoryear{Brandman, Orlitsky, and Hennessy}{Brandman
  et~al\mbox{.}}{1990}]%
        {BOH90}
\bibfield{author}{\bibinfo{person}{Y. Brandman}, \bibinfo{person}{A. Orlitsky},
  {and} \bibinfo{person}{J. Hennessy}.} \bibinfo{year}{1990}\natexlab{}.
\newblock \showarticletitle{A Spectral Lower Bound Technique for the Size of
  Decision Trees and Two-Level AND/OR Circuits}.
\newblock \bibinfo{journal}{\emph{IEEE Transactions of Computers}}
  \bibinfo{volume}{39}, \bibinfo{number}{2} (\bibinfo{year}{1990}),
  \bibinfo{pages}{282--287}.
\newblock


\bibitem[\protect\citeauthoryear{Bri{\"{e}}t, Buhrman, Lee, and
  Vidick}{Bri{\"{e}}t et~al\mbox{.}}{2013}]%
        {briet:XOR}
\bibfield{author}{\bibinfo{person}{J. Bri{\"{e}}t}, \bibinfo{person}{H.
  Buhrman}, \bibinfo{person}{T. Lee}, {and} \bibinfo{person}{T. Vidick}.}
  \bibinfo{year}{2013}\natexlab{}.
\newblock \showarticletitle{Multipartite entanglement in {XOR} games}.
\newblock \bibinfo{journal}{\emph{Quantum Information {\&} Computation}}
  \bibinfo{volume}{13}, \bibinfo{number}{3-4} (\bibinfo{year}{2013}),
  \bibinfo{pages}{334--360}.
\newblock
\newblock
\shownote{arXiv:0911.4007.}


\bibitem[\protect\citeauthoryear{Buhrman and Wolf}{Buhrman and Wolf}{2002}]%
        {buhrman&wolf:dectreesurvey}
\bibfield{author}{\bibinfo{person}{H. Buhrman} {and} \bibinfo{person}{R.~{de}
  Wolf}.} \bibinfo{year}{2002}\natexlab{}.
\newblock \showarticletitle{Complexity Measures and Decision Tree Complexity: A
  Survey}.
\newblock \bibinfo{journal}{\emph{Theoretical Computer Science}}
  \bibinfo{volume}{288}, \bibinfo{number}{1} (\bibinfo{year}{2002}),
  \bibinfo{pages}{21--43}.
\newblock


\bibitem[\protect\citeauthoryear{Bun and Thaler}{Bun and Thaler}{2013}]%
        {bun&thaler:dualpoly}
\bibfield{author}{\bibinfo{person}{M. Bun} {and} \bibinfo{person}{J. Thaler}.}
  \bibinfo{year}{2013}\natexlab{}.
\newblock \showarticletitle{Dual Lower Bounds for Approximate Degree and
  {M}arkov-{B}ernstein Inequalities}. In \bibinfo{booktitle}{\emph{Proceedings
  of the 40th International Colloquium on Automata, Languages, and Programming,
  {ICALP} 2013}}. \bibinfo{pages}{303--314}.
\newblock


\bibitem[\protect\citeauthoryear{Chakraborty, Chattopadhyay, Mande, and
  Paraashar}{Chakraborty et~al\mbox{.}}{2020}]%
        {CCMP19}
\bibfield{author}{\bibinfo{person}{S. Chakraborty}, \bibinfo{person}{A.
  Chattopadhyay}, \bibinfo{person}{N.~S. Mande}, {and} \bibinfo{person}{M.
  Paraashar}.} \bibinfo{year}{2020}\natexlab{}.
\newblock \showarticletitle{Quantum Query-To-Communication Simulation Needs a
  Logarithmic Overhead}. In \bibinfo{booktitle}{\emph{Proceedings of the 35th
  Computational Complexity Conference, {CCC} 2020}}
  \emph{(\bibinfo{series}{LIPIcs}, Vol.~\bibinfo{volume}{169})}.
  \bibinfo{pages}{32:1--32:15}.
\newblock


\bibitem[\protect\citeauthoryear{Chakraborty, Karmalkar, Kundu, Lokam, and
  Saurabh}{Chakraborty et~al\mbox{.}}{2018}]%
        {chakrabortyetal:feirandomltf}
\bibfield{author}{\bibinfo{person}{S. Chakraborty}, \bibinfo{person}{S.
  Karmalkar}, \bibinfo{person}{S. Kundu}, \bibinfo{person}{S.~V. Lokam}, {and}
  \bibinfo{person}{N. Saurabh}.} \bibinfo{year}{2018}\natexlab{}.
\newblock \showarticletitle{{F}ourier Entropy-Influence Conjecture for Random
  Linear Threshold Functions}. In \bibinfo{booktitle}{\emph{Proceedings of the
  13th Latin American Theoretical Informatics Symposium, {LATIN} 2018}}.
  \bibinfo{pages}{275--289}.
\newblock


\bibitem[\protect\citeauthoryear{Chakraborty, Kulkarni, Lokam, and
  Saurabh}{Chakraborty et~al\mbox{.}}{2016}]%
        {CKLS16}
\bibfield{author}{\bibinfo{person}{S. Chakraborty}, \bibinfo{person}{R.
  Kulkarni}, \bibinfo{person}{S.V. Lokam}, {and} \bibinfo{person}{N. Saurabh}.}
  \bibinfo{year}{2016}\natexlab{}.
\newblock \showarticletitle{Upper bounds on {F}ourier entropy}.
\newblock \bibinfo{journal}{\emph{Theoretical Computer Science}}
  \bibinfo{volume}{654} (\bibinfo{year}{2016}), \bibinfo{pages}{92--112}.
\newblock
\newblock
\shownote{The first version appeared as a technical report TR13-052 on ECCC in
  2013.}


\bibitem[\protect\citeauthoryear{Cheraghchi, Grigorescu, Juba, Wimmer, and
  Xie}{Cheraghchi et~al\mbox{.}}{2018}]%
        {CGJWX18}
\bibfield{author}{\bibinfo{person}{M. Cheraghchi}, \bibinfo{person}{E.
  Grigorescu}, \bibinfo{person}{B. Juba}, \bibinfo{person}{K. Wimmer}, {and}
  \bibinfo{person}{N. Xie}.} \bibinfo{year}{2018}\natexlab{}.
\newblock \showarticletitle{{AC}$^0 \circ$ {MOD2} lower bounds for the
  {B}oolean Inner Product}.
\newblock \bibinfo{journal}{\emph{J. Comput. System Sci.}}
  \bibinfo{volume}{97} (\bibinfo{year}{2018}), \bibinfo{pages}{45 -- 59}.
\newblock


\bibitem[\protect\citeauthoryear{Cohen and Shinkar}{Cohen and Shinkar}{2016}]%
        {CS16}
\bibfield{author}{\bibinfo{person}{G. Cohen} {and} \bibinfo{person}{I.
  Shinkar}.} \bibinfo{year}{2016}\natexlab{}.
\newblock \showarticletitle{The Complexity of {DNF} of Parities}. In
  \bibinfo{booktitle}{\emph{Proceedings of the 7th Conference on Innovations in
  Theoretical Computer Science, {ITCS} 2016}}. \bibinfo{publisher}{ACM},
  \bibinfo{pages}{47--58}.
\newblock


\bibitem[\protect\citeauthoryear{Cover and Thomas}{Cover and Thomas}{1991}]%
        {CT91}
\bibfield{author}{\bibinfo{person}{T.~M. Cover} {and} \bibinfo{person}{J.~A.
  Thomas}.} \bibinfo{year}{1991}\natexlab{}.
\newblock \bibinfo{booktitle}{\emph{Elements of Information Theory}}.
\newblock \bibinfo{publisher}{John Wiley \& Sons}.
\newblock


\bibitem[\protect\citeauthoryear{Das, Pal, and Visavaliya}{Das
  et~al\mbox{.}}{2011}]%
        {das:randomfei}
\bibfield{author}{\bibinfo{person}{B. Das}, \bibinfo{person}{M. Pal}, {and}
  \bibinfo{person}{V. Visavaliya}.} \bibinfo{year}{2011}\natexlab{}.
\newblock \bibinfo{title}{The Entropy Influence Conjecture Revisited}.
\newblock
\newblock
\newblock
\shownote{arxiv:1110.4301.}


\bibitem[\protect\citeauthoryear{Defant, Frerick, Ortega-Cerd\'{a},
  Ouna\"{i}es, and Seip}{Defant et~al\mbox{.}}{2011}]%
        {defant:unbalancing}
\bibfield{author}{\bibinfo{person}{A. Defant}, \bibinfo{person}{L. Frerick},
  \bibinfo{person}{J. Ortega-Cerd\'{a}}, \bibinfo{person}{M. Ouna\"{i}es},
  {and} \bibinfo{person}{K. Seip}.} \bibinfo{year}{2011}\natexlab{}.
\newblock \showarticletitle{The {B}ohnenblust-{H}ille inequality for
  homogeneous polynomials is hypercontractive}.
\newblock \bibinfo{journal}{\emph{Annals of Mathematics}}
  \bibinfo{volume}{174}, \bibinfo{number}{1} (\bibinfo{year}{2011}),
  \bibinfo{pages}{485--497}.
\newblock


\bibitem[\protect\citeauthoryear{Defant, Popa, and Schwarting}{Defant
  et~al\mbox{.}}{2010}]%
        {defant1:unbalancing}
\bibfield{author}{\bibinfo{person}{A. Defant}, \bibinfo{person}{D. Popa}, {and}
  \bibinfo{person}{U. Schwarting}.} \bibinfo{year}{2010}\natexlab{}.
\newblock \showarticletitle{Coordinatewise multiple summing operators in Banach
  spaces}.
\newblock \bibinfo{journal}{\emph{Journal of Functional Analysis}}
  \bibinfo{volume}{259}, \bibinfo{number}{1} (\bibinfo{year}{2010}),
  \bibinfo{pages}{220--242}.
\newblock


\bibitem[\protect\citeauthoryear{Eldan and Gross}{Eldan and Gross}{2020}]%
        {EG20}
\bibfield{author}{\bibinfo{person}{R. Eldan} {and} \bibinfo{person}{R. Gross}.}
  \bibinfo{year}{2020}\natexlab{}.
\newblock \showarticletitle{Concentration on the Boolean Hypercube via Pathwise
  Stochastic Analysis}. In \bibinfo{booktitle}{\emph{Proceedings of the 52nd
  Annual ACM SIGACT Symposium on Theory of Computing, {STOC} 2020}}.
  \bibinfo{pages}{208--221}.
\newblock


\bibitem[\protect\citeauthoryear{Friedgut and Kalai}{Friedgut and
  Kalai}{1996}]%
        {FK96}
\bibfield{author}{\bibinfo{person}{E. Friedgut} {and} \bibinfo{person}{G.
  Kalai}.} \bibinfo{year}{1996}\natexlab{}.
\newblock \showarticletitle{Every Monotone Graph Property has a Sharp
  Threshold}.
\newblock \bibinfo{journal}{\emph{Proc. Amer. Math. Soc.}}
  \bibinfo{volume}{124}, \bibinfo{number}{10} (\bibinfo{year}{1996}),
  \bibinfo{pages}{2993--3002}.
\newblock


\bibitem[\protect\citeauthoryear{G{\"{o}}{\"{o}}s}{G{\"{o}}{\"{o}}s}{2015}]%
        {goos:unambigious}
\bibfield{author}{\bibinfo{person}{M. G{\"{o}}{\"{o}}s}.}
  \bibinfo{year}{2015}\natexlab{}.
\newblock \showarticletitle{Lower Bounds for Clique vs. Independent Set}. In
  \bibinfo{booktitle}{\emph{Proceedings of the {IEEE} 56th Annual Symposium on
  Foundations of Computer Science, {FOCS} 2015}}. \bibinfo{pages}{1066--1076}.
\newblock


\bibitem[\protect\citeauthoryear{Gopalan, Kalai, and Klivans}{Gopalan
  et~al\mbox{.}}{2008a}]%
        {gopalan:DNF}
\bibfield{author}{\bibinfo{person}{P. Gopalan}, \bibinfo{person}{A. Kalai},
  {and} \bibinfo{person}{A.~R. Klivans}.} \bibinfo{year}{2008}\natexlab{a}.
\newblock \showarticletitle{A Query Algorithm for Agnostically Learning
  {DNF}?}. In \bibinfo{booktitle}{\emph{Proceedings of the 21st Annual
  Conference on Learning Theory, {COLT} 2008}}. \bibinfo{pages}{515--516}.
\newblock


\bibitem[\protect\citeauthoryear{Gopalan, Kalai, and Klivans}{Gopalan
  et~al\mbox{.}}{2008b}]%
        {GKK08}
\bibfield{author}{\bibinfo{person}{P. Gopalan}, \bibinfo{person}{A.~T. Kalai},
  {and} \bibinfo{person}{A. Klivans}.} \bibinfo{year}{2008}\natexlab{b}.
\newblock \showarticletitle{Agnostically Learning Decision Trees}. In
  \bibinfo{booktitle}{\emph{Proceedings of the 40th annual ACM symposium on
  Theory of computing, {STOC} 2008}}. \bibinfo{pages}{527--536}.
\newblock


\bibitem[\protect\citeauthoryear{Gopalan, Meka, and Reingold}{Gopalan
  et~al\mbox{.}}{2013}]%
        {GMR13}
\bibfield{author}{\bibinfo{person}{P. Gopalan}, \bibinfo{person}{R. Meka},
  {and} \bibinfo{person}{O. Reingold}.} \bibinfo{year}{2013}\natexlab{}.
\newblock \showarticletitle{DNF sparsification and a faster deterministic
  counting algorithm}.
\newblock \bibinfo{journal}{\emph{Computational Complexity}}
  \bibinfo{volume}{22}, \bibinfo{number}{2} (\bibinfo{date}{01 Jun}
  \bibinfo{year}{2013}), \bibinfo{pages}{275--310}.
\newblock


\bibitem[\protect\citeauthoryear{Gopalan, Servedio, Tal, and Wigderson}{Gopalan
  et~al\mbox{.}}{2016}]%
        {gopalan:fouriertails}
\bibfield{author}{\bibinfo{person}{P. Gopalan}, \bibinfo{person}{R.~A.
  Servedio}, \bibinfo{person}{A. Tal}, {and} \bibinfo{person}{A. Wigderson}.}
  \bibinfo{year}{2016}\natexlab{}.
\newblock \showarticletitle{Degree and Sensitivity: Tails of Two
  Distributions}. In \bibinfo{booktitle}{\emph{Proceedings of the 31st
  Conference on Computational Complexity, {CCC} 2016}}.
  \bibinfo{pages}{13:1--13:23}.
\newblock


\bibitem[\protect\citeauthoryear{Gross}{Gross}{1975}]%
        {gross:logsobo}
\bibfield{author}{\bibinfo{person}{L. Gross}.} \bibinfo{year}{1975}\natexlab{}.
\newblock \showarticletitle{Logarithmic {S}obolev Inequalities}.
\newblock \bibinfo{journal}{\emph{American Journal of Mathematics}}
  \bibinfo{volume}{97}, \bibinfo{number}{4} (\bibinfo{year}{1975}),
  \bibinfo{pages}{1061--1083}.
\newblock


\bibitem[\protect\citeauthoryear{Hod}{Hod}{2017}]%
        {hod:constantfei}
\bibfield{author}{\bibinfo{person}{R. Hod}.} \bibinfo{year}{2017}\natexlab{}.
\newblock \bibinfo{title}{Improved Lower Bounds for the {F}ourier
  Entropy/Influence Conjecture via Lexicographic Functions}.
\newblock
\newblock
\newblock
\shownote{arxiv:1711.00762.}


\bibitem[\protect\citeauthoryear{Kahn, Kalai, and Linial}{Kahn
  et~al\mbox{.}}{1988}]%
        {KKL88}
\bibfield{author}{\bibinfo{person}{J. Kahn}, \bibinfo{person}{G. Kalai}, {and}
  \bibinfo{person}{Nathan Linial}.} \bibinfo{year}{1988}\natexlab{}.
\newblock \showarticletitle{The Influence of Variables on {B}oolean Functions}.
  In \bibinfo{booktitle}{\emph{Proceedings of the 29th Annual IEEE Symposium on
  Foundations of Computer Science, {FOCS} 1988}}. \bibinfo{pages}{68--80}.
\newblock


\bibitem[\protect\citeauthoryear{Kalai}{Kalai}{2007}]%
        {Kalai-blog}
\bibfield{author}{\bibinfo{person}{G. Kalai}.} \bibinfo{year}{2007}\natexlab{}.
\newblock \bibinfo{title}{The Entropy/Influence Conjecture}.
\newblock \bibinfo{howpublished}{Terence Tao's blog:
  \url{https://terrytao.wordpress.com/2007/08/16/gil-kalai-the-entropyinfluence-conjecture/}}.
\newblock


\bibitem[\protect\citeauthoryear{Keller, Mossel, and Schlank}{Keller
  et~al\mbox{.}}{2012}]%
        {KMS12}
\bibfield{author}{\bibinfo{person}{N. Keller}, \bibinfo{person}{E. Mossel},
  {and} \bibinfo{person}{T. Schlank}.} \bibinfo{year}{2012}\natexlab{}.
\newblock \showarticletitle{A note on the Entropy/Influence conjecture}.
\newblock \bibinfo{journal}{\emph{Discrete Mathematics}} \bibinfo{volume}{312},
  \bibinfo{number}{22} (\bibinfo{year}{2012}), \bibinfo{pages}{3364 -- 3372}.
\newblock


\bibitem[\protect\citeauthoryear{Kelman, Kindler, Lifshitz, Minzer, and
  Safra}{Kelman et~al\mbox{.}}{2020}]%
        {KKLMS19}
\bibfield{author}{\bibinfo{person}{E. Kelman}, \bibinfo{person}{G. Kindler},
  \bibinfo{person}{N. Lifshitz}, \bibinfo{person}{D. Minzer}, {and}
  \bibinfo{person}{M. Safra}.} \bibinfo{year}{2020}\natexlab{}.
\newblock \showarticletitle{Towards a Proof of the Fourier-Entropy
  Conjecture?}. In \bibinfo{booktitle}{\emph{Proceedings of the 61st {IEEE}
  Annual Symposium on Foundations of Computer Science, {FOCS} 2020}}.
  \bibinfo{publisher}{{IEEE}}, \bibinfo{pages}{247--258}.
\newblock


\bibitem[\protect\citeauthoryear{Klivans, Lee, and Wan}{Klivans
  et~al\mbox{.}}{2010}]%
        {KLW10}
\bibfield{author}{\bibinfo{person}{A. Klivans}, \bibinfo{person}{H. Lee}, {and}
  \bibinfo{person}{A. Wan}.} \bibinfo{year}{2010}\natexlab{}.
\newblock \showarticletitle{Mansour's Conjecture is True for Random {DNF}
  Formulas}. In \bibinfo{booktitle}{\emph{Proceedings of the 23rd Conference on
  Learning Theory, {COLT} 2010}}. \bibinfo{pages}{368--380}.
\newblock


\bibitem[\protect\citeauthoryear{Lee and Shraibman}{Lee and Shraibman}{2009}]%
        {leeshraibman:lowerbounds}
\bibfield{author}{\bibinfo{person}{T. Lee} {and} \bibinfo{person}{A.
  Shraibman}.} \bibinfo{year}{2009}\natexlab{}.
\newblock \showarticletitle{Lower Bounds in Communication Complexity}.
\newblock \bibinfo{journal}{\emph{Foundations and Trends in Theoretical
  Computer Science}} \bibinfo{volume}{3}, \bibinfo{number}{4}
  (\bibinfo{year}{2009}), \bibinfo{pages}{263--398}.
\newblock


\bibitem[\protect\citeauthoryear{Linial, Mansour, and Nisan}{Linial
  et~al\mbox{.}}{1993}]%
        {LMN93}
\bibfield{author}{\bibinfo{person}{N. Linial}, \bibinfo{person}{Y. Mansour},
  {and} \bibinfo{person}{N. Nisan}.} \bibinfo{year}{1993}\natexlab{}.
\newblock \showarticletitle{Constant Depth Circuits, Fourier Transform, and
  Learnability}.
\newblock \bibinfo{journal}{\emph{J. ACM}} \bibinfo{volume}{40},
  \bibinfo{number}{3} (\bibinfo{date}{July} \bibinfo{year}{1993}),
  \bibinfo{pages}{607--620}.
\newblock
\showISSN{0004-5411}


\bibitem[\protect\citeauthoryear{Littlewood}{Littlewood}{1930}]%
        {littlewood:inequality}
\bibfield{author}{\bibinfo{person}{J.~E. Littlewood}.}
  \bibinfo{year}{1930}\natexlab{}.
\newblock \showarticletitle{On bounded bilinear forms in an infinite number of
  variables}.
\newblock \bibinfo{journal}{\emph{The Quarterly Journal of Mathematics}}
  \bibinfo{volume}{1} (\bibinfo{year}{1930}), \bibinfo{pages}{164--174}.
\newblock


\bibitem[\protect\citeauthoryear{Lovett, Wu, and Zhang}{Lovett
  et~al\mbox{.}}{2020}]%
        {LWZ20}
\bibfield{author}{\bibinfo{person}{S. Lovett}, \bibinfo{person}{K. Wu}, {and}
  \bibinfo{person}{J. Zhang}.} \bibinfo{year}{2020}\natexlab{}.
\newblock \showarticletitle{Decision list compression by mild random
  restrictions}. In \bibinfo{booktitle}{\emph{Proccedings of the 52nd Annual
  {ACM} {SIGACT} Symposium on Theory of Computing, {STOC} 2020}}.
  \bibinfo{publisher}{{ACM}}, \bibinfo{pages}{247--254}.
\newblock


\bibitem[\protect\citeauthoryear{Lovett and Zhang}{Lovett and Zhang}{2019}]%
        {LZ19}
\bibfield{author}{\bibinfo{person}{S. Lovett} {and} \bibinfo{person}{J.
  Zhang}.} \bibinfo{year}{2019}\natexlab{}.
\newblock \showarticletitle{{DNF} sparsification beyond sunflowers}. In
  \bibinfo{booktitle}{\emph{Proceedings of the 51st Annual {ACM} {SIGACT}
  Symposium on Theory of Computing, {STOC} 2019}}. \bibinfo{publisher}{{ACM}},
  \bibinfo{pages}{454--460}.
\newblock


\bibitem[\protect\citeauthoryear{Mansour}{Mansour}{1994}]%
        {Mansour94}
\bibfield{author}{\bibinfo{person}{Y. Mansour}.}
  \bibinfo{year}{1994}\natexlab{}.
\newblock \showarticletitle{Learning {B}oolean Functions via the {F}ourier
  Transform}.
\newblock In \bibinfo{booktitle}{\emph{Theoretical Advances in Neural
  Computation and Learning}},
  \bibfield{editor}{\bibinfo{person}{V.~Roychowdhury}, \bibinfo{person}{K-Y
  Siu}, {and} \bibinfo{person}{A.~Orlitsky}} (Eds.).
  \bibinfo{publisher}{Springer US}, \bibinfo{pages}{391--424}.
\newblock


\bibitem[\protect\citeauthoryear{Mansour}{Mansour}{1995}]%
        {Mansour95}
\bibfield{author}{\bibinfo{person}{Y. Mansour}.}
  \bibinfo{year}{1995}\natexlab{}.
\newblock \showarticletitle{An $n^{O(\log\log n)}$ Learning Algorithm for {DNF}
  under the Uniform Distribution}.
\newblock \bibinfo{journal}{\emph{J. Comput. System Sci.}}
  \bibinfo{volume}{50}, \bibinfo{number}{3} (\bibinfo{year}{1995}),
  \bibinfo{pages}{543--550}.
\newblock


\bibitem[\protect\citeauthoryear{Montanaro}{Montanaro}{2012}]%
        {montanaro:hypercontractivity}
\bibfield{author}{\bibinfo{person}{A. Montanaro}.}
  \bibinfo{year}{2012}\natexlab{}.
\newblock \showarticletitle{Some applications of hypercontractive inequalities
  in quantum information theory}.
\newblock \bibinfo{journal}{\emph{J. Math. Phys.}} \bibinfo{volume}{53},
  \bibinfo{number}{12} (\bibinfo{year}{2012}), \bibinfo{pages}{122206}.
\newblock


\bibitem[\protect\citeauthoryear{Montanaro and Osborne}{Montanaro and
  Osborne}{2009}]%
        {montanaro:xor}
\bibfield{author}{\bibinfo{person}{A. Montanaro} {and} \bibinfo{person}{T.
  Osborne}.} \bibinfo{year}{2009}\natexlab{}.
\newblock \bibinfo{title}{On the communication complexity of {XOR} functions}.
\newblock
\newblock
\newblock
\shownote{arXiv:0909.3392.}


\bibitem[\protect\citeauthoryear{O'Donnell}{O'Donnell}{2014}]%
        {ODonnell-book}
\bibfield{author}{\bibinfo{person}{R. O'Donnell}.}
  \bibinfo{year}{2014}\natexlab{}.
\newblock \bibinfo{booktitle}{\emph{Analysis of {B}oolean Functions}}.
\newblock \bibinfo{publisher}{Cambridge University Press}.
\newblock


\bibitem[\protect\citeauthoryear{O'Donnell and Tan}{O'Donnell and Tan}{2013}]%
        {OT13}
\bibfield{author}{\bibinfo{person}{R. O'Donnell} {and} \bibinfo{person}{L-Y.
  Tan}.} \bibinfo{year}{2013}\natexlab{}.
\newblock \showarticletitle{A Composition Theorem for the {F}ourier
  Entropy-Influence Conjecture}. In \bibinfo{booktitle}{\emph{Proceedings of
  the 40th International Colloquium on Automata, Languages and Programming,
  {ICALP} 2013}}. \bibinfo{pages}{780--791}.
\newblock


\bibitem[\protect\citeauthoryear{O'Donnell, Wright, Zhao, Sun, and
  Tan}{O'Donnell et~al\mbox{.}}{2014}]%
        {O'donnell:paritykill}
\bibfield{author}{\bibinfo{person}{R. O'Donnell}, \bibinfo{person}{J. Wright},
  \bibinfo{person}{Y. Zhao}, \bibinfo{person}{X. Sun}, {and}
  \bibinfo{person}{L{-}Y. Tan}.} \bibinfo{year}{2014}\natexlab{}.
\newblock \showarticletitle{A Composition Theorem for Parity Kill Number}. In
  \bibinfo{booktitle}{\emph{Proceedings of the 29th Conference on Computational
  Complexity, {CCC} 2014}}. \bibinfo{pages}{144--154}.
\newblock


\bibitem[\protect\citeauthoryear{O'Donnell, Wright, and Zhou}{O'Donnell
  et~al\mbox{.}}{2011}]%
        {OWZ11}
\bibfield{author}{\bibinfo{person}{R. O'Donnell}, \bibinfo{person}{J. Wright},
  {and} \bibinfo{person}{Y. Zhou}.} \bibinfo{year}{2011}\natexlab{}.
\newblock \showarticletitle{The {F}ourier Entropy-Influence Conjecture for
  certain classes of {B}oolean Functions}. In
  \bibinfo{booktitle}{\emph{Proceedings of the 38th International Colloquium on
  Automata, Languages and Programming, {ICALP} 2011}}.
  \bibinfo{pages}{330--341}.
\newblock


\bibitem[\protect\citeauthoryear{O'Donnell and Zhao}{O'Donnell and
  Zhao}{2016}]%
        {Odonnell:decoupling}
\bibfield{author}{\bibinfo{person}{R. O'Donnell} {and} \bibinfo{person}{Y.
  Zhao}.} \bibinfo{year}{2016}\natexlab{}.
\newblock \showarticletitle{Polynomial Bounds for Decoupling, with
  Applications}. In \bibinfo{booktitle}{\emph{Proceedings of the 31st
  Conference on Computational Complexity, {CCC} 2016}}.
  \bibinfo{pages}{24:1--24:18}.
\newblock


\bibitem[\protect\citeauthoryear{Pellegrino and
  Seoane-Sep\'{u}lveda}{Pellegrino and Seoane-Sep\'{u}lveda}{2012}]%
        {pellegrino:newupperbound}
\bibfield{author}{\bibinfo{person}{D. Pellegrino} {and} \bibinfo{person}{J.
  Seoane-Sep\'{u}lveda}.} \bibinfo{year}{2012}\natexlab{}.
\newblock \showarticletitle{New upper bounds for the constants in the
  {B}ohnenblust-{H}ille inequality}.
\newblock \bibinfo{journal}{\emph{J. Math. Anal. Appl.}} \bibinfo{volume}{386},
  \bibinfo{number}{1} (\bibinfo{year}{2012}), \bibinfo{pages}{300--307}.
\newblock


\bibitem[\protect\citeauthoryear{Pellegrino and Teixeira}{Pellegrino and
  Teixeira}{2018}]%
        {pellergrino:sharpconstant}
\bibfield{author}{\bibinfo{person}{D. Pellegrino} {and} \bibinfo{person}{E.~V.
  Teixeira}.} \bibinfo{year}{2018}\natexlab{}.
\newblock \showarticletitle{Towards sharp {B}ohnenblust-{H}ille constants}.
\newblock \bibinfo{journal}{\emph{Communications in Contemporary Mathematics}}
  \bibinfo{volume}{20}, \bibinfo{number}{3} (\bibinfo{year}{2018}),
  \bibinfo{pages}{1750029}.
\newblock


\bibitem[\protect\citeauthoryear{Servedio and Viola}{Servedio and
  Viola}{2012}]%
        {SV12}
\bibfield{author}{\bibinfo{person}{R.~A. Servedio} {and} \bibinfo{person}{E.
  Viola}.} \bibinfo{year}{2012}\natexlab{}.
\newblock \bibinfo{title}{On a special case of rigidity}.
\newblock \bibinfo{howpublished}{Manuscript:
  \url{http://eccc.hpi-web.de/report/2012/144}}.
\newblock


\bibitem[\protect\citeauthoryear{{Shalev}}{{Shalev}}{2018}]%
        {shalev2018:minentropy}
\bibfield{author}{\bibinfo{person}{G. {Shalev}}.}
  \bibinfo{year}{2018}\natexlab{}.
\newblock \bibinfo{title}{On the {F}ourier {E}ntropy {I}nfluence Conjecture for
  Extremal Classes}.  (\bibinfo{year}{2018}).
\newblock
\newblock
\shownote{arxiv:1806.03646.}


\bibitem[\protect\citeauthoryear{Shaltiel and Viola}{Shaltiel and
  Viola}{2010}]%
        {SV10}
\bibfield{author}{\bibinfo{person}{R. Shaltiel} {and} \bibinfo{person}{E.
  Viola}.} \bibinfo{year}{2010}\natexlab{}.
\newblock \showarticletitle{Hardness Amplification Proofs Require Majority}.
\newblock \bibinfo{journal}{\emph{SIAM J. Comput.}} \bibinfo{volume}{39},
  \bibinfo{number}{7} (\bibinfo{year}{2010}), \bibinfo{pages}{3122--3154}.
\newblock


\bibitem[\protect\citeauthoryear{Shannon}{Shannon}{1948}]%
        {Shannon48}
\bibfield{author}{\bibinfo{person}{C.~E. Shannon}.}
  \bibinfo{year}{1948}\natexlab{}.
\newblock \showarticletitle{A Mathematical Theory of Communication}.
\newblock \bibinfo{journal}{\emph{Bell System Technical Journal}}
  \bibinfo{volume}{27}, \bibinfo{number}{3} (\bibinfo{year}{1948}),
  \bibinfo{pages}{379--423}.
\newblock


\bibitem[\protect\citeauthoryear{Sherstov}{Sherstov}{2018}]%
        {Sherstov18}
\bibfield{author}{\bibinfo{person}{A. Sherstov}.}
  \bibinfo{year}{2018}\natexlab{}.
\newblock \showarticletitle{Algorithmic polynomials}. In
  \bibinfo{booktitle}{\emph{Proceedings of the 50th Annual {ACM} {SIGACT}
  Symposium on Theory of Computing, {STOC} 2018}}. \bibinfo{pages}{311--324}.
\newblock


\bibitem[\protect\citeauthoryear{Sherstov}{Sherstov}{2011}]%
        {sherstov:pmm}
\bibfield{author}{\bibinfo{person}{A.~A. Sherstov}.}
  \bibinfo{year}{2011}\natexlab{}.
\newblock \showarticletitle{The Pattern Matrix Method}.
\newblock \bibinfo{journal}{\emph{{SIAM} J. Comput.}} \bibinfo{volume}{40},
  \bibinfo{number}{6} (\bibinfo{year}{2011}), \bibinfo{pages}{1969--2000}.
\newblock


\bibitem[\protect\citeauthoryear{Shi}{Shi}{2000}]%
        {shi:approxdeg}
\bibfield{author}{\bibinfo{person}{Y. Shi}.} \bibinfo{year}{2000}\natexlab{}.
\newblock \showarticletitle{Lower bounds of quantum black-box complexity and
  degree of approximating polynomials by influence of {B}oolean variables}.
\newblock \bibinfo{journal}{\emph{Inform. Process. Lett.}}
  \bibinfo{volume}{75}, \bibinfo{number}{1--2} (\bibinfo{year}{2000}),
  \bibinfo{pages}{79--83}.
\newblock


\bibitem[\protect\citeauthoryear{Tal}{Tal}{2014}]%
        {Tal14}
\bibfield{author}{\bibinfo{person}{A. Tal}.} \bibinfo{year}{2014}\natexlab{}.
\newblock \showarticletitle{Shrinkage of De Morgan Formulae by Spectral
  Techniques}. In \bibinfo{booktitle}{\emph{Proceedings of the 55th {IEEE}
  Annual Symposium on Foundations of Computer Science, {FOCS} 2014}}.
  \bibinfo{pages}{551--560}.
\newblock


\bibitem[\protect\citeauthoryear{Tal}{Tal}{2017}]%
        {tal:fourierconcentration}
\bibfield{author}{\bibinfo{person}{A. Tal}.} \bibinfo{year}{2017}\natexlab{}.
\newblock \showarticletitle{Tight Bounds on the {F}ourier Spectrum of
  {AC$^0$}}. In \bibinfo{booktitle}{\emph{Proceedings of the 32nd Computational
  Complexity Conference, {CCC} 2017}}. \bibinfo{pages}{15:1--15:31}.
\newblock


\bibitem[\protect\citeauthoryear{Traxler}{Traxler}{2009}]%
        {Traxler09}
\bibfield{author}{\bibinfo{person}{P. Traxler}.}
  \bibinfo{year}{2009}\natexlab{}.
\newblock \showarticletitle{Variable Influences in Conjunctive Normal Forms}.
  In \bibinfo{booktitle}{\emph{Proceedings of 12th International Conference on
  Theory and Applications of Satisfiability Testing, {SAT} 2009}}.
  \bibinfo{publisher}{Springer Berlin Heidelberg}, \bibinfo{pages}{101--113}.
\newblock


\bibitem[\protect\citeauthoryear{Tsang, Wong, Xie, and Zhang}{Tsang
  et~al\mbox{.}}{2013}]%
        {tsang:logrank}
\bibfield{author}{\bibinfo{person}{H.~Y. Tsang}, \bibinfo{person}{C.~H. Wong},
  \bibinfo{person}{N. Xie}, {and} \bibinfo{person}{S. Zhang}.}
  \bibinfo{year}{2013}\natexlab{}.
\newblock \showarticletitle{Fourier Sparsity, Spectral Norm, and the Log-Rank
  Conjecture}. In \bibinfo{booktitle}{\emph{Proceedings of the 54th Annual
  {IEEE} Symposium on Foundations of Computer Science, {FOCS} 2013}}.
  \bibinfo{pages}{658--667}.
\newblock


\bibitem[\protect\citeauthoryear{Wan, Wright, and Wu}{Wan
  et~al\mbox{.}}{2014}]%
        {www14}
\bibfield{author}{\bibinfo{person}{A. Wan}, \bibinfo{person}{J. Wright}, {and}
  \bibinfo{person}{C. Wu}.} \bibinfo{year}{2014}\natexlab{}.
\newblock \showarticletitle{Decision Trees, Protocols and the Entropy-Influence
  Conjecture}. In \bibinfo{booktitle}{\emph{Proceedings of the 5th Innovations
  in Theoretical Computer Science, {ITCS} 2014}}. \bibinfo{pages}{67--80}.
\newblock


\bibitem[\protect\citeauthoryear{Wolf}{Wolf}{2008}]%
        {wolf:fouriersurvey}
\bibfield{author}{\bibinfo{person}{R.~{de} Wolf}.}
  \bibinfo{year}{2008}\natexlab{}.
\newblock \showarticletitle{A Brief Introduction to {F}ourier Analysis on the
  {B}oolean Cube}.
\newblock \bibinfo{journal}{\emph{Theory of Computing}} (\bibinfo{year}{2008}).
\newblock
\newblock
\shownote{ToC Library, Graduate Surveys 1.}


\bibitem[\protect\citeauthoryear{Zhang}{Zhang}{2014}]%
        {zhang:quantumxor}
\bibfield{author}{\bibinfo{person}{S. Zhang}.} \bibinfo{year}{2014}\natexlab{}.
\newblock \showarticletitle{Efficient quantum protocols for {XOR} functions}.
  In \bibinfo{booktitle}{\emph{Proceedings of the 25th Annual {ACM-SIAM}
  Symposium on Discrete Algorithms, {SODA} 2014}}. \bibinfo{pages}{1878--1885}.
\newblock


\bibitem[\protect\citeauthoryear{Zhang and Shi}{Zhang and Shi}{2009}]%
        {zhangshi:quantumxor}
\bibfield{author}{\bibinfo{person}{Z. Zhang} {and} \bibinfo{person}{Y. Shi}.}
  \bibinfo{year}{2009}\natexlab{}.
\newblock \showarticletitle{Communication complexities of symmetric {XOR}
  functions}.
\newblock \bibinfo{journal}{\emph{Quantum Information {\&} Computation}}
  \bibinfo{volume}{9}, \bibinfo{number}{3} (\bibinfo{year}{2009}),
  \bibinfo{pages}{255--263}.
\newblock


\end{thebibliography}


\end{document}